\colorlet{darkgreen}{green!80!black}
\newcommand{\dom}{\mathop{\mathrm{dom}}}
\newcommand{\ran}{\mathop{\mathrm{ran}}}
\newcommand{\initial}{\iota}
\newcommand{\RS}{{\mathcal U}}
\DeclareMathOperator{\dist}{\mathsf{dist}}
\newcommand{\sphereStr}[3]{\mathcal{S}_{#1, #2}(#3)}
\newcommand{\neighborhoodStr}[3]{\mathcal{N}_{#1, #2}(#3)}
\DeclareMathOperator{\bigo}{\mathcal{O}}
\newcommand{\neighborhood}[2]{\mathcal{N}_{#1}(#2)}
\newcommand{\rank}{\mathsf{rank}}
\newcommand{\eval}{\mathsf{val}}
\newcommand{\DAG}{\mathsf{dag}}
\DeclareMathOperator{\numberPaths}{\#\mathsf{paths}}
\DeclareMathOperator{\edgeWeight}{\mathsf{weight}}
\newcommand{\parent}{\wp}
\newcommand{\nextlink}{\mathsf{nl}}
\newcommand{\bigR}{\rho}
\newcommand{\expan}{\mathcal{E}}
\newcommand{\lex}{\mathsf{lex}}
\title{FO-Query Enumeration over SLP-Compressed Structures of Bounded Degree}
\author{Markus Lohrey}{University of Siegen, H\"olderlinstr.~3, D-57076 Siegen, Germany}{lohrey@eti.uni-siegen.de}{https://orcid.org/0000-0002-4680-7198}{}
\author{Sebastian Maneth}{University of Bremen, Germany}{maneth@uni-bremen.de}{https://orcid.org/0000-0001-8667-5436}{}
\author{Markus L. Schmid}{Humboldt-Universit\"at zu Berlin, Unter den Linden 6, D-10099 Berlin, Germany}{MLSchmid@MLSchmid.de}{https://orcid.org/0000-0001-5137-1504}{Supported by the German Research Foundation (Deutsche Forschungsgemeinschaft, DFG) – project number 522576760 (gefördert durch die Deutsche Forschungsgemeinschaft (DFG) – Projektnummer 522576760).}
\authorrunning{M.~Lohrey, S.~Maneth, M.~Schmid}
\keywords{Enumeration algorithms, FO-logic, query evaluation over compressed data} 
\begin{document}

\maketitle

\begin{abstract}
Enumerating the result set  of a first-order query over a relational structure of bounded degree can be done with linear preprocessing and constant delay. In this work, we extend this result towards the compressed perspective where the structure is given in a potentially highly compressed form by a straight-line program (SLP). Our main result is an algorithm that enumerates the result set of a first-order query over a structure
of bounded degree that is represented by an SLP satisfying the so-called apex condition.
For a fixed formula, the enumeration algorithm has constant delay and needs a preprocessing time that is linear in the size of the SLP.
\end{abstract}

\section{Introduction}

\emph{First order model checking} (i.e., deciding whether an FO-sentence $\phi$ holds in a relational structure $\RS$, $\RS \models \phi$ for short) is a classical problem in computer science and its complexity has been thoroughly investigated; see, e.g.,~\cite{FlumGrohe2006,KreutzerDawar2009,Libkin2004}. In database theory, it is of importance due to its practical relevance for evaluating SQL-like query languages in relational databases. FO model checking is {\sf PSPACE}-complete when $\phi$ and $\RS$ are both part of the input, but it becomes fixed-parameter tractable (even \emph{linear} fixed-parameter tractable) with respect to the parameter $|\phi|$ when $\RS$ is restricted to a suitable class of relational structures (see the paragraph on related work below for details),
while for the class of all structures it is not fixed-parameter tractable modulo certain complexity assumptions. This is relevant, since in practical scenarios queries are often small, especially in comparison to the data (represented by the relational structure) that is often huge. 

FO model checking (i.e., checking a Boolean query that returns either \emph{true} or \emph{false}) reduces to practical query evaluation tasks and is therefore suitable to transfer lower bounds.
However, from a practical point of view, \emph{FO-query enumeration} is more relevant. It is the problem of enumerating without repetitions for an 
FO-formula $\phi(x_1, \ldots, x_k)$ with free variables $x_1, \ldots, x_k$ the \emph{result set} $\phi(\RS)$ of all tuples $(a_1, \ldots, a_k) \in \RS^k$ such that $\RS \models \phi(a_1, \ldots, a_k)$. 
Since $\phi(\RS)$ can be rather large (exponential in $k$ in general), the total time for enumeration is not a good measure for the performance
of an enumeration algorithm. More realistic measures are the \emph{preprocessing time} (used for performing some preprocessing on the input) and the \emph{delay}, which is the maximum time needed between the production of two consecutive output tuples from $\phi(\RS)$. In \emph{data complexity} (where we consider $|\phi|$ to be constant), the best we can hope for is linear preprocessing time (i.e., $f(|\phi|) \cdot |\RS|$ for a computable function $f$) and constant delay (i.e., the delay is $f(|\phi|)$ for some computable function $f$ and therefore does not depend on $|\RS|$). Over the last two decades, many of the linear time (with respect to data complexity) FO model checking algorithms for various subclasses
of structures have been extended to  FO-query enumeration algorithms with 
 linear (or quasi-linear) time preprocessing and constant delay
(see the paragraph on related work below for the relevant literature).

In this work, we extend FO-query enumeration towards the compressed perspective, i.e., we wish to enumerate the result set $\phi(\RS)$ in the scenario where $\RS$ is given in a potentially highly compressed form, and we want to work directly on this compressed form without decompressing it. In this regard, we contribute to a recent research effort in database theory that is concerned with \emph{query evaluation over compressed data} \cite{LohreySchmid2024, MunozRiveros2023,SchmidSchweikardt2021, SchmidSchweikardt2022}. Let us now explain this framework in more detail.

\subparagraph{Query evaluation over compressed data.} Query evaluation over compressed data combines the classical task of query evaluation with the paradigm of \emph{algorithmics on compressed data} (ACD), i.e., solving computational tasks directly on compressed data objects without prior decompression. ACD is an established algorithmic paradigm and it works very well in the context of \emph{grammar-based compression} with so-called \emph{straight-line programs} (SLPs). Such SLPs use grammar-like formalisms in order to specify how a data object is constructed  from small building blocks.
For example, if the  data object  is a finite string $w$, then an SLP is just a context-free grammar for the language $\{w\}$. For instance, the SLP $S \to AA$, $A \to B B C$, $B \to ba$, $C \to cb$ (where $S, A, B,C$ are nonterminals and $a,b,c$ are terminals) produces the string $babacbbabacb$. While SLPs achieve exponential compression in the best case, there are also fast heuristic compressors that yield decent compression in practical scenarios. Moreover, SLPs are very well suited for ACD; see, e.g., \cite{Loh12survey}.

An important point is that the ACD perspective can lead to dramatic running time improvements: if the same problem can be solved in linear time both in the uncompressed and in the compressed setting (i.e., linear in the compressed size), then in the case that the input can be compressed from size $n$ to size $\bigo(\log n)$ (which is possible with SLPs in the best case), the algorithm for the compressed 
data has a running time of $\bigo(\log n)$ (compared to $\bigo(n)$ for the algorithm working on uncompressed data).
An important problem that shows this  behavior is for instance pattern matching in compressed texts~\cite{GanardiGawrychowski2022}. 

SLPs are most famous for strings (see~\cite{BannaiEtAl2021,CaselEtAl2021,GanardiGawrychowski2022,GanardiJL21} for some recent publications and~\cite{Loh12survey} for a survey).
What makes them particularly appealing for query evaluation is that their general approach of compressing data objects by grammars extends from strings to more complex structures like trees~\cite{GasconLMRS20,Lohrey15,LohreyMR18,LohreyEtAl2012} and hypergraphs (i.e., general relational structures)~\cite{LengauerWanke1988,Lohrey2012,ManethPeternek2018,ManethPeternek2020}, while, at the same time, their good ACD-properties are maintained to some extend. This is due to the fact that context-free string grammars extend to 
context-free tree grammars~\cite{DBLP:journals/mst/Rounds70} (see also~\cite{DBLP:reference/hfl/GecsegS97})
and to hyperedge replacement grammars~\cite{DBLP:journals/mst/BauderonC87,DBLP:conf/stacs/HabelK87} (see also~\cite{Eng97}). 

In this work, we are concerned with FO-query enumeration for relational structures that are compressed by SLPs based on hyperedge replacement grammars (also known as hierarchical graph definitions or SL HR grammars; see the paragraph on related work for references).
An example of such an SLP is shown in Figure~\ref{fig:SLP}. It consists of productions (shown in Figure~\ref{fig:SLP} on the left) that replace nonterminals ($S$, $A$, and $B$ in
Figure~\ref{fig:SLP}) by their unique right-hand sides.
Each right-hand side is a relational structure (a directed graph in Figure~\ref{fig:SLP}) together with occurrences of earlier defined
nonterminals and certain distinguished contact nodes (labelled by $1$ and $2$ in Figure~\ref{fig:SLP}). 
 In this way, every nonterminal $X \in \{S,A,B\}$ produces a relational structure $\eval(X)$ (the value of $X$) with distinguished contact nodes.
 These structures are shown in Figure~\ref{fig:SLP} on the right.
 When replacing for instance the occurrence of $B$ in the right hand side of $S$ by $\eval(B)$, one identifies for every $i \in \{1,2\}$
 the $i$-labelled contact node in $\eval(B)$ with the node that is connected by the $i$-labelled dotted edge with the 
 $B$-occurrence in the right-hand side of $S$ (these are the nodes labelled with $u$ and $v$ in Figure~\ref{fig:SLP}).

\subparagraph{Main result.} It is known that FO-query enumeration for degree-bounded structures can be done with linear preprocessing and constant delay~\cite{DurandGrandjean2007,KazSeg11}. Moreover, FO model checking for \emph{SLP-compressed} degree-bounded structures can be done efficiently~\cite{Lohrey2012}. We combine these two results and therefore extend FO-query enumeration for bounded-degree structures towards the SLP-compressed setting, or, in other words, we extend FO model checking of SLP-compressed structures to the query-enumeration perspective.
A preliminary version of our main result is stated below. It restricts to so-called apex SLPs. Roughly speaking, the apex property
demands that each graph replacing a nonterminal must not contain other
nonterminals at the ``contact nodes'' (the nodes the nonterminal was incident with).
 The apex property is well known from graph language theory~\cite{Eng97,EngelfrietHL94,DBLP:journals/iandc/EngelfrietR90}
 and has been used for SLPs in \cite{Lohrey2012,MHR94}. 

\begin{theorem}\label{prelimMainResultTheorem}
Let $d \in \mathbb{N}$ be a constant. For an FO-formula $\phi(x_1, \ldots, x_k)$ and 
a  relational structure $\RS$, whose Gaifman graph has degree at most $d$, and that is given in compressed form by an apex SLP $D$, 
one can enumerate the result set $\phi(\RS)$ after preprocessing time $f(d,|\phi|) \cdot |D|$ and delay $f(d,|\phi|)$ for some computable function $f$.
\end{theorem}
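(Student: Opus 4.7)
The plan is to combine the locality-based enumeration strategy for FO over bounded-degree structures with a bottom-up traversal of the apex SLP that aggregates local-type information.

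First, I would rewrite $\phi(x_1,\ldots,x_k)$ into a locality-friendly normal form. Since the Gaifman graph of $\RS$ has degree at most $d$, each ball $\neighborhood{r}{a}$ has at most $1+d+\cdots+d^r$ elements, so the number of isomorphism types of $r$-neighborhoods (``$r$-types'') is bounded by a function of $d$, $r$ and the signature. By Gaifman's (or Hanf's) locality theorem, $\phi$ is equivalent to a Boolean combination of (i) $r$-local formulas $\lambda_i(x_i)$ each fixing an $r$-type $\tau_i$ for $x_i$, together with constraints on the pairwise distances $\dist(x_i,x_j)$, and (ii) closed counting sentences asking whether certain $r$-types occur at least a given number of times. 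This reduces FO-query enumeration to: for each admissible profile of local types and distances, enumerate the tuples realizing that profile.

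Second, I would extend the FO model-checking algorithm of Lohrey~\cite{Lohrey2012} to compute, for each nonterminal $X$ of the SLP $D$, a compact summary of $\eval(X)$. The summary contains, for every $r$-type $\tau$, the number of \emph{internal} elements of $\eval(X)$ whose type already equals $\tau$; and for every contact node $v$ of $X$, the partial $r$-neighborhood of $v$ inside $\eval(X)$ (together with the partial data of the interior nodes within distance $r$ of $v$). The apex condition is crucial here: because no nonterminal in the right-hand side of $X$ touches a contact node of $X$, the partial neighborhoods of contact nodes are not altered by further descent into $D$, and the summaries remain of size bounded by a function of $d$, $r$ and $|\phi|$. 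Composing two summaries along a production of $D$ then takes constant time, yielding total preprocessing $f(d,|\phi|)\cdot|D|$.

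Third, using these summaries, one evaluates the closed counting sentences and precomputes the finite set of valid type-and-distance profiles for $\phi$. For each profile, I would enumerate matching tuples by navigating $D$: at each nonterminal $X$ the precomputed per-type counts in $\eval(X)$ reveal the sub-occurrence containing the next witness of type $\tau_i$, allowing me to locate it in time $\bigo(\log|D|)$ or, with suitable pointer structures built during preprocessing, in constant time. Combining this navigation with the constant-delay bounded-degree enumeration technique of Durand--Grandjean~\cite{DurandGrandjean2007} and Kazana--Segoufin~\cite{KazSeg11}---iterating through tuples of typed elements at the prescribed pairwise distances via ``next'' pointers among elements of each $r$-type---delivers the claimed delay $f(d,|\phi|)$. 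Repetitions are avoided by fixing a canonical order on elements induced by the derivation tree of $D$.

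The main obstacle is the second step: designing the per-nonterminal summary so that it is simultaneously of bounded size and sufficient both to reconstruct every element's $r$-type and to support the element-locating navigation of the enumeration phase. The delicate point is that the $r$-type of a contact node $v$ depends jointly on the subtree rooted at $X$ and on the context above it; without the apex property, neighborhoods of $v$ could split across arbitrarily many nonterminal boundaries and summaries would blow up. Apex confines the ``context'' contribution to $v$'s immediate surroundings in the parent production, so the ``from below'' part (stored in the summary of $X$) and the ``from above'' part (finalized at the parent) glue together cleanly in constant time, and the whole construction goes through.
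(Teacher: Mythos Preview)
Your high-level plan---Gaifman locality followed by a bottom-up pass over the SLP---matches the paper's, and your per-nonterminal ``summaries'' are close in spirit to what the paper actually computes. But two steps in your outline hide the real work, and your account of the apex condition is not quite right.

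First, instead of abstract counts-plus-boundary data, the paper computes for each nonterminal $A$ an explicit \emph{$(2\bigR+1)$-expansion} $\expan(A)$: the substructure of $\eval(A)$ induced by all nodes at distance at most $2\bigR+1$ from the internal nodes of $A$'s right-hand side (where $\bigR$ depends only on $d,|\phi|$). Its size is $|\RS_A|\cdot f(d,|\phi|)$ by degree boundedness, so all expansions fit in $|D|\cdot f(d,|\phi|)$. The key lemma is a bijection between the $\mathcal{B}$-nodes of $\eval(D)$ and the pairs $(p,a)$ where $p$ is an initial-to-$A$ path in $\DAG(D)$ and $a$ is a \emph{valid} $\mathcal{B}$-node in $\expan(A)$ (valid meaning its $\bigR$-neighborhood avoids the expansion's boundary). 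This bijection is what makes precise your informal ``internal elements whose type is already determined''; your proposal gestures at it but does not isolate the uniqueness statement you need to avoid double-counting.

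Second, and more seriously, your navigation step glosses over the main technical difficulty. ``$\bigo(\log|D|)$'' is not constant, and ``with suitable pointer structures \ldots\ in constant time'' is exactly the hard part. An element of $\eval(D)$ is a pair (DAG-path, local node); the path may have length $\Theta(|D|)$, so it cannot be written out---the paper outputs only its lex-number, a single $\bigo(|D|)$-bit integer, and must maintain that number incrementally. Enumerating all initial paths ending at $\mathcal{B}_i$-useful nonterminals with constant delay requires storing the current path in a ``min-max-contracted'' form (maximal runs of leftmost/rightmost edges collapsed to single triples), precomputed next-link data structures on auxiliary forests, and edge weights so that the lex-number updates by a bounded number of additions per step. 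You also omit the admissibility check $\dist_{\eval(D)}(t_{b_i,\sigma_i},t_{b_j,\sigma_j})>2r+1$ entirely; the paper handles it by showing that (thanks to apex) distance $\zeta$ in $\eval(D)$ implies prefix distance at most $\zeta$ between the corresponding DAG paths, so one first compares the stored paths after bounded shortening and then does a local BFS in an expansion.

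This last point is the actual role of apex: it guarantees the path-locality that makes the bijection, the path enumeration, and the distance check work. It is \emph{not} what bounds the summary size---that follows from bounded degree alone, and your sentence ``the partial neighborhoods of contact nodes are not altered by further descent into $D$'' is false for radius $r\geq 2$: the $r$-ball around a contact node of $X$ in $\eval(X)$ does reach into the references (just not in the first step). Without apex, a single $r$-neighborhood in $\eval(D)$ can straddle DAG paths with unbounded prefix distance, and the whole enumeration scheme collapses; indeed, for general SLPs even FO model checking is hard for every level of the polynomial hierarchy.
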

Note that the preprocessing is linear in the compressed size $|D|$ instead of the data size $|\RS|$.

We prove this result by extending the FO-query enumeration algorithm for uncompressed structures from 
\cite{KazSeg11} to the SLP-compressed setting. For this we have to overcome considerable technical barriers.
The algorithm of \cite{KazSeg11} exploits the Gaifman locality of FO-queries. In the preprocessing phase the 
algorithm computes for each element $a \in \RS$ the $r$-sphere around $a$ for a radius $r$ that only depends on the formula $\phi$. 
This leads to a preprocessing time of $|\RS| \cdot f(d,\phi)$. For an SLP-compressed structure we cannot afford to iterate
over all elements of the structure. Inspired by a technique from \cite{Lohrey2012}, we will expand every nonterminal of the SLP $D$ locally up to a size that depends only on $\phi$ and $d$. This leads to at most $|D|$ substructures of size $f(d,|\phi|)$. Our enumeration algorithm then 
enumerates certain paths in the derivation tree defined by $D$ and for each such path ending in a nonterminal $A$ it searches in the precomputed
local expansion of $A$ for nodes with a certain sphere type.

 \subparagraph{Related work.} In the uncompressed setting, there are several classes of relational structures for which FO-query enumeration can be solved with linear (or quasi-linear) preprocessing and constant delay, e.g., relational structures with bounded degree~\cite{BerkholzEtAl2018, DurandGrandjean2007,KazSeg11}, low degree~\cite{DurandEtAl2022}, (locally) bounded expansion~\cite{KazanaSegoufin2013, SegoufinVigny2017}, and structures that are tree-like~\cite{Bagan2006,KazanaSegoufin2013b} or nowhere dense~\cite{SchweikardtEtAl2022}; see~\cite{BerkholzEtAl2020, Segoufin2015} for surveys. Moreover, for conjunctive queries with certain acyclicity conditions, linear preprocessing and constant delay is also possible for the class of all relational structures~\cite{BaganEtAl2007, BerkholzEtAl2020}. The algorithm from~\cite{KazSeg11} is the most relevant one for our work. 

Concerning other work on query enumeration on SLP-compressed data, we mention~\cite{MunozRiveros2023,SchmidSchweikardt2021, SchmidSchweikardt2022}, which deals with constant delay enumeration for (a fragment of) MSO-queries on SLP-compressed strings, and~\cite{LohreySchmid2024}, which presents a linear preprocessing and constant delay algorithm for MSO-queries on SLP-compressed unranked forests.

SLPs for (hyper)graphs were introduced as hierarchical graph descriptions by Lengauer and Wanke~\cite{LeWa88} 
and have been further studied, e.g., in~\cite{BrenguierGS12,FaranK18,FaranK19,Len89,LeWa92,MHR94,MHSR98,MRHR97}.
Model checking problems for SLP-compressed graphs have been studied in \cite{Lohrey2012} for FO and MSO,
\cite{GoLo11} for fixpoint logics, and \cite{AlurBEGRY05,AlYa01} for the temporal logics LTL and CTL in the context of 
hierarchical state machines (which are a particular type of graph SLPs). 
Particularly relevant for this paper is a result from \cite{Lohrey2012} stating
that for every level $\Sigma^{\mathsf{P}}_i$ of the polynomial time hierarchy there is a fixed FO-formula for which the 
model checking problem for SLP-compressed input graphs is $\Sigma^{\mathsf{P}}_i$-complete. In contrast,
for apex SLPs the model checking problem for every fixed FO-formula belongs
to {\sf NL} (nondeterministic logspace) \cite{Lohrey2012}. This (and the fact that FO-query enumeration reduces to FO model checking)
partly explains the restriction to apex SLPs in Theorem~\ref{prelimMainResultTheorem}.

Compression of graphs via graph SLPs has been considered in~\cite{DBLP:conf/dcc/Peshkin07} following
a ``Sequitur scheme''~\cite{DBLP:journals/jair/Nevill-ManningW97} 
and in~\cite{ManethPeternek2020} following 
a ``Repair scheme''~\cite{DBLP:journals/pieee/LarssonM00} (see also~\cite{DBLP:journals/is/LohreyMM13});
note that both compressors produce graph SLPs that may \emph{not} be apex.

Another recent concept in database theory that is concerned with compressed representations of relational data and query evaluation are  \emph{factorized databases} (see~\cite{KimelfeldEtAl2023arxiv,Olteanu2016,OlteanuSchleich2016,OlteanuSchleich2016b,OlteanuZavodny2015}). Intuitively speaking, in a factorized representation of a relational structure each relation $R$ is represented as an expression over the relational operators union and product that evaluates to $R$. However, SLPs for relational structures and factorized representations cover completely different aspects of redundancy: A factorized representation is always at least as large as its active domain (i.e., all elements that occur in some tuple), while an SLP for a relational structure can be of logarithmic size in the size of the universe of the structure. On the other hand, small factorized representations do not seem to necessarily translate into small SLPs.

\section{General Notations}\label{sec:generalNotations}

Let $\mathbb{N} = \{0, 1, 2, \ldots\}$. For every $k \in \mathbb{N}$, we set $[k] = \{1, 2, \ldots, k\}$. For a finite alphabet $A$, we denote by $A^*$ the set of all finite strings over $A$ including the empty string $\varepsilon$.
For a partial $f : A \to B$ let $\dom(f) = \{a \in A : f(a) \neq \bot\} \subseteq A$ (where $\bot \notin B$
stands for undefined) and $\ran(f) = \{ f(a) : a \in \dom(f) \} \subseteq B$.
For functions $f : A \to B$ and $g : B \to C$ we define
the composition $g \circ f : A \to C$ by $(g \circ f)(a) = g(f(a))$ 
for all $a \in A$.

A \emph{partial $k$-tuple} over a set $A$ is a partial function $t : [k] \to A$.
If $\dom(t) = [k]$, then we also say that $t$ is a \emph{complete} $k$-tuple or just a $k$-tuple; in this case we also write $t$
in the conventional form $(t(1), t(2), \ldots, t(k))$. 
Two partial $k$-tuples $t_1$ and $t_2$ are \emph{disjoint} if $\dom(t_1) \cap \dom(t_2) = \emptyset$.
In this case, their union $t_1 \sqcup t_2$ is the partial $k$-tuple defined by $(t_1 \sqcup t_2)(j) = t_i(j)$ if
$j \in \dom(t_i)$ for $i \in \{1,2\}$ and $(t_1 \sqcup t_2)(j) = \bot$ if $j \notin \dom(t_1) \cup \dom(t_2)$.

\subsection{Directed acyclic graphs}
\label{sec-dag}

An \emph{ordered dag} (directed acyclic graph) is a triple $G = (V, \gamma, \initial)$, where $V$ is a finite set of nodes, $\gamma : V \to V^*$ is the child-function, the relation $E := \{ (u,v) : u,v\in V, v \text{ occurs in } \gamma(u) \}$ is acyclic, and $\initial \in V$ is the \emph{initial node}. 
The size of $G$ is $|G|= \sum_{v \in V}(1 + |\gamma(v)|)$. A node $v \in V$ with $|\gamma(v)|=0$ is called a \emph{leaf}. 

A path in $G$ (from $v_0$ to $v_n$) is a sequence $p = v_0 i_1 v_1 i_2 \cdots v_{n-1} i_n v_n \in V (\mathbb{N}V)^*$
such that $1 \le i_k \le |\gamma(v_{k-1})|$ for all $k \in [n]$. The length of this path $p$ is $n$ (we may have $n=0$ in which case
$p = v_0$) and we also call $p$ a \emph{$v_0$-to-$v_n$ path}
or $v_0$-path if the end point $v_n$ is not important.
An $\iota$-path is also called an \emph{initial path}. We extend this notation to subsets of $V$ in the obvious way,
e.g., for $A, B \subseteq V$ and $v \in V$ we talk about $A$-to-$v$ paths, $A$-to-$B$ paths, $A$-to-leaf paths (where ``leaf'' refers to the set of all leaves of the dag), initial-to-leaf paths, etc.  For a $v_0$-to-$v_1$ path $p = p' v_1$ and a $v_1$-to-$v_2$ path $q = v_1 q'$ we define
the $v_0$-to-$v_2$ path $pq = p' v_1 q'$ (note that if we just concatenate $p$ and $q$ as words, then we have to replace the double occurrence
$v_1 v_1$ by $v_1$ to obtain $pq$).
We say that the path $p$ is a \emph{prefix} of the path $q$ if there is a path $r$ such that $q = qr$.

Since we consider ordered dags, there is a natural lexicographical ordering on all $v$-paths (i.e., all paths that start in the same node $v$). More precisely, for two different $v$-paths $p$ and $q$ we write 
$p < q$ if either $p$ is a proper prefix of $q$ or we can write $p$ and $q$ as $p = r i p'$, $q = r j q'$ for paths $r, p', q'$ and $i, j \in \mathbb{N}$
with $i <j$.

\subsection{Relational structures and first order logic}\label{sec-FOL}

A \emph{signature} $\mathcal R$ is a finite set consisting of relation
symbols $r_i$ ($i \in I$) and constant symbols $c_j$ ($j \in J$).  
Each relation symbol $r_i$ has an associated arity $\alpha_i$. 
A \emph{structure over the signature} $\mathcal R$ is a tuple 
$\RS = (U, (R_i)_{i \in I}, (u_j)_{j \in J})$, where
$U$ is a finite non-empty set (the universe of $\RS$),
$R_i \subseteq U^{\alpha_i}$ is the relation associated with 
the relation symbol $r_i$, and $u_j \in U$ is the constant associated
with the constant symbol $c_j$. Note that we restrict to finite structures.
If the structure $\RS$ is clear from the context, we will 
identify $R_i$ ($u_j$, respectively) with the relation symbol $r_i$ 
(the constant symbol $c_j$, respectively). 
Sometimes, when we want to refer to the universe $U$, 
we will refer to $\RS$ itself. For instance, we write $a \in \mathcal U$ 
for $ua \in U$, or $f : [n] \to \mathcal U$ for a function 
$f : [n] \to U$.
The size $|\RS|$ of $\RS$ is 
$|U| + \sum_{i \in I} \alpha_i \cdot |R_i|$.
As usual, a constant $a \in \mathcal U$ may be replaced by 
the unary relation $\{a\}$.
Thus, in the following, we will only consider
signatures 
without constant symbols, except when we explicitly
introduce constants.  
Let $\mathcal R = \{ r_i : i \in I\}$
be such a signature (we call it a {\em relational signature}) and let $\RS = (U, (R_i)_{i \in I})$
be a structure over $\mathcal R$ (we call it a {\em relational structure}).
For relational structures $\RS_1$ and $\RS_2$ over the signature $\mathcal{R}$, we write $\RS_1 \simeq \RS_2$ to denote that $\RS_1$ and $\RS_2$ are isomorphic. A substructure of $\RS = (U, (R_i)_{i \in I})$ is a structure $(V, (S_i)_{i \in I})$ such that
$V \subseteq U$ and $S_i \subseteq R_i$ for all $i \in I$. The substructure of $\RS$ \emph{induced} by $V \subseteq U$ is
$(V, (R_i \cap V^{\alpha_i})_{i \in I})$.
We define the undirected graph $\mathcal{G}(\RS) = (U, E)$ (the so-called Gaifman graph of $\RS$),
where $E$ contains an edge $(a,b)$ if and only
if there is a binary relation $R_i$ ($i \in I$) and a tuple $(a_1, \ldots, a_{\alpha_i}) \in R_i$ with 
$\{a,b\} \subseteq \{a_1, \ldots, a_{\alpha_i}\}$.
The degree of $\RS$ is the maximal degree of a node in $\mathcal{G}(\RS)$. 
If $\RS$ has degree at most $d$, we also say that $\RS$ is a \emph{degree-$d$ bounded structures}.

We use \emph{first-order logic} (FO) over finite relational structures; see~\cite{EbbF91} for a detailed introduction and Appendix~\ref{sec:generalDefinitionsAppendix} for some standard notations. For an FO-formula $\psi(x_1, \ldots, x_k)$ over the signature $\mathcal R$ with free variables $x_1, \ldots, x_k$ and a relational structure $\RS = (U, (R_i)_{i \in I})$ over $\mathcal{R}$ and $a_1, \ldots, a_k \in U$, we write $\RS \models \psi(a_1, \ldots, a_k)$ if $\psi$ is true in $\RS$ when the variable $x_i$ is set to $a_i$ for all $i \in [k]$.
Hence, an FO-formula $\psi(x_1, \ldots, x_k)$  can be interpreted as an \emph{FO-query} that, for a given structure $\RS$, defines a \emph{result set} \[\psi(\RS) = \{(a_1, \ldots, a_k) \in \RS^k : \RS \models \psi(a_1, \ldots, a_k)\}.\] 
The \emph{quantifier rank} $\mathsf{qr}(\psi)$  of an FO-formula $\psi$ is the maximal nesting depth of quantifiers in $\psi$.

In the rest of the paper, we assume that the signature only contains relation symbols of arity at most two. It is folklore that FO model checking and FO-query enumeration over arbitrary signatures can be reduced to this case; see Appendix~\ref{sec:generalDefinitionsAppendix} for a possible construction. This construction can be carried out in linear time (in the size of the formula and the structure) and it increase the degree of the 
structure as well as the quantifier rank of the formula by at most one.

\subsection{Distances, spheres and neighborhoods}

Let us fix a relational signature $\mathcal{R}$ (containing only relation symbols of arity at most two)
and let $\RS = (U, (R_i)_{i \in I})$ be a structure over this signature.
We say that $\RS$ is \emph{connected}, if its Gaifman graph $\mathcal{G}(\RS)$ is connected.
The distance between elements $a,b \in U$ in the graph $\mathcal{G}(\RS)$ is denoted by $\dist_{\RS}(a, b)$
(it can be $\infty$).
For subsets $A, B \subseteq U$ we define $\dist_{\RS}(A, B) = \min \{ \dist_{\RS}(a, b) : a \in A, b \in B\}$. 
For two partial tuples (of any arity) $t, t'$ over $U$ let
$\dist_{\RS}(t, t') = \dist_{\RS}( \ran(t), \ran(t'))$.

Fix a constant $d \geq 2$. We will only consider degree-$d$ bounded structures in the following. Let us fix such a structure $\RS$ (over the 
relational signature $\mathcal{R}$). Take additional constant symbols $c_1, c_2, \ldots$ called \emph{sphere center constants}.
For an $r \geq 1$ and a partial $k$-tuple $t : [k] \to \RS$ we define the $r$-sphere
$\sphereStr{\RS}{r}{t} = \{b \in \RS : \dist_{\RS}(t, b) \leq r \}$.
The \emph{$r$-neighborhood} $\neighborhoodStr{\RS}{r}{t}$ of $t$ is obtained by taking the substructure of $\RS$  
induced by $\sphereStr{\RS}{r}{t}$ and then adding every node $t(i)$ ($i \in \dom(t)$) as the interpretation of the 
 sphere center constant $c_i$. Hence, it is a structure over the 
signature $\mathcal{R} \cup \{ c_i : i \in \dom(t)\}$. 
The $r$-neighborhood  of a $k$-tuple has at most $k \cdot \sum^{r}_{i = 0} d^{i} \le  k \cdot d^{r+1}$ elements (here, the inequality holds
since we assume $d \ge 2$).

We use the above definitions also for a single element $a \in \RS$ in place of a tuple $t$; formally
$a$ is identified with the $1$-tuple $t$ such that $t(1)=a$.
We are mainly interested in $r$-spheres and $r$-neighborhoods of complete $k$-tuples, but 
the corresponding notions for partial $k$-tuples will be convenient later.
We also drop the subscript $\RS$ from the above notations if this does not cause any confusion. 

A \emph{$(k,r)$-neighborhood type} is an isomorphism type for the $r$-neighborhood of a complete $k$-tuple in a 
 degree-$d$ bounded structure. More precisely, we can define a
 $(k,r)$-neighborhood type as a degree-$d$ bounded structure $\mathcal{B}$ over the signature 
 $\mathcal{R} \cup \{c_1, \ldots, c_k\}$ such that
 \begin{itemize}
 \item the universe of $\mathcal{B}$ is of the form $[\ell]$ for some $\ell \leq k \cdot d^{r+1}$ and
 \item for every $j \in [\ell]$ there is $i \in [k]$ such that $\dist_{\mathcal{B}}(a_i,j) \leq r$, where, for every $i \in [k]$, $a_i$ is the interpretation of the sphere center constant $c_i$.
  \end{itemize} 
  From each isomorphism class of $(k,r)$-neighborhood types we select a unique representative
  and write $\mathcal{T}_{k,r}$ for the set of all selected representatives. Then, for every $k$-tuple
   $\bar{a} \in \RS^k$ there is a unique $\mathcal{B} \in \mathcal{T}_{k,r}$ such that 
  $\neighborhoodStr{\RS}{r}{\bar{a}} \simeq \mathcal{B}$; we call it the $(k,r)$-neighborhood type of 
  $\bar{a}$ and say that $\bar{a}$ is a \emph{$\mathcal{B}$-tuple}. In case $k=1$ we speak of 
  \emph{$\mathcal{B}$-nodes} instead of $\mathcal{B}$-tuples, write $\mathcal{T}_{r}$ for $\mathcal{T}_{1,r}$
  and call its elements $r$-neighborhood types instead of 
   $(1,r)$-neighborhood types.
    
  For every $(k,r)$-neighborhood type $\mathcal{B} \in \mathcal{T}_{k,r}$ there is 
  an FO-formula $\psi^{\mathcal{B}}(x_1, \ldots, x_k)$ such that for every  
  degree-$d$ bounded structure $\RS$ and every $k$-tuple $\bar{a} \in \RS^k$ we have
  $\RS \models \psi^{\mathcal{B}}(\bar{a})$ if and only if $\bar{a}$ is a $\mathcal{B}$-tuple.

\subsection{Enumeration algorithms and the machine model}

 \emph{FO-query enumeration} is the following problem: Given an FO-formula $\phi(x_1, \ldots, x_k)$ over some signature $\mathcal{R}$ and a relational structure $\RS$ over $\mathcal{R}$, we want to enumerate all tuples from $\phi(\RS)$ in some order and without repetitions, i.e., we want to produce a sequence $(t_1, \ldots, t_n, t_{n+1})$ with $\{t_1, \ldots, t_n\} = \phi(\RS)$, $|\phi(\RS)| = n$ and $t_{n+1} = \mathsf{EOE}$ is the \emph{end-of-enumeration} marker. An algorithm for FO-query enumeration starts with a \emph{preprocessing phase} in which no output is produced, followed by an \emph{enumeration phase}, where the elements $t_1, t_2, \ldots, t_n, t_{n + 1}$ are produced one after the other. The running time of the preprocessing phase is called the \emph{preprocessing time}, and the \emph{delay} measures the maximal time between the computation of two consecutive outputs $t_i$ and $t_{i+1}$ for every $i \in [n]$. 

Usually, one restricts the input structure $\RS$ to some subclass $\mathsf{C}_d$ of relational structures that is defined by some parameter $d$ (in this paper, $\mathsf{C}_d$ is the class of degree-$d$ bounded
structures).  
We say that an algorithm for FO-query enumeration for $\mathsf{C}_d$ has \emph{linear preprocessing} and \emph{constant delay}, if its preprocessing time is $\bigo(|\RS| \cdot f(d,|\phi|))$ and its delay is $\bigo(f(d,|\phi|))$ for some computable function $f$. This complexity measure where the query $\phi$ is considered to be constant and the running time is only measured in terms of the data, i.e., the size of the relational structure, is also called \emph{data complexity}. In data complexity, linear preprocessing and constant delay is considered to be optimal (since we assume that the relational structure has to be read at least once). As mentioned in the introduction, FO-query enumeration can be solved with linear preprocessing and constant delay for several classes $\mathsf{C}_d$.

For proving upper bounds in data complexity, we often have to argue that certain computational tasks can be performed in time $f(\cdot)$ (or $|\RS| \cdot f(\cdot))$ for some function $f$. In these cases, without explicitly mentioning this in the remainder, $f$ will always be a computable function (actually, $f$ will be elementary, i.e., bounded by an exponent tower of fixed height). The arguments for $f$ will only depend on the parameter $d$ and the formula size $|\phi|$.

The special feature of this work is that we consider FO-query enumeration in the setting where the relational structure $\RS$ is not given explicitly, but in a potentially highly compressed form, and our enumeration algorithm must handle this compressed representation rather than decompressing it. Then the structure size $|\RS|$ will be replaced by the size of the compressed representation of $\RS$ in all time bounds.
 This aspect shall be explained in detail in Section~\ref{S hier}.

We use the standard RAM model with uniform cost measure as our model of computation. We will make some further restrictions for the register length tailored to the compressed setting in Section~\ref{sec:SLPRAM}.

\section{FO-Enumeration over Uncompressed Degree-Bounded Structures}\label{sec:uncompressed}

In this section, we fix a relational signature $\mathcal{R} = \{R_i : i \in I\}$, constants $d \geq 2$ and $\nu$, a degree-$d$ bounded 
structure $\RS = (U, (R_i)_{i \in I})$ over the signature $\mathcal{R}$, and an FO-formula $\phi(x_1, \ldots, x_k)$ over the signature $\mathcal{R}$ with $\mathsf{qr}(\phi) = \nu$.
Our goal  is to enumerate the set $\phi(\RS)$ after a linear time preprocessing in constant delay. Before we consider the case
where the structure $\RS$ is given in a compressed form, we will first outline the enumeration algorithm from \cite{KazSeg11} for 
 the case where $\RS$ is given explicitly (with some modifications). In Section~\ref{sec-enum-compressed} we will extend this algorithm to the compressed 
 setting.

By a standard application of the Gaifman locality of FO (see Appendix~\ref{sec-gaifman}), we first reduce the enumeration 
of $\phi(\RS)$ to the enumeration of all $\mathcal{B}$-tuples from $\RS^k$ for a fixed
 $\mathcal{B} \in \mathcal{T}_{k,r}$ (for some $r \leq 7^\nu$).
Recall that $\mathcal{B}$ contains at most $k \cdot d^{r+1}$ elements, and this upper bound only depends on $d$ and the formula $\phi$.
To simplify notation, we assume that in $\mathcal{B}$ the sphere center constant $c_i$ is interpreted by $i \in [k]$.
In particular, the sphere center constants are interpreted by different elements. This is not a real restriction; see
Appendix~\ref{appendix-pairwise-diff}.

In order to enumerate all $\mathcal{B}$-tuples, we will factorize $\mathcal{B}$ into its connected components. In order to accomplish this, we need the following definitions.
We first define the larger radius 
\begin{equation} \label{def-bigR}
\bigR = 2rk - r + k -1.
\end{equation}
Every $\bigR$-neighborhood of an element $a \in \RS$ 
has at most $d^{\bigR+1}$ elements. 
Recall that a $\bigR$-neighborhood type is a degree-$d$ bounded structure over the signature $\mathcal{R}_1 := \mathcal{R} \cup \{ c_1 \}$ with a universe $[\ell]$ for some $\ell \leq d^{\bigR+1}$. W.l.o.g. we assume that the sphere center constant $c_1$ is interpreted by the element $1$ in 
a $\bigR$-neighborhood type. Hence, every
$j \in [\ell]$ has distance at most $\bigR$ from $1$. Moreover, the $\bigR$-neighborhood types in $\mathcal{T}_{\bigR}$ are pairwise non-isomorphic.

Assume that our fixed $(k,r)$-neighborhood type $\mathcal{B}$ splits into $m \geq 1$ connected components $\mathcal{C}^{\mathcal{B}}_1,
\ldots, \mathcal{C}^{\mathcal{B}}_m$. Thus, every $\mathcal{C}^{\mathcal{B}}_i$ is a connected induced substructure of $\mathcal{B}$, every node of $\mathcal{B}$ belongs to exactly one $\mathcal{C}^{\mathcal{B}}_i$,
and there is no edge in the undirected graph $\mathcal{G}(\mathcal{B})$
between two different components $\mathcal{C}^{\mathcal{B}}_i$. Let $D_i = \mathcal{C}^{\mathcal{B}}_i \cap [k]$ be the set
of sphere centers that belong to the connected component $\mathcal{C}^{\mathcal{B}}_i$. We must have $D_i \neq \emptyset$.
Let $n_i = \min(D_i)$ (we could also fix any other 
element from $D_i$). Every node in $\mathcal{C}^{\mathcal{B}}_i$ has distance at most $r$ from some $j \in D_i$. Since 
$\mathcal{C}^{\mathcal{B}}_i$ is connected, it follows that
every node in $\mathcal{C}^{\mathcal{B}}_i$ has distance at most $r + (k-1) (2r+1) = 2rk - r + k -1 = \bigR$
from $n_i$ (this is in fact true for every $j \in D_i$ instead of $n_i$).
A \emph{consistent factorization} of our $(k,r)$-neighborhood type $\mathcal{B}$ is a 
 tuple 
 \[\Lambda = (\mathcal{B}_1, \sigma_1, \mathcal{B}_2, \sigma_2, \ldots, \mathcal{B}_m, \sigma_m)\] 
 with the following properties for all $i \in [m]$:
 \begin{itemize}
 \item $\mathcal{B}_i \in \mathcal{T}_{\bigR}$ and $\sigma_i : [k] \to \mathcal{B}_i$ is a partial $k$-tuple with $\dom(\sigma_i) = D_i$ 
and $\sigma_i(n_i) = 1$ (so, $n_i$ is mapped by $\sigma_i$ to the center of $\mathcal{B}_i$) and
\item $\neighborhoodStr{\mathcal{B}_i}{r}{\sigma_i} \simeq \mathcal{C}^{\mathcal{B}}_i$.
 \end{itemize}
 Clearly, the number of possible consistent factorizations of $\mathcal{B}$ is bounded by $f(d,|\phi|)$.
 
 For a $\bigR$-neighborhood type $\mathcal{B}'$, a $\mathcal{B}'$-node $a \in \RS$ and 
 a partial $k$-tuple $\sigma : [k] \to \mathcal{B}'$ we 
 moreover fix an isomorphism $\pi_a : \mathcal{B}' \to \neighborhoodStr{\RS}{\bigR}{a}$ (this isomorphism is not necessarily unique)
and define the partial $k$-tuple $t_{a,\sigma} : [k] \to \RS$ by
 $t_{a,\sigma}(j) = \pi_a(\sigma(j))$ for all $j \in \dom(\sigma)$.
Note that, by definition, $\pi_a(1) = a$. 
  
 Take a consistent factorization $\Lambda = (\mathcal{B}_1, \sigma_1, \ldots, \mathcal{B}_m, \sigma_m)$ 
 of $\mathcal{B}$. 
 We say that an $m$-tuple $(b_1, \ldots, b_m) \in \RS^m$ is \emph{admissible} for $\Lambda$ if the following conditions hold:
 \begin{itemize}
 \item for all $i \in [m]$, $b_i$ is a $\mathcal{B}_i$-node, and
 \item for all $i, j  \in [m]$ with $i \neq j$ we have
 \begin{equation} \label{disjointness}
\dist_{\RS}(t_{b_i,\sigma_i}, t_{b_{j},\sigma_{j}}) > 2r+1.
\end{equation}
 \end{itemize}
 Finally, with an $m$-tuple $\bar{b} = (b_1, \ldots, b_m)$ we associate the $k$-tuple 
 \[\Lambda(\bar{b}) = t_{b_1,\sigma_1} \sqcup t_{b_2,\sigma_2} \sqcup \cdots
 \sqcup t_{b_m,\sigma_m}.\] 
 Note that $t_{b_i,\sigma_i}(n_i) = \pi_{b_i}(\sigma_i(n_i)) = \pi_{b_i}(1) = b_i$.

 We claim that in order to enumerate all $\mathcal{B}$-tuples $\bar{a} \in \RS^k$, it suffices to enumerate for every consistent factorization
 $\Lambda = (\mathcal{B}_1, \sigma_1,  \ldots, \mathcal{B}_m, \sigma_m)$ of $\mathcal{B}$ the set
 of all $m$-tuples $\bar{b} \in \RS^m$ that are admissible for $\Lambda$. If we can do this,
 then we replace every output tuple $\bar{b} \in \RS^m$ by $\Lambda(\bar{b}) \in \RS^k$. Note that 
 $\Lambda(\bar{b})$ can be easily computed in time $\bigo(k)$ from the tuple $\bar{b}$, the isomorphisms $\pi_{b_i}$, and the partial
 $k$-tuples $\sigma_i : [k] \to \mathcal{B}_i$. The correctness of this algorithm follows from the following two lemmas (with full proofs in 
 Appendix~\ref{lemma-2-3}):

 \begin{lemma}\label{correctnessLemma1}
 If $\Lambda$ is a consistent factorization of $\mathcal{B}$
 and $\bar{b} \in \RS^m$ is admissible for $\Lambda$ then $\Lambda(\bar{b}) \in \RS^k$
is a $\mathcal{B}$-tuple.
 \end{lemma}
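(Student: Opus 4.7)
The plan is to construct the desired isomorphism $\neighborhoodStr{\RS}{r}{\Lambda(\bar b)} \simeq \mathcal{B}$ by gluing together the restrictions of the local isomorphisms $\pi_{b_i}\colon \mathcal{B}_i \to \neighborhoodStr{\RS}{\bigR}{b_i}$ to the $r$-balls around the $\sigma_i$'s.

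First I would use the admissibility inequality (\ref{disjointness}) to split $\neighborhoodStr{\RS}{r}{\Lambda(\bar b)}$ into disjoint pieces, one per index $i \in [m]$. The bound $\dist_\RS(t_{b_i,\sigma_i}, t_{b_j,\sigma_j}) > 2r+1$ for $i \neq j$ says simultaneously that the $r$-spheres around $t_{b_i,\sigma_i}$ and $t_{b_j,\sigma_j}$ are vertex-disjoint (otherwise the distance would be at most $2r$) and that no Gaifman edge joins one to the other (otherwise the distance would be at most $2r+1$). Consequently $\neighborhoodStr{\RS}{r}{\Lambda(\bar b)}$ is the disjoint union of the pieces $\neighborhoodStr{\RS}{r}{t_{b_i,\sigma_i}}$, with the sphere-center constants $c_j$ for $j \in D_i$ assigned to piece $i$.

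Second, for each $i$ I would show that $\pi_{b_i}$ restricts to an isomorphism from $\neighborhoodStr{\mathcal{B}_i}{r}{\sigma_i}$ onto $\neighborhoodStr{\RS}{r}{t_{b_i,\sigma_i}}$; composing with the given isomorphism $\neighborhoodStr{\mathcal{B}_i}{r}{\sigma_i} \simeq \mathcal{C}^{\mathcal{B}}_i$ then yields the $i$-th component. The crucial calculation tightens the distance bound from the preamble: each $\sigma_i(j)$ with $j \in D_i$ already lies within distance $(|D_i|-1)(2r+1) \le (k-1)(2r+1) = \bigR - r$ of $n_i$ inside $\mathcal{C}^{\mathcal{B}}_i$ (and hence inside $\mathcal{B}_i$), leaving an ``$r$-margin'' before hitting the boundary of $\mathcal{B}_i$. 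With this slack, the forward inclusion is immediate since $\pi_{b_i}$ preserves edges; for the converse, any $y \in \RS$ with $\dist_\RS(y, t_{b_i,\sigma_i}(j)) \le r$ satisfies $\dist_\RS(y, b_i) \le r + (\bigR - r) = \bigR$, so $y \in \ran(\pi_{b_i})$, and every shortest witness path in $\RS$ stays inside the $\bigR$-ball of $b_i$ (same estimate applied to intermediate nodes) and therefore pulls back under $\pi_{b_i}^{-1}$ to a path of the same length in $\mathcal{B}_i$.

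Finally, I would check that the sphere-center constants line up: in $\mathcal{B}$ the constant $c_j$ is interpreted by $j \in \mathcal{C}^{\mathcal{B}}_i$ for the unique $i$ with $j \in D_i$, whereas in $\neighborhoodStr{\RS}{r}{\Lambda(\bar b)}$ it is interpreted by $\Lambda(\bar b)(j) = t_{b_i,\sigma_i}(j) = \pi_{b_i}(\sigma_i(j))$, and the isomorphism $\mathcal{C}^{\mathcal{B}}_i \simeq \neighborhoodStr{\mathcal{B}_i}{r}{\sigma_i}$ sends $j$ to $\sigma_i(j)$ by the definition of consistent factorization. The main obstacle is the second step: \emph{a priori} the $r$-ball of $t_{b_i,\sigma_i}$ in $\RS$ could be strictly smaller than the image of the $r$-ball of $\sigma_i$ (paths in $\mathcal{B}_i$ leaving the $\bigR$-ball after mapping) or strictly larger (shortcut paths through the rest of $\RS$), so the whole argument rests on the slack built into the definition $\bigR = 2rk - r + k - 1$, which is exactly chosen so that the sphere centers sit at depth at most $\bigR - r$ inside each $\mathcal{B}_i$.
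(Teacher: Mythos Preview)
Your proposal is correct and follows essentially the same route as the paper: split $\neighborhoodStr{\RS}{r}{\Lambda(\bar b)}$ into the pieces $\neighborhoodStr{\RS}{r}{t_{b_i,\sigma_i}}$ using the distance bound~\eqref{disjointness}, identify each piece with $\mathcal{C}^{\mathcal{B}}_i$ via the restriction of $\pi_{b_i}$, and assemble. The paper's proof simply asserts the key step $\neighborhoodStr{\RS}{r}{t_{b_i,\sigma_i}} \simeq \mathcal{C}^{\mathcal{B}}_i$ in one line, whereas you spell out the slack calculation $(k-1)(2r+1)=\bigR-r$ that guarantees the $r$-ball around each $t_{b_i,\sigma_i}(j)$ stays inside the domain of $\pi_{b_i}$; this is exactly the content hidden behind the paper's terse ``Since $\neighborhoodStr{\mathcal{B}_i}{r}{\sigma_i} \simeq \mathcal{C}^{\mathcal{B}}_i$ we obtain $\neighborhoodStr{\RS}{r}{t_{b_i,\sigma_i}} \simeq \mathcal{C}^{\mathcal{B}}_i$.''
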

  
 \begin{lemma}\label{correctnessLemma2}
 If $\bar{a} \in \RS^k$ is a $\mathcal{B}$-tuple then there are a unique consistent factorization $\Lambda$ of 
$\mathcal{B}$ and a unique $m$-tuple $\bar{b} \in \RS^m$ that is admissible for $\Lambda$
 and such that $\bar{a}  = \Lambda(\bar{b})$.
\end{lemma}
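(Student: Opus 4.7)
The plan is to construct $\Lambda$ and $\bar b$ directly from an isomorphism $\pi : \mathcal{B} \to \neighborhoodStr{\RS}{r}{\bar a}$, which exists because $\bar a$ is a $\mathcal{B}$-tuple and satisfies $\pi(i) = a_i$ for every $i \in [k]$ by the convention that the sphere center constant $c_i$ is interpreted as $i$ in $\mathcal{B}$. First I would set $b_i = a_{n_i}$, take $\mathcal{B}_i \in \mathcal{T}_\bigR$ to be the representative of the $\bigR$-neighborhood type of $b_i$, and define $\sigma_i : D_i \to \mathcal{B}_i$ by $\sigma_i(j) = \pi_{b_i}^{-1}(a_j)$, using the already fixed isomorphism $\pi_{b_i} : \mathcal{B}_i \to \neighborhoodStr{\RS}{\bigR}{b_i}$. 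With these definitions it then suffices to check that $\Lambda = (\mathcal{B}_1, \sigma_1, \ldots, \mathcal{B}_m, \sigma_m)$ is a consistent factorization, that $\bar b = (b_1,\ldots,b_m)$ is admissible, and that $\Lambda(\bar b) = \bar a$.

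For consistency, the only substantive point is that $a_j \in \neighborhoodStr{\RS}{\bigR}{b_i}$ for every $j \in D_i$ (so that $\sigma_i(j)$ is well defined) and that $\neighborhoodStr{\mathcal{B}_i}{r}{\sigma_i} \simeq \mathcal{C}^{\mathcal{B}}_i$; the equations $\sigma_i(n_i) = 1$ and $\Lambda(\bar b) = \bar a$ drop out of $\pi_{b_i}(1) = b_i = a_{n_i}$ and $\pi_{b_i}(\sigma_i(j)) = a_j$. For the $\bigR$-bound I would argue inside $\mathcal{B}$: since every node of $\mathcal{C}^{\mathcal{B}}_i$ is within distance $r$ of some element of $D_i$, the auxiliary graph on $D_i$ with edges $\{p,q\}$ whenever $\dist_{\mathcal{B}}(p,q) \le 2r+1$ is connected, so $\dist_{\mathcal{B}}(n_i, j) \le (|D_i|-1)(2r+1) \le (k-1)(2r+1)$ for every $j \in D_i$, and any further node of $\mathcal{C}^{\mathcal{B}}_i$ lies within $r$ of some such $j$, for a total distance of at most $r + (k-1)(2r+1) = \bigR$. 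Transporting through $\pi$ yields the same bound in $\RS$ starting from $b_i$. The isomorphism $\neighborhoodStr{\mathcal{B}_i}{r}{\sigma_i} \simeq \mathcal{C}^{\mathcal{B}}_i$ then follows because $\pi_{b_i}$ carries the left-hand side to the substructure of $\RS$ induced by the $r$-sphere around $\{a_j : j \in D_i\}$, which is exactly $\pi(\mathcal{C}^{\mathcal{B}}_i)$.

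The hard part will be the distance separation~\eqref{disjointness} required for admissibility. I would argue it contrapositively: if $j \in D_i$, $j' \in D_{i'}$ with $i \ne i'$ and $\dist_{\RS}(a_j, a_{j'}) \le 2r+1$, then along a shortest path $a_j = v_0,\ldots,v_l = a_{j'}$ with $l \le 2r+1$, each vertex satisfies $s \le r$ or $l-s \le r$, so $v_s \in \sphereStr{\RS}{r}{\bar a}$. Hence the entire path lies in $\neighborhoodStr{\RS}{r}{\bar a}$ and connects $a_j$ to $a_{j'}$ there; transporting back through $\pi^{-1}$ gives a path from $j$ to $j'$ inside $\mathcal{B}$, contradicting that $j$ and $j'$ lie in different connected components.

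For uniqueness, the component data $(m, \mathcal{C}^{\mathcal{B}}_i, D_i, n_i)_{i \in [m]}$ depends only on $\mathcal{B}$ and is therefore shared by every consistent factorization of $\mathcal{B}$. Given any admissible $\bar b' = (b'_1,\ldots,b'_m)$ with $\Lambda'(\bar b') = \bar a$, the chain $a_{n_i} = \Lambda'(\bar b')(n_i) = \pi_{b'_i}(\sigma'_i(n_i)) = \pi_{b'_i}(1) = b'_i$ forces $b'_i = b_i$; then $\mathcal{B}'_i$ is pinned down as the unique representative in $\mathcal{T}_\bigR$ of the $\bigR$-neighborhood type of $b_i$, namely $\mathcal{B}_i$; and $\sigma'_i(j)$ is forced to equal $\pi_{b_i}^{-1}(a_j) = \sigma_i(j)$ by the requirement $\Lambda'(\bar b')(j) = a_j$. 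Beyond the distance argument of the previous paragraph, everything here is essentially unfolding of the definitions using the fixed isomorphisms $\pi_{b_i}$.
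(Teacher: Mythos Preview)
Your proposal is correct and follows essentially the same approach as the paper: set $b_i = a_{n_i}$, take $\mathcal{B}_i$ to be the $\bigR$-neighborhood type of $b_i$, define $\sigma_i(j) = \pi_{b_i}^{-1}(a_j)$, and then verify consistency, admissibility, and uniqueness by unwinding the definitions. You spell out a few points in more detail than the paper does---in particular the $(k-1)(2r+1)$ bound (which the paper established in the text preceding the lemma) and the contrapositive argument that a path of length $\le 2r+1$ between $a_j$ and $a_{j'}$ stays inside $\sphereStr{\RS}{r}{\bar a}$---but the structure of the argument is the same.
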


\subsection{Enumeration algorithm for uncompressed structures}\label{sec:enumAlgo}

Let us fix a $(k,r)$-neighborhood type $\mathcal{B}$ and a consistent factorization $\Lambda = (\mathcal{B}_1, \sigma_1, \ldots, \mathcal{B}_m, \sigma_m)$ of $\mathcal{B}$.
By Lemmas~\ref{correctnessLemma1} and \ref{correctnessLemma2}, it suffices
 to enumerate (with linear preprocessing and constant delay) the set of all $\bar{a} \in \RS^m$ that are admissible for $\Lambda$.
 In the preprocessing phase we compute
\begin{itemize}
\item for every $i \in [m]$ a list $L_i$ containing all $\mathcal{B}_i$-nodes from $\RS$ and
\item for every $a \in L_i$ an isomorphism $\pi_a : \mathcal{B}_i \to \neighborhood{\bigR}{a}$.
\end{itemize}
It is straightforward to compute these data in time $|\RS| \cdot f(d, |\phi|)$
(in Section~\ref{sec-enum-compressed}, where we deal with the more general SLP-compressed case, this is more subtle).
We classify each list $L_i$ as being \emph{short} if $|L_i| \leq 
k \cdot d^{2\bigR+2r+2}$ and as being \emph{long} otherwise.
Without loss of generality, we assume that, for some $0 \leq q \leq m$ the lists $L_{1}, \ldots, L_{q}$ are short and the lists $L_{q + 1}, \ldots, L_{m}$ are long (note that this includes the cases that all lists are short or all lists are long).

Our enumeration procedure maintains a stack of the form $a_1 a_2 \cdots a_\ell$ with $0 \le \ell \leq m$ and $a_i \in L_i$ for all $i \in [\ell]$. Note that if $\ell = 0$, then we have the empty stack $\varepsilon$. Such a stack is called \emph{admissible} for $\Lambda$ (or just admissible), 
if for all $i, i'  \in [\ell]$ with $i \neq i'$ and all $j \in \dom(\sigma_i)$ and $j' \in \dom(\sigma_{i'})$ we have
$\dist_{\RS}(\pi_{a_i}(\sigma_i(j)), \pi_{a_{i'}}(\sigma_{i'}(j'))) > 2r+1$.
Note that the empty stack as well as every stack $a_1$ with $a_1 \in L_{1}$ are admissible.

The general structure of our enumeration algorithm is a  depth-first-left-to-right (DFLR) traversal over all admissible stacks $s$. For this, it calls the recursive procedure {\sf extend} (shown as Algorithm~\ref{mainEnumAlgoNonCompressed}) with the initial admissible stack $s = \varepsilon$. Note that whenever {\sf extend}$(s)$ is called, $|s| < m$ holds. It is clear that the call {\sf extend}$(\varepsilon)$ triggers the enumeration of all admissible stacks $a_1 a_2 \cdots a_m$. In an implementation one would store $s$ as a global variable.

\begin{algorithm}[t]
\SetKwComment{Comment}{(}{)}
\SetKwInput{KwGlobal}{variables}
\SetKw{KwOutput}{output}
\SetKw{KwReturn}{return}
\SetKwFor{For}{for all}{do}

$\ell := |s|+1$\;
\For{$a \in L_\ell$ such that $sa$ is admissible}{
\textbf{if} $\ell = m$ \textbf{then} \KwOutput $sa$ \textbf{else} {\sf extend}$(sa)$}
\KwReturn
\caption{$\mathsf{extend}(s)$  \label{mainEnumAlgoNonCompressed}}
\end{algorithm}

Let us assume that we can check whether a stack $s$ is admissible in time $f(d, |\phi|)$ (it is not hard to see that this is possible, and this aspect will anyway be discussed in detail for the compressed setting in Section~\ref{sec-enum-compressed}). After the initial call {\sf extend}$(\varepsilon)$,
the algorithm  constructs an admissible stack $s$ with $|s| = q$ (or terminates) after time bounded in $d, k, r$ and $\bigR$ (since the lists $L_{1}, \ldots, L_{q}$ are short). If some $a \in L_{q + 1}$ is \emph{non-admissible}, i.e., the stack $sa$ is not admissible, then $\dist_{\RS}(t_{a_i,\sigma_i}, t_{a,\sigma_{q+1}}) \leq 2r+1$ and therefore $\dist(a_i, a) \leq 2\bigR+2r+1$ for some $i \in [q]$. Thus, the total number of non-admissible elements from $L_{q + 1}$ can be bounded by a function of $d, k, r$ and $\bigR$. Consequently, since $L_{q+1}$ is long, the algorithm necessarily finds some admissible $a \in L_{q + 1}$ (or terminates) after time bounded in $d, k, r$ and $\bigR$. From this observation, the following lemma can be concluded with moderate effort; see Appendix~\ref{sec:delayBoundAppendix}.

\begin{lemma}\label{delayBoundLemma}
The delay of the above enumeration procedure is bounded by $f(d, |\phi|)$.
\end{lemma}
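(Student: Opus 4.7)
My plan is to bound by $f(d,|\phi|)$ the time between any two consecutive outputs, as well as the time from the start to the first output and from the last output to termination. The argument splits the cost of a phase along the short/long boundary at level $q$: I would show that the work at long levels within a single phase is bounded by $f(d,|\phi|)$, while the total short-level work summed over the entire execution is already bounded by $f(d,|\phi|)$.

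The cornerstone would be a subsidiary claim, proved by downward induction on $\ell$: whenever $\mathsf{extend}(s)$ is called at a long level, i.e.\ with $\ell = |s|+1 > q$, it produces at least one output, and does so within time $f(d,|\phi|)$ both until its first output and between any two consecutive outputs in its subtree. The admissibility-violation argument sketched just before the lemma shows that any non-admissible extension $a \in L_\ell$ must satisfy $\dist_{\RS}(a_i, a) \leq 2\bigR + 2r + 1$ for some $a_i$ in $s$, so the number of non-admissible elements of $L_\ell$ is at most $(m-1) \cdot d^{2\bigR+2r+1}$. Since $|L_\ell| > k \cdot d^{2\bigR+2r+2}$ when $\ell > q$, an admissible element is reached within $f(d,|\phi|)$ admissibility checks, and the induction propagates the output guarantee all the way down to level $m$.

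An immediate consequence is that every ``failed subtree'' of the recursion (one whose root call emits no output) is rooted at a short level and consists entirely of short-level calls. The total number of distinct short-level recursive calls is at most $\prod_{\ell=1}^{q} |L_\ell| \leq (k \cdot d^{2\bigR+2r+2})^{q}$, and each performs at most $|L_\ell|$ admissibility checks of cost $f(d,|\phi|)$, so the aggregate short-level work across the whole algorithm is bounded by $f(d,|\phi|)$; in particular the short-level work spent in any single phase is bounded by $f(d,|\phi|)$. For the long-level contribution to a phase, I would trace the execution from an output $O_1$ at call $C_1$ to the next output $O_2$ at call $C_2$: the algorithm climbs to the lowest common ancestor $C^\ast$ of $C_1$ and $C_2$ and then descends to $C_2$; each long-level call encountered either resumes its paused for-loop (skipping at most $f(d,|\phi|)$ non-admissibles before the next admissible) or, on the descent, produces its first output in $f(d,|\phi|)$ time by the subsidiary claim. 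Since at most $m - q \leq k$ long levels are traversed, the long-level cost per phase is $f(d,|\phi|)$, and adding the global short-level budget yields the stated delay bound. The initial-to-first-output and last-output-to-termination phases are handled by the same scheme. The subtle point I expect to be the main obstacle is phases containing several failed short-level descents, which look locally unbounded but are tamed by charging them against the global short-level budget rather than analysing them per phase.
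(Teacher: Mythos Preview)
Your proposal is correct and uses essentially the same ingredients as the paper's proof: the bound on non-admissible extensions (Observation~\ref{observation1} in the paper), the fact that long lists therefore always admit an admissible extension, and the fact that the total number of short-level stacks is at most $(k\cdot d^{2\bigR+2r+2})^{q}$. The only cosmetic difference is organizational: the paper counts DFLR steps directly, case-splitting on whether the stack dips below level $q{+}1$ during a phase, whereas you package the long-level part as an inductive subsidiary claim and treat the short-level work via a global budget---but since that global budget is itself $f(d,|\phi|)$, this yields the same per-phase bound without any amortization subtlety.
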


\section{Straight-Line Programs for Relational Structures}\label{S hier}

In this section, we introduce the compression scheme that shall be used to compress relational structures. We first need the definition of pointed structures.

For $n \geq 0$, an \emph{$n$-pointed structure}
is a pair $(\RS,\tau)$, where $\RS$ is 
a structure and $\tau : [n] \to \RS$
is injective. The elements in $\ran(\tau)$ ($\RS \setminus \ran(\tau)$, respectively) 
are called {\em contact nodes} (\emph{internal nodes}, respectively).
The node $\tau(i)$ is called the \emph{$i$-th contact node}.

A \emph{relational straight-line program} (\emph{r-SLP} or just \emph{SLP})
is a tuple $D=(\mathcal R, N, S, P)$, where
\begin{itemize}
\item $\mathcal R$ is a relational signature,
\item $N$ is a finite set of \emph{nonterminals}, where 
every $A \in N$ has a {\em rank} $\rank(A) \in \mathbb{N}$,
\item $S \in N$ is the \emph{initial nonterminal}, where $\rank(S) = 0$, and
\item $P$ is a set of \emph{productions} that contains for every $A \in N$ a unique production 
$A \to (\RS_A,\tau_A,E_A)$ with $(\RS_A, \tau_A)$  a $\rank(A)$-pointed structure over the signature $\mathcal R$ and $E_A$ a multiset of \emph{references} of the form $(B,\sigma)$, where $B \in N$ and $\sigma : [\rank(B)] \to \RS_A$ is injective. 
\item Define the binary relation $\to_D$ on $N$ 
as follows: $A \to_D B$ if and only if  $E_A$ contains a reference of the form $(B,\sigma)$. 
Then we require that $\to_D$ is acyclic. Its transitive closure $\succ_D$ is 
a partial order that we call the \emph{hierarchical order} of $D$.
\end{itemize}
Let $|D| = \sum_{A \in N} (|\RS_A| +\sum_{(B,\sigma) \in E_A} (1+\rank(B)))$ be the
 size of $D$. We define the
ordered dag $\DAG(D)=(N,\gamma,S)$, 
where the child-function $\gamma$
is defined as follows:
Let $B \in N$ and let $(B_1,\sigma_1),\ldots, (B_m, \sigma_n)$ be an enumeration of the references in $E_B$, where
every reference appears in the enumeration as many times as in the multiset $E_B$.
The specific order of the references is not important and assumed to be somehow given by the input encoding of $D$
We then define $\gamma(B) = B_1 \cdots B_n$. 
 
We now define for every nonterminal $A \in N$ a $\rank(A)$-pointed relational structure $\eval(A)$ (the value of $A$).
We do this on an intuitive level, a formal definition can be found in Appendix~\ref{SLPDefinitionsAppendix}.
 If $E_A = \emptyset$, then we define $\eval(A) = (\RS_A, \tau_A)$. If, on the other hand, $E_A \neq \emptyset$, then 
 $\eval(A)$ is obtained from $(\RS_A, \tau_A)$ by expanding all references in $E_A$.
 A reference $(B, \sigma) \in E_A$ is expanded by the following steps: (i) create the disjoint union of $\RS_A$ and $\RS_B$, (ii) merge node $\tau_B(i) \in \RS_B$ with node $\sigma(i) \in \RS_A$ for every $i \in [\rank(B)]$, (iii) remove $(B, \sigma)$ from $E_A$, and (iv) add all references from $E_B$ to $E_A$. Due to the fact that $\to_D$ is acyclic, we can keep on expanding references (the original ones from $E_A$ and the new ones added by the expansion operation) in any order until there are no references left. The resulting relational structure is $\eval(A)$; see Example~\ref{ex:SLP} and Figure~\ref{fig:SLP} for an illustration.
 
We define $\eval(D) = \eval(S)$. Since $\rank(S) = 0$ it can be viewed as an ordinary ($0$-pointed) structure. It is not hard to see that $|\eval(D)| \le 2^{\bigo(|D|)}$ and that this upper bound can be also reached. Thus, $D$ can be seen as a compressed representation of the structure $\eval(D)$. 

In Section~\ref{sec-FOL} we claimed that FO-query enumeration can be reduced to the case where $\mathcal{R}$ only contains relation symbols of arity at most two (with the details given in Appendix~\ref{sec:generalDefinitionsAppendix}). 
It is easy to see that this reduction can be also done in the SLP-compressed setting simply by applying the reduction to all structures $\RS_A$;
see Appendix~\ref{SLPDefinitionsAppendix} for details.

We say that the SLP $D=(\mathcal R, N, S, P)$ is 
\emph{apex}, if for every $A \in N$ and every reference $(B,\sigma) \in E_A$ we have $\ran(\sigma) \cap \ran(\tau_A) = \emptyset$. Thus, contact nodes of a right-hand side cannot be accessed by references.
Apex SLPs are called $1$-level restricted in \cite{MHR94}. 
It is easy to compute the maximal degree of nodes in $\mathcal{G}(\eval(D))$ for an apex SLP $D$: 
for every node $v$ in a structure $\RS_A$ compute $d_v$ as the sum of (i) the degree of $v$ in $\mathcal{G}(\RS_A)$ and
(ii) for every reference $(B,\sigma) \in E_A$ and every $i \in [\rank(B)]$ with $v = \sigma(i)$, the degree of $\tau_B(i)$ in $\mathcal{G}(\RS_B)$.
Then the maximum of all these numbers $d_v$ is the maximal degree of nodes in $\mathcal{G}(\eval(D))$.
The apex property implies a certain locality property for $\eval(D)$ that will be explained in Section~\ref{sec node reps}.
In the rest of the paper we will mainly consider apex SLPs.

A simple example of a class of graphs that are exponentially compressible with apex SLPs is the class of perfect binary trees.
The perfect binary tree of height $n$ (with $2^n$ leaves) can be produced by an apex SLP of size $\mathcal{O}(n)$. 
Here is an explicit example for an apex SLP:

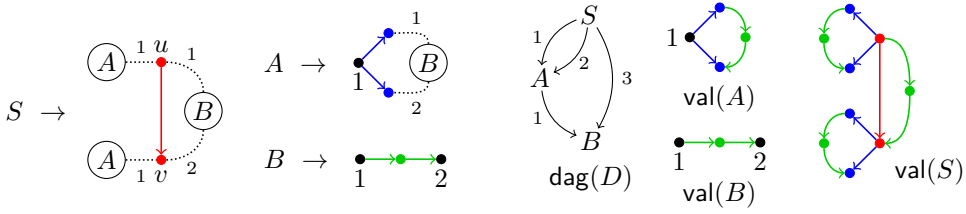
\begin{figure}
\tikzset{circlenode/.style={draw, circle, inner sep = 1.2pt, minimum size = 14pt}}
\tikzset{alpha/.style={inner sep = 1.2pt}}
\tikzset{bullet/.style={draw, circle, fill, inner sep = 1.2pt}}
\centering{
\begin{tikzpicture}[label distance=-.5mm]
\node[alpha] (S) {$S \ \to$};
\node[circlenode, above right = .3cm and .3cm of S] (A1) {$A$};
\node[circlenode, below right = .3cm and .3cm of S] (A2) {$A$};
\node[bullet, fill = red, draw = red, right = .4cm of A1, label=above:$u$] (t1) {};
\node[bullet, fill = red, draw = red, right = .4cm of A2, label=below:$v$] (t2) {};
\draw[->, semithick, red] (t1) to (t2);
\node[circlenode, right = 1.5cm of S] (B1) {$B$};
\draw[densely dotted,semithick] (A1) to node[midway, above] {\scriptsize $1$} (t1);
\draw[densely dotted,semithick] (A2) to node[midway, below] {\scriptsize $1$} (t2);
\draw[densely dotted,semithick] (B1) to[out=90,in=0] node[midway, above] {\scriptsize $1$} (t1);
\draw[densely dotted,semithick] (B1) to[out=-90,in=0] node[midway, below] {\scriptsize $2$} (t2);

\node[alpha, above right = .3cm and 2.5cm of S] (A) {$A \ \to$};
\node[bullet, right = .3cm of A, label=below:$1$] (t6) {};
\node[bullet, fill = blue, draw = blue, above right =  0.3cm and .3cm of t6] (t7) {};
\node[bullet, fill = blue, draw = blue, below right =  0.3cm and .3cm of t6] (t8) {};
\draw[->, blue, semithick] (t6) to (t7);
\draw[->, blue, semithick] (t6) to (t8);
\node[circlenode, right = .6cm of t6] (B2) {$B$};
\draw[densely dotted,semithick] (B2) to[out=90,in=0] node[midway, above] {\scriptsize $1$} (t7);
\draw[densely dotted,semithick] (B2) to[out=-90,in=0] node[midway, below] {\scriptsize $2$} (t8);

\node[alpha, below right = .3cm and 2.5cm of S] (B) {$B \ \to$};
\node[bullet, right = .3cm of B, label=below:$1$] (t3) {};
\node[bullet,  fill = darkgreen, draw = darkgreen, right = .4cm of t3] (t4) {};
\node[bullet, right = .4cm of t4, label=below:$2$] (t5) {};
\draw[->, semithick, darkgreen] (t3) to (t4);
\draw[->, semithick, darkgreen] (t4) to (t5);

\node[alpha, above right = .1cm and 6cm of S] (A') {$A$};
\node[alpha, above right = .5cm and .3cm of A'] (S') {$S$};
\node[alpha, below right = .5cm and .3cm of A'] (B') {$B$};
\node[alpha, below = .1cm of B'] {$\DAG(D)$};
\draw[->, bend left] (S') to node[midway, right] {\scriptsize $3$} (B');
\draw[->, bend left] (S') to node[pos = .7, right] {\scriptsize $2$} (A');
\draw[->, bend right] (S') to node[midway, left] {\scriptsize $1$} (A');
\draw[->, bend right] (A') to node[midway, left] {\scriptsize $1$} (B');

\node[bullet,  fill = darkgreen, draw = darkgreen, below right = .2cm and 8.5cm of S] (t4) {};
\node[bullet, left = .4cm of t4, label=below:$1$] (t3) {};
\node[bullet, right = .4cm of t4, label=below:$2$] (t5) {};
\draw[->, semithick, darkgreen] (t3) to (t4);
\draw[->, semithick, darkgreen] (t4) to (t5);
\node[alpha, below = .4cm of t4] {$\eval(B)$};

\node[bullet, above  left =  1.3cm and .3cm of t4, label=left:$1$] (n1) {};
\node[bullet, fill = blue, draw = blue, above right =  0.3cm and .3cm of n1] (n2) {};
\node[bullet, fill = blue, draw = blue, below right =  0.3cm and .3cm of n1] (n3) {};
\draw[->, blue, semithick] (n1) to (n2);
\draw[->, blue, semithick] (n1) to (n3);
\node[bullet, fill = darkgreen, draw = darkgreen, right = .6cm of n1] (n4) {};
\draw[->,darkgreen, semithick] (n2) to[out=0,in=90]  (n4);
\draw[->,darkgreen, semithick] (n4) to[out=-90,in=0] (n3);
\node[alpha, below = .12cm of n3] {$\eval(A)$};

\node[bullet, darkgreen, above right = .5mm and 11cm of S] (v1) {};
\node[bullet, fill = red, draw = red, above left =  0.6cm and .3cm of v1] (v2) {};
\node[bullet, fill = red, draw = red, below left =  0.6cm and .3cm of v1] (v3) {};
\draw[->,darkgreen, semithick] (v2) to[out=0,in=90]  (v1);
\draw[->,darkgreen, semithick] (v1) to[out=-90,in=0] (v3);
\draw[->, red, semithick] (v2) to (v3);
\node[alpha, below right = .1cm and .1cm of v3] {$\eval(S)$};

\node[bullet, fill = blue, draw = blue, above left =  0.3cm and .3cm of v2] (v4) {};
\node[bullet, fill = blue, draw = blue, below left =  0.3cm and .3cm of v2] (v5) {};
\draw[->, blue, semithick] (v2) to (v4);
\draw[->, blue, semithick] (v2) to (v5);
\node[bullet, darkgreen, left = .6cm of v2] (v8) {};
\draw[->,darkgreen, semithick] (v4) to[out=180,in=90]  (v8);
\draw[->,darkgreen, semithick] (v8) to[out=-90,in=180] (v5);

\node[bullet, fill = blue, draw = blue, above left =  0.3cm and .3cm of v3] (v6) {};
\node[bullet, fill = blue, draw = blue, below left =  0.3cm and .3cm of v3] (v7) {};
\draw[->, blue, semithick] (v3) to (v6);
\draw[->, blue, semithick] (v3) to (v7);
\node[bullet, darkgreen, left = .6cm of v3] (v9) {};
\draw[->,darkgreen, semithick] (v6) to[out=180,in=90]  (v9);
\draw[->,darkgreen, semithick] (v9) to[out=-90,in=180] (v7);

\end{tikzpicture}}
\caption{The SLP $D$ of Example~\ref{ex:SLP} together with $\DAG(D)$ and  $\eval(X)$
for $X \in \{S,A,B\}$.}
\label{fig:SLP}
\end{figure}

\begin{example}\label{ex:SLP}
Consider the SLP $D=(\mathcal R, N, S, P)$ where $\mathcal R$ only contains a binary relation symbol $r_1$ and $N = \{S, A, B\}$ with $\rank(S) = 0$, $\rank(A) = 1$ and $\rank(B) = 2$. The productions of these nonterminals are depicted on the left of Figure~\ref{fig:SLP}. For instance, the production $S\to (\RS_S,\tau_S,E_S)$ consists of the $0$-pointed structure $(\RS_S,\tau_S)$, where the universe of $\RS_S$ consists of 
the two red nodes $u$ and $v$, and the reference set $E_S = \{(A,\sigma_1), (A,\sigma_2), (B,\sigma_3)\}$ with $\sigma_1(1) = u$, $\sigma_2(1) = v$, $\sigma_3(1) = u$ and $\sigma_3(2) = v$ (in Figure~\ref{fig:SLP} each $\sigma_i(j)$ is connected by a
$j$-labeled dotted line with the nonterminal). The production for nonterminal $B$ consists of a $2$-pointed structure (and no references), the contact nodes of which are labeled by $1$ and $2$. The structure $\eval(D) = \eval(S)$ is shown on the right of Figure~\ref{fig:SLP}. It can be obtained by first constructing $\eval(A)$  by replacing the single $B$-reference in $\RS_A$ by $\RS_B = \eval(B)$. Note that 
$1$- and $2$-labeled dotted lines identify the two nodes to be merged with the two contact nodes of $\RS_B$, and that $\eval(A)$ has exactly one contact node. Then we replace the  $B$-reference in $\RS_S$ by $\eval(B)$ 
and both $A$-references in $\RS_S$ by $\eval(A)$. This merges $u$ (and $v$) with the contact node of the first (and the second) occurrence of $\eval(A)$. Red (resp., blue, green) edges and nodes are produced from $S$ (resp., $A$, $B$).

Since no contact node is adjacent to any reference, this SLP is apex. The size of $\eval(D)$ is $31$. 
The size of $D$ is $26$: $9$ (for the $S$-production) $+ \ 10$ (for the $A$-production) $+ \ 7$ (for the $B$-production).
\end{example}

\subsection{Representation of nodes of an SLP-compressed structure}
\label{sec node reps}

Let $A \in N$. A node $a \in \eval(A)$ can be uniquely represented by a pair $(p,v)$ 
such that $p$ is an $A$-path in $\DAG(D)$ and one of the following two cases holds:
\begin{itemize}
\item $p$ ends in $B \in N \setminus \{A\}$ and $v \in \RS_B \setminus \ran(\tau_B)$ is an internal node.\footnote{The nodes in $\ran(\tau_B)$,
i.e., the contact nodes of $\RS_B$, are excluded here,
because they were already generated by some larger (with respect to the
hierarchical order $\succ_D$) nonterminal.}
\item $p = A$ and $v \in \RS_A$. 
\end{itemize}
We call this the \emph{$A$-representation of $a$}. The $S$-representations of the nodes of $\eval(S) = \eval(D)$ 
are also called \emph{$D$-representations}. Note that if $(p, v)$ is the $D$-representation of a node then $v \in \RS_A \setminus \ran(\tau_A)$ 
for some $A \in N$ (since $\rank(S) = 0$). We will often identify a node of $\eval(A)$ with its $A$-representation; in particular when $A = S$.
One may view a $D$-representation $(p,v)$  as a stack $pv$. In order to construct outgoing (or incoming) edges of $(p,v)$ in the 
structure $\eval(D)$, one only has to modify this stack at its end; see 
also Appendix~\ref{sec compute expansion}.

The apex condition implies a kind of locality in $\eval(D)$ that can be nicely formulated in terms 
of $D$-representations: 
If two nodes $a = (p,u)$ and $b = (q,v)$ have distance $\zeta$ in the graph $\mathcal{G}(\eval(D))$
then the prefix distance between $p$ and $q$ (which is the number of edges in $p$ and $q$ that do not belong to the longest common prefix of $p$ and $q$)
 is also at most $\zeta$. This property is exploited several times in the paper.

Based on $A$-representations, we can define a natural embedding
of $\eval(B)$ into $\eval(A)$ in case $A \succ_D B$.
Assume that $p$ is a non-empty $A$-to-$B$ path in $\DAG(D)$ with $A \neq B$.
Let us write $p = p' C i B$ for some nonterminal $C$
(we may have $C=A$).
Let $(B, \sigma) \in E_C$ be the unique reference that corresponds to the edge $(C,i,B)$ in $\DAG(D)$.
We then define the embedding
$\eta_{p} : \eval(B) \to \eval(A)$ as follows, where $(q,v)$ is  a node in $\eval(B)$ given by its $B$-representation so that $q$ is a $B$-path
(recall that the path $pq$ is obtained by concatenating the paths $p$ and $q$; see Section~\ref{sec-dag}):
\[
\eta_{p}(q,v) = \begin{cases} 
   (p' C, \sigma(i)) & \text{ if $q = B$ and $v = \tau_B(j)$ for some $j \in [\rank(B)]$,} \\
   (pq,v) & \text{ otherwise.}
   \end{cases}
 \]
 We can extend this definition to the case $A=B$ (where $p=A$) by defining $\eta_p$ as the identity map on $\eval(A) = \eval(B)$.
If $\RS$ is the substructure of $\eval(B)$ induced by the set $U \subseteq \eval(B)$ then we write 
$\eta_p(\RS)$ for the substructure of $\eval(A)$ induced by the set $\eta_p(U)$. Note that in general we do not have
$\eta_p(\RS) \simeq \RS$. For instance, if $\RS = \eval(B)$ then in $\eval(A)$ there can be edges between contact nodes of $\eval(B)$ 
that are generated by a nonterminal $C$ with $C \to_D B$.

Recall the definition of the lexicographic order on the set of all $A$-paths of $\DAG(D)$ for $A \in N$ (see
Section~\ref{sec-dag}). We define $\lex_A(p)$ as the position of $p$ in the lexicographically sorted list of all 
$A$-paths of $\DAG(D)$, where we start with $0$ (i.e., $\lex_A(A) = 0$; note that $A$ is the empty
path starting in $A$ and hence the lexicographically smallest path among all $A$-paths). For 
$\lex_S(p)$ we just write $\lex(p)$.
Later it will be convenient to represent the initial path component $p$ of a $D$-representation $(p,v)$ by the number $\lex(p)$ and call
$(\lex(p), v)$ be the \emph{lex-representation} of the node $a = (p,v) \in \eval(D)$. The number of initial paths in $\DAG(D)$ can be bounded by $2^{\bigo(|D|)}$: the number of initial-to-leaf
 paths in $\DAG(D)$ is bounded by $3^{|\DAG(D)|/3} \leq 3^{|D|/3}$ (this is implicitly shown in the proof of \cite[Lemma~1]{CLLLPPSS05})
 and the number of all initial paths in $D$ is bounded by twice  the number of initial-to-leaf paths in $D$. 
 Hence, the numbers $\lex(p)$ have bit length $\bigo(|D|)$.

\begin{example}\label{ex:embed}
Recall the SLP $D$ from Example~\ref{ex:SLP} and $\DAG(D)$ shown to the right of $D$'s productions in Figure~\ref{fig:SLP}.
Then the pairs $(S,u)$ and $(S, v)$ (recall that $u$ and $v$ are the two nodes of $\RS_S$) represent the two red nodes of $\eval(D) = \eval(S)$,
and $(S3B, w)$, where $w$ is the green node in $\RS_B$, represents the rightmost green node of $\eval(D)$.
Its lex-representation is $(5,w)$ (there are six initial paths in $\DAG(D)$).
 As another example, the two leftmost (green) nodes of $\eval(D)$ are represented by the pairs $(S1A1B,w)$ and $(S2A1B,w)$
 with the lex-representations $(2,w)$ and $(4,w)$, respectively. 
 For the $S$-to-$B$ path $p = S2A1B$ in $\DAG(D)$ we have $\eta_{p}(B, w) = (S2A1B, w)$ 
  and $\eta_{p}(B, \tau_B(1)) = (S2A, \sigma(1))$, where $(B, \sigma)$ is the only reference in $E_{A}$.
\end{example}

\subsection{Register length in the compressed setting} \label{sec:SLPRAM}

In the following sections we will develop an enumeration algorithm for the set of all tuples in $\phi(\eval(D))$, where the SLP $D$ is part
of the input. Recall that $\eval(D)$ may contain $2^{\Theta(|D|)}$ many elements. In order to achieve constant delay, we therefore should
set the register length in our algorithm to $\Theta(|D|)$ so that we can store elements of $\eval(D)$. This is in fact a standard assumption 
for algorithms on SLP-compressed objects. For instance, when dealing with SLP-compressed strings, one usually assumes that registers can store
positions in the decompressed string. We only allow additions, subtractions and comparisons on these $\Theta(|D|)$-bit registers and these
operations take constant time (since we assume the uniform cost measure).
For  registers of length $\bigo(\log |D|)$ we will also allow pointer operations. 

Note that a $D$-representation $(p, v)$ needs $\bigo(|D|)$ many $\bigo(\log |D|)$-bit registers, whereas its lex-representation
$(\lex(p), v)$ fits into two registers (one of length $\bigo(\log |D|)$).

\section{FO-Enumeration over SLP-Compressed Degree-Bounded Structures}
\label{sec-enum-compressed}

We now have all definitions available in order to state a more precise version of Theorem~\ref{prelimMainResultTheorem}:

\begin{theorem}\label{prelimMainResultTheorem-2}
Given an apex SLP $D$ such that $\eval(D)$ is degree-$d$ bounded
and an FO-formula $\phi(x_1, \ldots, x_k)$, we can enumerate the result set $\phi(\RS)$ with preprocessing time $f(d,|\phi|) \cdot |D|$ and delay $f(d,|\phi|)$ for some computable function $f$. All nodes of $\phi(\RS)$ are output in their lex-representation.
\end{theorem}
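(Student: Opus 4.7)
The plan is to lift the uncompressed enumeration algorithm from Section~\ref{sec:enumAlgo} to apex SLPs by consistently representing every node of $\eval(D)$ by its lex-representation and by replacing the global lists $L_i$ of $\mathcal{B}_i$-nodes (which can have cardinality $2^{\Theta(|D|)}$) with succinct per-nonterminal data. As in the uncompressed case, a standard Gaifman-locality argument (Appendix~\ref{sec-gaifman}) reduces the enumeration of $\phi(\eval(D))$ to enumerating, for each $(k,r)$-neighborhood type $\mathcal{B}$ and each consistent factorization $\Lambda=(\mathcal{B}_1,\sigma_1,\ldots,\mathcal{B}_m,\sigma_m)$, all $m$-tuples of lex-representations that are admissible for $\Lambda$ in the sense of Section~\ref{sec:uncompressed}. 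Correctness is then inherited from Lemmas~\ref{correctnessLemma1} and~\ref{correctnessLemma2}, so it remains to implement Algorithm~\ref{mainEnumAlgoNonCompressed} over lex-representations.

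In the preprocessing phase I would compute, for every nonterminal $A\in N$, a \emph{local expansion} $\expan_A$ of $\eval(A)$ of size bounded by $f(d,|\phi|)$, containing the $\bigR$-ball in $\eval(A)$ around every internal node of $\RS_A$ and around every contact node of $\eval(A)$. These expansions are built bottom-up along the hierarchical order $\succ_D$: references inside $\RS_A$ are unfolded only up to prefix-distance $\bigR$, which by the apex-induced locality of Section~\ref{sec node reps} is the only way the $\bigR$-neighborhood of a $D$-representation $(p,v)$ can escape $\eval(A)$. From each $\expan_A$ I then read off, for every internal $v\in\RS_A$ and every tuple $\bar{\theta}\in\mathcal{T}_{\bigR}^{\rank(A)}$ of contact-node neighborhood types, the $\bigR$-neighborhood type of $(p,v)$ for any initial path $p$ whose contact-type vector at $A$ is $\bar{\theta}$. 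A single additional bottom-up pass on $\DAG(D)$ propagates the contact-type vector along each edge and annotates every $A\in N$ with a bounded-size table that, for each class of initial $A$-paths, maps internal nodes of $\RS_A$ to their resulting $(k,r)$-neighborhood types. The total cost is $f(d,|\phi|)\cdot|D|$.

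The enumeration phase runs Algorithm~\ref{mainEnumAlgoNonCompressed} with each element of $L_i$ represented by a pair $(\lex(p),v)$, $p$ an initial path in $\DAG(D)$ and $v$ an internal node of its endpoint, certified as a $\mathcal{B}_i$-node by the precomputed annotation. The short/long dichotomy is replayed by counting $|L_i|$ via a bounded bottom-up summation on $\DAG(D)$; short lists are materialized explicitly during preprocessing, long lists are iterated on the fly. Admissibility of a stack of lex-representations is checked by extracting, for each pair of entries $(\lex(p),v),(\lex(q),w)$, a bounded-size \emph{meeting neighborhood} built from the local expansions along the longest common prefix of $p$ and $q$; by the prefix-distance property, this suffices to decide distances up to $2\bigR+2r+1$ in time $f(d,|\phi|)$. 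With these primitives available, the delay analysis of Lemma~\ref{delayBoundLemma} transfers verbatim.

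The main obstacle is to realise the primitive \emph{jump to the next admissible $(\lex(p),v)$ in $L_i$} in time $f(d,|\phi|)$ despite $L_i$ being exponentially large and the admissibility filter depending on the current stack. The hard part will be to equip $\DAG(D)$ with an index structure that, for every contact-type class and every internal node of each $\RS_A$, exposes the sequence of its occurrences in lexicographic order of initial paths and supports \emph{rank}, \emph{select}, and \emph{next-after-lex} in $\bigo(1)$ on the $\Theta(|D|)$-bit register model of Section~\ref{sec:SLPRAM}. Combined with the fact that an admissible stack forbids only boundedly many candidates (the same geometric estimate underlying Lemma~\ref{delayBoundLemma}), this yields the next admissible lex-representation in constant time and completes the proof of Theorem~\ref{prelimMainResultTheorem-2}.
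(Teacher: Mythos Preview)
Your overall plan (Gaifman reduction, per-nonterminal local expansions, short/long dichotomy, the geometric skip bound from Lemma~\ref{delayBoundLemma}) matches the paper, but the central step---how to determine the $\bigR$-neighborhood type of a node $(p,v)$ in $\eval(D)$ from local data---is different and, as written, does not work. You propose a table indexed by contact-type vectors $\bar\theta\in\mathcal{T}_{\bigR}^{\rank(A)}$, but $\rank(A)$ is not bounded by any function of $d$ and $|\phi|$ (a nonterminal can have arbitrarily many contact nodes even when $\eval(D)$ has degree $0$), so such tables can be exponential in $|D|$. Even restricting to the at most $d^{\bigR+1}$ contact nodes that the $\bigR$-ball of a fixed $v$ actually meets does not help: their types are determined by what lies \emph{above} $A$ along the initial path $p$, the number of distinct such contexts can be exponential in $|D|$, and a ``single bottom-up pass'' on $\DAG(D)$ cannot compute context-dependent information. (Also, your expansion size bound $f(d,|\phi|)$ is too small: $\expan_A$ must contain all of $\mathsf{In}_A$, so its size is $|\RS_A|\cdot f(d,|\phi|)$.)

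The paper avoids the context problem altogether via Lemma~\ref{bijectionLemma}: instead of computing the type of $(p,v)$ at the nonterminal where $p$ ends, it assigns each $\mathcal{B}$-node $b\in\eval(D)$ to the \emph{unique} nonterminal $A$ along $p$ for which the entire $\bigR$-ball of $b$ is a valid substructure of the $(2\bigR{+}1)$-expansion $\expan(A)$, i.e.\ stays away from the boundary $\mathsf{Bd}_{A,2\bigR+1}$. For that $A$ the type is readable from $\expan(A)$ alone, with no context information. Enumerating $\mathcal{B}_i$-nodes then reduces to enumerating initial paths $p$ to $\mathcal{B}_i$-useful nonterminals (done with constant delay via the min-max-contracted path representations of Appendix~\ref{sec:MainPathEnumSection}, which also maintain $\lex(p)$) and, for each, outputting the precomputed valid $\mathcal{B}_i$-nodes of $\expan(A)$. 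Your rank/select/next-after-lex primitive aims at the same goal but is left unspecified; and note that the Gaifman reduction also needs the closed sentences $\psi_i$ to be decided on $\eval(D)$, which the paper handles not by a compressed model checker but by reducing to bounded enumeration of nodes of given $r$-neighborhood types (Appendix~\ref{sec:GaifmanReductionCompressed}).
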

Throughout Section~\ref{sec-enum-compressed} we fix $D = (\mathcal R, N, S, P)$ 
and $\phi(x_1, \ldots, x_n)$ as in Theorem~\ref{prelimMainResultTheorem-2}.
Let $\mathsf{qr}(\phi) = \nu$. W.l.o.g.~we can assume that $d \ge 2$.

The general structure of our enumeration algorithm is the same as for the uncompressed setting. In particular, we also use Gaifman-locality to reduce to the problem of enumerating for a fixed $\mathcal{B} \in \mathcal{T}_{k,r}$ the set of all $\mathcal{B}$-tuples $\bar{a} \in \eval(D)^k$  (see Appendix~\ref{sec-gaifman}), which then reduces to the problem of enumerating for all consistent factorizations $\Lambda = (\mathcal{B}_1, \sigma_1,  \ldots, \mathcal{B}_m, \sigma_m)$ of $\mathcal{B}$ the set of all $m$-tuples $\bar{b} \in \eval(D)^m$ that are admissible for $\Lambda$ (see the beginning of Section~\ref{sec:uncompressed}). 

Here, a first complication occurs: one important component of the above reduction for the uncompressed setting is that FO model checking on degree-$d$ bounded structures can be done in time $|\RS| \cdot f(d,|\phi|)$~\cite{Seese96}. For the SLP-compressed setting we do not have a linear time (i.\,e., in time $|D| \cdot f(d,|\phi|)$) model checking algorithm. Only an {\sf NL}-algorithm for apex SLPs is known \cite{Lohrey2012}. It is not hard to obtain a linear time algorithm from the {\sf NL}-algorithm in \cite{Lohrey2012}, but there is an easier solution
 that bypasses model checking. We give all the details of how to perform the necessary reduction in the compressed setting in Appendix~\ref{sec:GaifmanReductionCompressed}.

Consequently, as in the uncompressed setting, it suffices to consider a fixed consistent factorization 
$\Lambda = (\mathcal{B}_1, \sigma_1, \ldots, \mathcal{B}_m, \sigma_m)$
of $\mathcal{B}$
and to enumerate the set of all $m$-tuples in $\eval(D)$ that are admissible for $\Lambda$.
As before we define the larger radius $\bigR = 2rk - r + k -1$; see \eqref{def-bigR}.

\subsection{Expansions of nonterminals}\label{sec:expansions}

In this section we introduce the concept of $\zeta$-expansions for a constant $\zeta \geq 1$ (later, $\zeta$ will be a constant of the form $f(d,|\phi|)$), which will be needed to transfer the enumeration algorithm for the uncompressed setting (Section~\ref{sec:enumAlgo}) to the SLP-compressed setting. The idea is to apply the productions from $D$, starting with a nonterminal $A \in N$, until all nodes of $\eval(A)$ that have distance at most $\zeta$
from the nodes in the right-hand side of $A$ (except for the contact nodes of $A$) are produced. 
For a nonterminal $A \in N$ we define
\[
\mathsf{In}_A  =  \{ (A, v) : v \in \RS_A \setminus \ran(\tau_A) \} \subseteq \eval(A).
\]
These are the internal nodes of $\eval(A)$ (written in $A$-representation)  that are directly produced with the production 
$A \to (\RS_A, \tau_A, E_A)$.  Let $a_1, \ldots, a_m$ be a list of all nodes from $\mathsf{In}_A$. We then define
the \emph{$\zeta$-expansion} as the following induced substructure of $\eval(A)$:
\[ \expan_{\zeta}(A) = \neighborhoodStr{\eval(A)}{\zeta}{a_1, \ldots, a_m} . \]
We always assume that the nodes of $\expan_{\zeta}(A)$ are represented by their $A$-representations.
Let
\[
\mathsf{Bd}_{A,\zeta}  =  \{ (A, v) : v \in \ran(\tau_A) \} \cup \{ a \in \eval(A) : \dist_{\eval(A)}(\mathsf{In}_A, a) = \zeta \} \subseteq \eval(A)
\] 
be the \emph{boundary} of $\expan_{\zeta}(A)$.
A \emph{valid substructure} of $\expan_{\zeta}(A)$ is an induced substructure
$\mathcal{A}$ of $\expan_{\zeta}(A)$ with $\mathcal{A} \cap \mathsf{Bd}_{A,\zeta} = \emptyset \neq \mathcal{A} \cap \mathsf{In}_A$.
If $\mathcal{A}$ is a valid substructure of $\expan_{\zeta}(A)$
and $p$ is an $S$-to-$A$ path in $\DAG(D)$, then  any neighbor of $\eta_p(\mathcal{A})$ in the graph $\mathcal{G}(\eval(D))$ belongs to $\eta_p(\expan_{\zeta}(A))$. Moreover, $\eta_p(\mathcal{A}) \simeq \mathcal{A}$, since all contact nodes $(A, \tau_A(i))$
are excluded from a valid substructure of $\expan_{\zeta}(A)$.
In the following, we consider the radius $\zeta = 2 \bigR+1$. For a nonterminal $A \in N$ we write
$\expan(A)$ for the expansion $\expan_{2 \bigR+1}(A)$ in the rest of the paper. 

Fix a $\bigR$-neighborhood type $\mathcal{B}$.
A node $a \in \expan(A) \subseteq \eval(A)$ is called a \emph{valid $\mathcal{B}$-node} in $\expan(A)$
if (i) $\neighborhoodStr{\expan(A)}{\bigR}{a} \simeq \mathcal{B}$ and (ii) 
$\neighborhoodStr{\expan(A)}{\bigR}{a}$ is a valid substructure of $\expan(A)$.
We say that $A$ is \emph{$\mathcal{B}$-useful} if there is a valid $\mathcal{B}$-node in $\expan(A)$.
We consider now the following two sets:
\begin{itemize}
\item $\mathsf{S}^{\mathcal{B}}_1 = \{ (p,a) : \exists A \in N : \text{ $p$ is an $S$-to-$A$ path in $\DAG(D)$, $a$ is a valid $\mathcal{B}$-node in $\expan(A)$} \}$
\item $\mathsf{S}^{\mathcal{B}}_2 = \{ b \in \eval(D) : b \text{ is a $\mathcal{B}$-node} \}$
\end{itemize}
We define a mapping $h : \mathsf{S}^{\mathcal{B}}_1  \to \eval(D)$ as follows. Let $(p,a) \in \mathsf{S}^{\mathcal{B}}_1$, where $p$ is an $S$-to-$A$ path in $\DAG(D)$ and let $(q,v)$ be the $A$-representation of $a \in \expan(A)$. We then define $h(p, a) = \eta_p(a) = (pq, v)$ (where the latter is a $D$-representation that we identify as usual with a node from $\eval(D)$). The proof of the following lemma can be 
found in Appendix~\ref{sec:algoPrelimAppendix}.

\begin{lemma}\label{bijectionLemma}
The mapping $h$ is a bijection from  $\mathsf{S}^{\mathcal{B}}_1$ to $\mathsf{S}^{\mathcal{B}}_2$.
\end{lemma}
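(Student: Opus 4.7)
The plan is to verify in turn that $h$ is well-defined (lands in $\mathsf{S}^{\mathcal{B}}_2$), injective, and surjective. Throughout I write $N_b = \neighborhoodStr{\eval(D)}{\bigR}{b}$ for $b \in \eval(D)$.

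For well-definedness, fix $(p,a) \in \mathsf{S}^{\mathcal{B}}_1$ with $p$ an $S$-to-$A$ path and argue that $\eta_p$ maps $\mathcal{A} := \neighborhoodStr{\expan(A)}{\bigR}{a}$ bijectively onto $N_{h(p,a)}$ as relational structures. The inclusion $\eta_p(\mathcal{A}) \subseteq N_{h(p,a)}$ follows immediately from edge-preservation of $\eta_p$. For the reverse inclusion, I would induct on the length of a shortest path from $b = h(p,a)$ in $\eval(D)$: the property stated in Section~\ref{sec:expansions}, that every $\eval(D)$-neighbor of $\eta_p(\mathcal{A})$ lies in $\eta_p(\expan(A))$, supplies a candidate lift at each step, while the apex condition---which forces every $\eval(D)$-edge incident to a non-contact node of $\eval(A)$ to come from $\eval(A)$---ensures that the lift is exactly one $\expan(A)$-step further from $a$; since the path has length at most $\bigR$, the lift stays in $\mathcal{A}$. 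This yields $N_{h(p,a)} = \eta_p(\mathcal{A}) \simeq \mathcal{B}$, so $h(p,a)$ is a $\mathcal{B}$-node.

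For injectivity, suppose $h(p_1,a_1) = h(p_2,a_2) = b$ with $p_i$ an $S$-to-$A_i$ path. Since both pairs give the same $D$-representation of $b$, the paths $p_1,p_2$ are prefix-comparable, so I may assume $p_2 = p_1 r$. If $r$ is non-trivial, applying well-definedness to $(p_2,a_2)$ shows that every node of $N_b$ has a $D$-representation beginning with $p_1 r$; pulling back through $\eta_{p_1}^{-1}$, every node of $\neighborhoodStr{\expan(A_1)}{\bigR}{a_1}$ has an $A_1$-representation beginning with $r$ and therefore lies outside $\mathsf{In}_{A_1}$. This violates the defining non-emptiness condition $\mathcal{A} \cap \mathsf{In}_{A_1} \neq \emptyset$ of a valid substructure, contradicting the validity of $a_1$. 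So $r$ is trivial and $(p_1,a_1) = (p_2,a_2)$.

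For surjectivity, given $b \in \mathsf{S}^{\mathcal{B}}_2$, I would take $p$ to be the longest common prefix of the $D$-representations of the nodes of $N_b$, let $A$ be its endpoint, and set $a = \eta_p^{-1}(b) \in \eval(A)$. The central claim is that some node of $N_b$ has $D$-representation of the form $(p,w)$; by the definition of $D$-representation this forces $w \in \mathsf{In}_A$, placing $(A,w) \in \mathsf{In}_A$ inside the $\bigR$-ball of $a$ in $\eval(A)$. To prove the claim, suppose that every $D$-representation in $N_b$ strictly extends $p$; by maximality of $p$, some pair $c_1,c_2 \in N_b$ extend $p$ via two distinct edges into distinct sub-expansions of $A$. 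The path $c_1 \to b \to c_2$ stays inside $N_b$ by triangle inequality, and crossing between two sub-expansions of $A$ forces the path to pass through an attachment point, which is a $\sigma$-image lying in $\mathsf{In}_A$ by the apex property; this produces a node of $N_b$ with $D$-representation of the form $(p,w)$, contradicting the hypothesis. Once the $\mathsf{In}_A$-node is secured, $a \in \expan(A)$ with $\dist(a,\mathsf{In}_A) \le \bigR$, so the ball of $a$ in $\expan(A)$ equals the ball in $\eval(A)$, intersects $\mathsf{In}_A$, stays within distance $2\bigR$ of $\mathsf{In}_A$ (avoiding the distance-$(2\bigR+1)$ boundary), and avoids contact nodes of $A$ (which would otherwise give $D$-representations strictly shorter than $p$, contradicting the choice of $p$). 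The isomorphism with $\mathcal{B}$ follows from well-definedness applied in reverse. The most delicate step will be the apex-plus-connectivity argument producing the $\mathsf{In}_A$-node in $N_b$.
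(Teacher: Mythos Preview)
Your proposal is correct and follows essentially the same approach as the paper. The paper splits the proof into three lemmas (well-definedness, surjectivity, injectivity) exactly as you do, and the core arguments---the induction along a shortest path for well-definedness, the longest-common-prefix plus connectivity-through-an-apex-attachment-point for surjectivity, and the contradiction with $\mathcal{A} \cap \mathsf{In}_{A_1} \neq \emptyset$ for injectivity---are the same. The only organizational difference is that the paper factors out the notion of a \emph{top-level $A$-subset} (a connected subset of $\eval(D)$ whose $D$-representations share a common $S$-to-$A$ prefix and which contains at least one node with path exactly that prefix) into a standalone lemma and then invokes its uniqueness for both injectivity and surjectivity, whereas you inline the argument in each place; your direct derivation of the contradiction in the injectivity step is in fact slightly more streamlined than the paper's appeal to uniqueness of the top-level nonterminal.
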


\subsection{Overview of the enumeration algorithm}\label{sec:generalStructure}

Our goal is to carry out the algorithm described in Section~\ref{sec:enumAlgo}, but in the compressed setting, i.e., 
by only using the apex SLP $D = (\mathcal{R},N,S, P)$ instead of the explicit structure $\eval(D)$.  
As in the uncompressed setting, it suffices to consider a fixed $(k,r)$-neighborhood type $\mathcal{B} \in \mathcal{T}_{k,r}$ together with a 
fixed consistent factorization 
\begin{equation}
\label{eq-consistent-factorization}
\Lambda = (\mathcal{B}_1, \sigma_1, \ldots, \mathcal{B}_m, \sigma_m)
\end{equation}
  of $\mathcal{B}$ and to enumerate the set of all $m$-tuples in $\eval(D)$ that are admissible for $\Lambda$.
In the following we sketch the algorithm; details can be found in Appendix~\ref{sec:algoPrelimAppendix}.

\subparagraph{Enumeration of all \boldmath{$\mathcal{B}_i$}-nodes.}
The algorithm for the uncompressed setting (Section~\ref{sec:uncompressed}) precomputes 
for every $\mathcal{B}_i$ 
a list $L_i$ of all $\mathcal{B}_i$-nodes of the structure $\RS$.
This is no longer possible in the compressed setting since the structure $\eval(D)$ is too big.
However, as shown in Section~\ref{sec:expansions}, there
is a bijection between the set of  $\mathcal{B}_i$-nodes in $\eval(D)$ and the set
of all pairs $(p, a)$, where $p$ is an $S$-to-$A$ path in $\DAG(D)$ for a 
$\mathcal{B}_i$-useful nonterminal $A$
and $a$ is a valid $\mathcal{B}_i$-node in $\expan(A)$ that is written in its $A$-representation $(q,v)$.
Hence, on a high level, instead of explicitly precomputing the lists $L_i$ of all $\mathcal{B}_i$-nodes, we enumerate them with Algorithm~\ref{B-enumeration compressed}.

\begin{algorithm}[t]
\SetKwComment{Comment}{(}{)}
\SetKwInput{KwGlobal}{variables}
\SetKw{KwOutput}{output}
\SetKw{KwReturn}{return}
\SetKwFor{For}{for all}{do}

\For{initial paths $p$ in $\DAG(D)$ that end in a $\mathcal{B}_i$-useful nonterminal $A$}{
\For{$(q,v) \in \expan(A)$ that are valid $\mathcal{B}_i$-nodes in $\expan(A)$}{
\Return $(\lex(p),q,v)$}}
\caption{enumeration of all $\mathcal{B}_i$-nodes \label{B-enumeration compressed}}
\end{algorithm}

To execute this algorithm we first have to compute in the preprocessing all expansions $\expan(A)$ for a nonterminal $A$.
This is easy: using a breath-first-search (BFS), 
we locally generate $\eval(A)$ starting with the nodes in $\mathsf{In}_A$ until all nodes $a \in \eval(A)$ with
$\dist_{\eval(A)}(\mathsf{In}_A, a) \leq 2\bigR+1$ are generated.  The details can be found in Appendix~\ref{sec compute expansion}.
The size of $\expan(A)$ is bounded by $|\RS_A| \cdot f(d, |\phi|)$
(the size of a $(2\bigR+1)$-sphere around a tuple of length at most $|\RS_A|$ in a degree-$d$ bounded structure)
and can be constructed in time $|\RS_A| \cdot f(d, |\phi|)$.
Summing over all $A \in N$ shows that all $(2\bigR+1)$-expansions can be precomputed in time $|D| \cdot f(d, |\phi|)$.

With the $\expan(A)$ available, we can easily precompute brute-force the set of all valid $\mathcal{B}_i$-nodes in $\expan(A)$ (needed in Line 2 of 
Algorithm~\ref{B-enumeration compressed})
and then the set of all $\mathcal{B}_i$-useful nonterminals 
(needed in Line 1 of Algorithm~\ref{B-enumeration compressed}). 
Recall that $A$ is $\mathcal{B}_i$-useful iff there is a valid $\mathcal{B}_i$-node in $\expan(A)$.
Moreover, for every valid $\mathcal{B}_i$-node $c = (q,v) \in \expan(A)$
we compute also an isomorphism $\pi_c : \mathcal{B}_i \to \neighborhoodStr{\expan(A)}{\bigR}{c_i}$.
The time for this is bounded by $f(d,|\varphi|)$ for one nonterminal $A$ and hence by 
$|D| \cdot f(d, |\phi|)$ in total. More details on this part can be found in Appendix~\ref{sec:computeneighborhoodTypes}.

The most challenging part of Algorithm~\ref{B-enumeration compressed} is the enumeration of all initial paths $p$ in $\DAG(D)$
that end in a $\mathcal{B}_i$-useful nonterminal (Line 1). Let $\mathcal{P}_i$ be the set of these paths.
 In constant delay, we cannot afford to output a path $p \in \mathcal{P}_i$ as a list of edges (it does not fit into a constant number of registers in our machine model,
 see Section~\ref{sec:SLPRAM}). That is why we return the number $\lex(p)$ (which fits into a single register in our machine model) in Line 3. The idea for constant-delay path enumeration is to run over all paths $p \in \mathcal{P}_i$ in lexicographical order and thereby maintain the number $\lex(p)$.
 The path $p$ is internally stored in a contracted form. If $\DAG(D)$ would be a binary dag, then we could
 use an enumeration algorithm from \cite{LohreyMR18}, where maximal subpaths of left (right, respectively) outgoing edges are contracted to single edges.
 In our setting, $\DAG(D)$ is not a binary dag, therefore we have to adapt the technique from \cite{LohreyMR18}. Details can be found in Appendix~\ref{sec:pathEnumPreprocessing}
 (which deals with the necessary preprocessing) and Appendix~\ref{sec:MainPathEnumSection} (which deals with the actual path enumeration).

In order to see how Algorithm~\ref{B-enumeration compressed} can be used to replace the precomputed lists $L_i$ in Algorithm~\ref{mainEnumAlgoNonCompressed} for the uncompressed setting, a few additional points have to be clarified.

\subparagraph{Producing the final output tuples.}
Note that for each enumerated $\mathcal{B}_i$-node $b_i \in \eval(D)$ we have to produce the partial $k$-tuple $t_{b_i, \sigma_i}$
(then the final output tuple is $t_{b_1,\sigma_1} \sqcup t_{b_2,\sigma_2} \sqcup \cdots \sqcup t_{b_m,\sigma_m}$). 
Let us first recall that in the uncompressed setting each partial $k$-tuple $t_{b_i,\sigma_i}$ is defined by $t_{b_i,\sigma_i}(j) = \pi_{b_i}(\sigma_i(j))$ for all $j \in \dom(\sigma_i)$, where $\pi_{b_i} : \mathcal{B}_i \to \neighborhoodStr{\RS}{\bigR}{b_i}$ is a precomputed isomorphism.
In the compressed setting, Algorithm~\ref{B-enumeration compressed} outputs every $\mathcal{B}_i$-node $b_i \in \eval(D)$
as a triple $(\lex(p_i), q_i, v)$, where the initial path $p_i \in \mathcal{P}_i$ ends in some $\mathcal{B}_i$-useful nonterminal $A_i \in N$
and $c_i := (q_i,v_i)$ is a valid $\mathcal{B}_i$-node in $\expan(A_i)$. 
Moreover, we have a  precomputed isomorphism $\pi_{c_i} : \mathcal{B}_i \to \neighborhoodStr{\expan(A_i)}{\bigR}{c_i}$, which yields
the isomorphism $\pi_{b_i} = \eta_{p_i} \circ \pi_{c_i} : \mathcal{B}_i \to \neighborhoodStr{\eval(D)}{\bigR}{b_i}$.
Then, for every $j \in \dom(\sigma_i)$ we can easily compute the lex-representation of $\pi_{b_i}(\sigma_i(j))$. 
We first compute $\pi_{c_i}(\sigma_i(j))$ in its $A_i$-representation $(q_{i,j}, v_{i,j})$
using the precomputed mapping $\pi_{c_i}$. Then the lex-representation of 
$t_{b_i, \sigma_i}(j) =  \pi_{b_i}(\sigma_i(j))$ is $(\lex(p_i q_{i,j}), v_{i,j})$, where $\lex(p_iq_{i,j}) = \lex(p_i)+\lex_{A_i}(q_{i,j})$. Here, $\lex(p_i)$ is produced by Algorithm~\ref{B-enumeration compressed}.
The path $q_{i,j}$ has length at most $2 \bigR+1$ (this is a consequence of the apex condition for $D$).
Its $\lex$-number $\lex_{A_i}(q_{i,j})$ can be computed by summing at most $2 \bigR+1$ many edge weights that were computed in the preprocessing phase
(see Appendix~\ref{sec:pathEnumPreprocessing}).

 \subparagraph{Count total number of \boldmath{$\bigR$}-neighborhoods.}
 In Section~\ref{sec:enumAlgo} we distinguish between short and long lists $L_i$.
 Since in our compressed setting, Algorithm~\ref{B-enumeration compressed} replaces the precomputed list $L_i$
 we have to count the number of triples produced by Algorithm~\ref{B-enumeration compressed} (of course, before we 
 run the algorithm) in the preprocessing phase. 
 This is easy: the number of output triples can be computed by summing over all $\mathcal{B}_i$-useful nonterminals $A$
 the product of (i) the number of $S$-to-$A$ paths in $\DAG(D)$ and (ii) the number of valid $\mathcal{B}_i$-nodes in $\expan(A)$. The latter can be computed in the preprocessing phase.
 Computing the number of $S$-to-$A$ paths (for all $A \in N$) involves a top-down pass (starting in $S$) over $\DAG(D)$ with $|\DAG(D)| \leq |D|$ many additions on $\bigo(|D|)$-bit numbers in total. See Appendix~\ref{appendix-counting} for details.

 \subparagraph{Checking distance constraints.} 
 Recall that we fixed the consistent factorization $\Lambda$ from \eqref{eq-consistent-factorization}
 of the fixed $(k,r)$-neighbor\-hood type $\mathcal{B}$ 
 and want to enumerate all tuples $(b_1, \ldots, b_m) \in \eval(D)^m$ that are admissible for $\Lambda$. The definition
 of an admissible tuple also requires to check whether 
 $\dist_{\eval(D)}(t_{b_{i}, \sigma_{i}}, t_{b_{j}, \sigma_{j}}) > 2r+1$
for all $i \neq j$ (see \eqref{disjointness}). 
The nodes $b_i$ are enumerated with Algorithm~\ref{B-enumeration compressed}, hence  the 
following assumptions hold for all $i \in [m]$:
\begin{itemize}
\item $b_i$ is given by a triple $(\lex(p_i), q_i, v_i)$, 
\item $p_i$ is an initial-to-$A_i$ path in $\DAG(D)$ (for some $\mathcal{B}_i$-useful nonterminal $A_i$), and
\item $c_i := (q_i,v_i)$ is a node (written in $A_i$-representation) from $\expan(A_i)$ such that $c_i$ has $\bigR$-neighborhood type $\mathcal{B}_i$ in $\expan(A_i)$ and $\neighborhoodStr{\expan(A_i)}{\bigR}{c_i}$ is a valid substructure of $\expan(A_i)$.
\end{itemize}
In a first step, we show that if $\dist_{\eval(D)}(t_{b_{i}, \sigma_{i}}, t_{b_{j}, \sigma_{j}}) \leq 2r+1$ then 
there is a path $q$ of length at most $3\bigR-r$ such that $p_i = p_j q$ or $p_j = p_i q$.
For this, the apex property for $D$ is important, since it lower bounds the distance between two nodes $a = (p,u)$ and $a' = (p', v')$
of $\eval(D)$ by the prefix distance between the paths $p$ and $p'$ (i.e., the total number of edges that do not belong to the longest
common prefix of $p$ and $p'$); see Appendix~\ref{sec:disjointnessCheck} for details.

We then proceed in two steps: We first check in time $f(d, |\phi|)$ whether $p_j = p_i q$ or $p_i = p_j q$ for some path $q$ of length at most $3\bigR-r$. For checking $p_j = p_i q$ (the case $p_i = p_j q$ is analogous) we check whether $p_j = p_i$ (by checking $\lex(p_j) = \lex(p_i)$) and if this is not the case, we repeatedly remove the last edge of $p_j$ (for at most $3 \bigR - r$ times) and check whether the resulting path equals $p_i$. However, the whole procedure is complicated by the fact that $p_i$ and $p_j$ are given in a contracted form, where some subpaths are contracted to single edges (see the above paragraph on the path enumeration algorithm for $\DAG(D)$).

In the second step we have to check in time $f(d, |\phi|)$ whether $\dist_{\eval(D)}(t_{b_{i}, \sigma_{i}}, t_{b_{j}, \sigma_{j}}) \leq 2r+1$, assuming that 
$p_j = p_i q$ for some path $q$ of length at most $3\bigR-r$. This boils down to checking, for every $b \in \ran(t_{b_{i}, \sigma_{i}})$ and $b' \in \ran(t_{b_{j}, \sigma_{j}})$, whether $\dist_{\eval(D)}(b, b') \le 2r+1$, which is the case iff $\dist_{\eval(A_i)}(c,\eta_q(c')) \le 2r+1$, where $c, c' \in \eval(A_i)$ correspond to $b, b'$ in the sense that $\eta_{p_i}(c) = b$ and $\eta_{p_j}(c') = b'$. For this we locally construct $\neighborhoodStr{\eval(A_i)}{2r+1}{c}$ by starting a BFS in $c$ and then computing all elements of $\eval(A_i)$ with distance at most $2r + 1$ from $c$ just like we constructed the expansions in Appendix~\ref{sec compute expansion}. Details can be found in Appendix~\ref{sec:disjointnessCheck}.
This concludes our proof sketch for Theorem~\ref{prelimMainResultTheorem-2}.

\section{Conclusions and Outlook}

We presented an enumeration algorithm for FO-queries on structures that are represented succinctly by  apex SLPs.
Assuming that the formula is fixed and the degree of the structure is bounded by a constant, the preprocessing time of our
algorithm is linear and the delay is constant. 

There are several possible directions into which our result can be extended.
One option is to use more general formalisms for graph compression. Our SLPs are based on Courcelle's HR (hyperedge
replacement) algebra, which it tightly related to tree width \cite[Section~2.3]{CouEngel12}. Our SLPs can be viewed as dag-compressed 
expressions in the HR algebra,
where the leaves can be arbitrary pointed structures; see \cite{Lohrey2012} for more details. Another (and in some sense more general)
graph algebra is the VR algebra, which is tightly related to clique width \cite[Section~2.5]{CouEngel12}. It is straightforward to define a notion of 
SLPs based on the VR algebra and this leads to the question whether our result also holds for the resulting VR-algebra-SLPs.

Another interesting question is to what extend the results on enumeration for conjunctive queries \cite{BaganEtAl2007, BerkholzEtAl2020} 
can be extended to the compressed setting. 
In this context, it is interesting to note that model checking for a fixed existential FO-formula on
SLP-compressed structures (without the apex restriction) belongs to {\sf NL}.
It would be interesting to see, whether the constant delay enumeration algorithm from \cite{BaganEtAl2007} for 
free-connex acyclic conjunctive queries can be extended to SLP-compressed structures.

Finally, one may ask whether in our main result (Theorem~\ref{prelimMainResultTheorem-2}) 
the apex restriction is really needed. More precisely, consider
an SLP $D$ such that $\eval(D)$ has degree $d$. Is it possible to construct from $D$ in time $|D| \cdot f(d)$ 
an equivalent apex SLP $D'$ of size $|D| \cdot f(d)$ for a computable function $f$? If this is true then one could
enforce the apex property in the preprocessing. In \cite{EngelfrietHL94} it shown that a set of graphs of bounded degree $d$ that
can be produced by a hyperedge replacement grammar (HRG) $H$ can be also produced by an apex HRG,
but the size blow-up is not analyzed with respect to the parameter $d$ and the size of $H$.


\appendix

\section{Omitted Details from Section~\ref{sec-FOL}}\label{sec:generalDefinitionsAppendix}

We give some more details for first order logic. Let us fix a relational signature $\mathcal R  = \{ R_i: i \in I\}$. Atomic FO-formulas over the signature $\mathcal R$ are of the form $x=y$ and $R(x_1, \ldots, x_n)$, where $R \in \mathcal R$ has arity $n$ and $x, y, x_1, \ldots, x_n$ are first-order variables ranging over elements of the universe. From these atomic FO-formulas we construct arbitrary FO-formulas over the signature $\mathcal R$ using Boolean connectives ($\neg$, $\wedge$, $\vee$) and the (first-order) quantifiers $\exists x$ and $\forall x$ (for a variable $x$ ranging over elements of the universe).

As a general convention, we also write $\psi(x_1, \ldots, x_k)$ to denote that $\psi$ is an FO-formula with free variables $x_1, \ldots, x_k$.
A {\em sentence} is an FO-formula without free variables.
The \emph{quantifier rank} $\mathsf{qr}(\psi)$ of an FO-formula $\psi$ is inductively defined as follows:
$\mathsf{qr}(\psi) = 0$ if $\psi$ contains no quantifiers, $\mathsf{qr}(\neg \psi) = \mathsf{qr}(\psi)$, 
$\mathsf{qr}(\psi_1 \wedge \psi_2) = \mathsf{qr}(\psi_1 \vee \psi_2) = \max\{ \mathsf{qr}(\psi_1), \mathsf{qr}(\psi_2) \}$ and
$\mathsf{qr}(\forall x \psi) = \mathsf{qr}(\exists x \psi) = 1 +  \mathsf{qr}(\psi)$.

The model checking problem (and also the enumeration problem) for first-order logic over arbitrary signatures can be reduced to the case, where all arities are at most two as follows. Let us consider a structure $\RS = (U, (R_i)_{i \in I})$ over an arbitrary relational signature $\mathcal R = \{ R_i : i \in I \}$, where the
arity of $R_i$ is $\alpha_i$.
Let $n = \max \{ \alpha_i : i \in I \}$.
We then define a new signature $\mathcal R'$ consisting of the following relation symbols:
\begin{itemize}
\item a unary  symbol $R_U$,
\item all $R_i$ with $\alpha_i = 1$,
\item a unary  symbol $R'_i$ for every $i \in I$ with $\alpha_i > 1$, and
\item binary  symbols $E_1, \ldots, E_n$.
\end{itemize}
From $\RS$ we construct the new structure $\RS'$ over the signature $\mathcal R'$ by taking the universe\footnote{For sets 
$A_1$ and $A_2$ we write their union as $A_1 \uplus A_2$ if $A_1$ and $A_2$ are disjoint.}
$U \uplus \{ b_{i,\bar a} : i \in I, \bar{a} \in R_i \}$ (the $b_{i,\bar a}$ are new elements) and defining the relations as follows:
\begin{itemize}
\item $R_U = U$,
\item $R_i$ is the same relation as in $\RS$ if $\alpha_i=1$.
\item $R'_i = \{ b_{i,\bar{a}} : \bar{a} \in R_i \}$ for all $i \in I$ with $\alpha_i > 1$, and
\item $E_j =  \{ (b_{i,\bar{a}}, a_j) : i \in I, \alpha_i > 1, \bar{a} = (a_1,\ldots, a_{\alpha_i}) \in R_i \}$ for all $j \in [n]$.
\end{itemize}
Given an FO-formula $\psi(x_1, \ldots, x_k)$ we construct the new FO-formula 
$\psi'(x_1, \ldots, x_k) = \bigwedge_{1 \le j \le k} R_U(x_i)  \wedge \tilde{\psi}(x_1, \ldots, x_k)$
where $\tilde{\psi}(x_1, \ldots, x_k)$ is obtained from $\psi(x_1, \ldots, x_k)$ by restricting all quantifiers
in $\psi$ to elements from $U$ (using the unary predicate $R_U$) and
replacing every atomic subformula
$R_i(y_1, \ldots, y_{\alpha_i})$ with $\alpha_i > 1$ 
by $\exists x : R'_i(x) \wedge \bigwedge_{j=1}^{\alpha_i} E_j(x,y_i)$.
We then have $\psi(\RS) = \psi'(\RS')$. In addition the following facts are important:
\begin{itemize}
\item $\mathsf{qr}(\psi') \le  \mathsf{qr}(\psi)+1$.
\item The degree of  $\RS'$ is bounded by the 
degree of $\RS$.
\item The structure $\RS'$ can be constructed in linear time from $\RS$ and the formula
$\psi'$ can be constructed in linear time from $\psi$.
\end{itemize}

\section{Omitted Details from Section~\ref{sec:uncompressed}}

We explain in this section the reduction mentioned at the beginning of Section~\ref{sec:uncompressed} and start with Gaifman's locality theorem.

\subsection{Gaifman's theorem and a corollary for degree-\boldmath{$d$} bounded structures}
\label{sec-gaifman}

Clearly, for every $r \geq 0$, there is an FO-formula $\delta_{r}(x,y)$ such that for every relational structure $\RS$ and all $a, b \in \RS$ we have: $\RS \models \delta_r(a, b)$ if and only if $\dist_{\RS}(a, b) \leq r$. Moreover, define $\delta_{k,r}(x_1, \ldots, x_k,y) = \bigvee_{i=1}^k \delta_r(x_i,y)$. For an FO-formula $\psi$, a tuple of variables $\bar{z} = (z_1, \ldots, z_k)$ and $r \geq 1$ let $\psi^{\bar{z},r}$ be the \emph{$(\bar{z},r)$-relativization} of $\psi$, i.e., the FO-formula obtained from $\psi$ by restricting all quantifiers to the $r$-sphere around $\bar{z}$ in the graph $\mathcal{G}(\RS)$. Let us define this more formally. For an FO-formula $\psi$, a tuple of variables $\bar{z} = (z_1, \ldots, z_k)$
 and $r \geq 1$ we define the FO-formula $\psi^{\bar{z},r}$ inductively by
\begin{itemize}
\item $R(x_1,\ldots, x_n)^{\bar{z},r} = R(x_1,\ldots, x_n)$ for $R \in \mathcal{R}$ and $(x=y)^{\bar{z},r} = (x=y)$,
\item $(\neg \psi)^{\bar{z},r} = \neg (\psi^{\bar{z},r})$, $(\psi_1 \circ \psi_2)^{\bar{z},r} = \psi_1^{\bar{z},r} \circ \psi_2^{\bar{z},r}$ for $\circ \in \{\wedge, \vee\}$, and
\item $(\exists x : \psi)^{\bar{z},r} = \exists x : (\delta_{k,r}(\bar{z},x) \wedge \psi^{\bar{z},r})$ and $(\forall x : \psi)^{\bar{z},r} = \forall x : (\delta_{k,r}(\bar{z},x) \to \psi^{\bar{z},r})$.
\end{itemize}
Note that the $z_i$ are free variables in $\psi^{\bar{z},r}$. 

We can now state Gaifman's theorem \cite{Gai82}; see also \cite[Theorem~4.22]{Libkin04}.

\begin{theorem} \label{thm-gaifman}
From a given FO-formula $\phi(x_1, \ldots, x_k)$ with $\mathsf{qr}(\phi) = \nu$,
one can compute a logically equivalent Boolean combination of FO-formulas of the following form, where $r \leq 7^{\nu}$, $q \leq k+\nu$ and 
$\bar{x} = (x_1, \ldots, x_k)$:
\begin{enumerate}[(i)]
\item $\psi^{\bar{x},r}$ where only the variables $x_1, \ldots, x_k$ are allowed to occur freely in $\psi$,
\item $\exists z_1 \cdots \exists z_q : \bigwedge_{1 \leq i < j \leq q} \neg\delta_{2r}(z_i,z_j) \wedge \bigwedge_{1 \leq i \leq q} \theta^{z_i,r}$ where 
$\theta$ is a sentence.
\end{enumerate}
\end{theorem}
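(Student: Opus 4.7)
The plan is to prove Theorem~\ref{thm-gaifman} by induction on the structure of $\phi$, following the classical Gaifman argument. Base cases (atomic formulas and equalities) are already trivially $0$-local of type~(i) with empty Boolean combinations attached, so they require nothing. Boolean combinations are immediate since the class of ``Boolean combinations of type~(i)/(ii) formulas'' is closed under $\neg, \wedge, \vee$ (one only needs to take the maximum radius appearing). The entire content of the theorem is concentrated in the quantifier step, which I would handle by carrying a suitably strengthened induction hypothesis: for every formula $\phi(\bar x)$ of quantifier rank $\nu$, there exists $r \le 7^{\nu}$ and $q \le k+\nu$ such that $\phi(\bar x)$ is equivalent to a Boolean combination of type~(i) formulas with radius $r$ and type~(ii) sentences with separation $2r$, count $\le q$, and inner radius $r$.

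For the quantifier step, I would consider $\exists y\,\phi(\bar x, y)$. By induction $\phi(\bar x, y)$ is equivalent to a Boolean combination of local formulas $\psi^{(\bar x,y), r'}$ and sentences $\chi$ of types~(i),(ii) with parameter $r'$. Put this combination into disjunctive normal form; then $\exists y$ distributes over disjunctions, and the type~(ii) sentences, being $y$-free, pull out of $\exists y$ as conjuncts. The remaining task is therefore to convert each conjunct of the form $\exists y\, \bigwedge_j \alpha_j^{(\bar x,y), r'}(\bar x, y)$ into type~(i)/(ii) form; by merging conjuncts, we may assume a single local formula $\alpha^{(\bar x,y), r'}$. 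Now I would split the existential along the case distinction $\dist(y, \bar x) \le 2r'+1$ versus $\dist(y, \bar x) > 2r'+1$. In the \emph{near case} the variable $y$ is absorbed into a larger local formula around $\bar x$: on the $(3r'+1)$-sphere around $\bar x$ the formula $\alpha^{(\bar x,y),r'}$ is equivalent to an unrestricted FO-formula (since the sphere is closed under the quantifier), and existentially quantifying $y$ over that sphere yields a type~(i) formula with radius $3r'+1$. In the \emph{far case}, the $r'$-neighborhood of $(\bar x, y)$ decomposes as the disjoint union of the $r'$-neighborhoods of $\bar x$ and of $y$, and a standard locality argument rewrites $\alpha^{(\bar x,y),r'}$ as a conjunction $\beta^{\bar x, r'}(\bar x) \wedge \gamma^{y, r'}(y)$. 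Thus the existential becomes $\beta^{\bar x,r'}(\bar x) \wedge \exists y\,(\dist(y,\bar x)>2r'+1 \wedge \gamma^{y,r'}(y))$; and the latter conjunct I would re-express as a Boolean combination of (a) the sentence $\exists z_1\cdots \exists z_{k+1}\,\bigwedge_{i\ne j}\neg\delta_{2r'}(z_i,z_j)\wedge\bigwedge_i \gamma^{z_i,r'}$ (type~(ii), with $q$ incremented by one), and (b) ``local'' conditions on $\bar x$ itself, by a pigeonhole-style argument that says: if there are at least $k+1$ pairwise-$2r'$-scattered witnesses of $\gamma$, then one must be far from every $x_i$.

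The main obstacle, as usual with Gaifman's theorem, is the bookkeeping of the radius and the scattering count through the quantifier step, in order to realize the concrete bounds $r \le 7^\nu$ and $q \le k+\nu$. I would therefore tighten the induction hypothesis to track exactly two numerical parameters, namely the local radius $r$ in both type~(i) and type~(ii) subformulas and the separation radius $2r$ in type~(ii); the quantifier step then passes from $r'$ to $3r'+1$ in the near case, while the far case requires separation at least $2(3r'+1)+1$ to make the neighborhood-decomposition argument go through, which, after absorbing constants, is still dominated by $7r'$. Iterating the recursion $r \mapsto 7r$ from $r_0 = 1$ gives $r \le 7^\nu$, and the scattering count grows by at most one per existential by the pigeonhole step above, giving $q \le k+\nu$. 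Finally, I would observe that the whole procedure is effective: the formula $\delta_r(x,y)$ can be constructed explicitly (using iterated squaring if one wishes it small, or directly of size $O(r)$), the DNF conversion and substitutions are effective, and the rewrites in the near and far cases are syntactic, so a decision procedure reads off the required Boolean combination from $\phi$ and $\nu$.
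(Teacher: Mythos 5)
The paper does not prove this statement at all: it is quoted as Gaifman's locality theorem from \cite{Gai82} (see also Theorem~4.22 in \cite{Libkin04}), so there is no in-paper argument to compare against. Your sketch reconstructs the standard textbook proof --- induction on quantifier rank, with the existential step split into a near case (absorb $y$ into a $(3r'+1)$-local formula around $\bar{x}$) and a far case (decompose the local type of $(\bar{x},y)$ over disjoint neighborhoods and reduce to scattered-witness sentences). This skeleton is the right one, and the radius recursion $r \mapsto 7r + O(1)$ together with the count $q \le k+\nu$ does come out as you describe.

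Two points in your far case need repair, and they are exactly where the real work of Gaifman's theorem sits. First, the scattered sentence you display uses separation $\neg\delta_{2r'}(z_i,z_j)$, i.e.\ pairwise distance $> 2r'$; with that separation the pigeonhole step fails, since $k+1$ witnesses pairwise at distance $> 2r'$ could all lie inside a single ball of radius $2r'+1$ around one component $x_i$ (such a ball has diameter $4r'+2 > 2r'$), so none of them need be far from $\bar{x}$. The separation must exceed the diameter of the blocking balls, i.e.\ be $> 2(2r'+1)$; this is precisely why in the final statement the separation $2r$ is tied to the \emph{new} radius $r \ge 2r'+1$ rather than to $r'$. (You do remark later that the separation must grow to roughly $6r'+3$, so the displayed sentence contradicts your own bookkeeping.) Second, and more substantively, you only argue one implication: many scattered witnesses imply a far witness. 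The converse situation --- a far witness exists although no $(k+1)$-scattered set of witnesses does --- is the crux, and ``local conditions on $\bar{x}$'' does not yet name an argument. One standard way to close it: letting $m$ (resp.\ $m'$) denote the maximum size of a scattered set of witnesses (resp.\ of witnesses within distance $2r'+1$ of $\bar{x}$), a far witness exists if and only if $m > m'$ or some witness lies at distance between $2r'+2$ and $6r'+3$ from $\bar{x}$. It is this equivalence that forces both the additional scattered sentences for each $s \le k$ and the jump of the local radius to about $7r'$; without it the Boolean combination is not an equivalence and the induction does not close.
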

\medskip
Gaifman's theorem is particularly useful for degree-$d$ bounded structures (for some fixed $d$).
We therefore continue to derive a well-known corollary of Theorem~\ref{thm-gaifman} for 
degree-$d$ bounded structures.

We first observe that for every formula $\psi^{\bar{x},r}$ from (i) in Theorem~\ref{thm-gaifman}, whether $\RS \models \psi^{\bar{x},r}(\bar{a})$ for some $\bar{a} \in \RS^k$ does not depend on the whole structure $\RS$, but is completely determined by the $(k,r)$-neighborhood type of $\bar{a}$. This is due to the fact that all quantifiers in $\psi^{\bar{x},r}$ are restricted to the $r$-sphere around its free variables $\bar{x}$. Consequently, one can replace every formula $\psi^{\bar{x},r}$ from (i) in Theorem~\ref{thm-gaifman} by the finite disjunction $\bigvee_{i \in I} \psi^{\mathcal{B}_i}(\bar{x})$ where the $\mathcal{B}_i \in \mathcal{T}_{k,r}$ for $i \in I$ are exactly those $(k,r)$-neighborhood types such that $\RS \models \psi^{\bar{x},r}(\bar{a})$ if and only if $\bar{a}$ is a $\mathcal{B}_i$-tuple for some $i \in I$.
By going to conjunctive normal form (and using the fact that a conjunction $\psi^{\mathcal{B}_1}(\bar{x}) \wedge 
\psi^{\mathcal{B}_2}(\bar{x})$ for different $\mathcal{B}_1, \mathcal{B}_2 \in \mathcal{T}_{k,r}$ is always false)
we end up with a disjunction 
\[
\bigvee_{i \in [m]} (\psi^{\mathcal{B}_i}(\bar{x}) \wedge \psi_i ),
\]
where for all $i \in [m]$, $\mathcal{B}_i \in \mathcal{T}_{k,r}$ and 
$\psi_i$ is a Boolean combination of sentences of the form (ii) in Theorem~\ref{thm-gaifman} for some $r \leq 7^{\nu}$.
This directly yields the following corollary:
\begin{corollary} \label{coro-gaifmanAppendix}
From a given $d$ and an FO-formula $\phi(x_1, \ldots, x_k)$ with $\mathsf{qr}(\phi) = \nu$,
one can compute a sequence $(\mathcal{B}_1, \psi_1, \ldots, \mathcal{B}_m, \psi_m)$ for some $m = m(d,|\phi|)$
with the following properties:
\begin{enumerate}[(i)]
\item \label{coro-gaifman-(i)} For every $i \in [m]$ there is an $r \leq 7^{\nu}$ such that 
$\mathcal{B}_i \in \mathcal{T}_{k,r}$ and $\psi_i$ is a Boolean combination of sentences of the form (ii) in Theorem~\ref{thm-gaifman}.
\item  \label{coro-gaifman-(ii)} For every degree-$d$ bounded structure $\RS$ and all $\bar{a} \in \RS^k$ we have:
$\RS \models \phi(\bar{a})$ if and only if there is an $i \in [m]$ such that $\RS \models \psi_i$ and 
$\bar a$ is a $\mathcal{B}_i$-tuple.
\end{enumerate}
\end{corollary}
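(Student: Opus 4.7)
The plan is to derive the corollary as an easy consequence of Gaifman's theorem (Theorem~\ref{thm-gaifman}) together with the finiteness of the set $\mathcal{T}_{k,r}$ of $(k,r)$-neighborhood types in the degree-$d$ bounded setting. First I would apply Theorem~\ref{thm-gaifman} to $\phi(x_1,\ldots,x_k)$, obtaining (effectively) a logically equivalent formula $\chi(\bar{x})$ that is a Boolean combination of formulas of two shapes: local formulas $\psi^{\bar{x},r}$ (type (i)) and sentences $\theta$ (type (ii)), for some $r\le 7^{\nu}$. The set $\mathcal{T}_{k,r}$ of $(k,r)$-neighborhood types for degree-$d$ bounded structures is finite, computable, and depends only on $d$, $k$ and $r$, hence only on $d$ and $|\phi|$.

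Next I would eliminate the local type-(i) subformulas. The key observation is that for a local formula $\psi^{\bar{x},r}$ (all quantifiers restricted to the $r$-sphere around the free variables), whether $\RS\models\psi^{\bar{x},r}(\bar{a})$ depends \emph{only} on the isomorphism type of $\neighborhoodStr{\RS}{r}{\bar a}$, i.e., only on the $(k,r)$-neighborhood type of $\bar a$. Thus, for each occurring $\psi^{\bar{x},r}$ I enumerate all $\mathcal{B}\in\mathcal{T}_{k,r}$ and test (by brute-force model checking on the fixed finite structure $\mathcal{B}$, which is effective and depends only on $d$ and $|\phi|$) whether $\mathcal{B}\models\psi^{\bar{x},r}(c_1,\ldots,c_k)$ when the $c_i$ are interpreted by the sphere centers. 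Replacing $\psi^{\bar{x},r}$ with the finite disjunction $\bigvee_{\mathcal{B}\in T}\psi^{\mathcal{B}}(\bar{x})$ over the set $T$ of these $\mathcal{B}$ yields a formula equivalent to $\chi$ on all degree-$d$ bounded structures, in which every free-variable subformula is of the form $\psi^{\mathcal{B}}(\bar{x})$ for some $\mathcal{B}\in\mathcal{T}_{k,r}$.

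Now I would bring the resulting formula into disjunctive normal form with respect to the atomic pieces $\psi^{\mathcal{B}}(\bar{x})$ and the sentences of type (ii). Since for any $\bar{a}\in\RS^k$ there is exactly one $\mathcal{B}\in\mathcal{T}_{k,r}$ with $\RS\models\psi^{\mathcal{B}}(\bar{a})$, any disjunct containing either two distinct positive $\psi^{\mathcal{B}_1},\psi^{\mathcal{B}_2}$ or a positive and a negated $\psi^{\mathcal{B}}$ that are inconsistent can be dropped, while any negated atom $\neg\psi^{\mathcal{B}}(\bar{x})$ can be replaced by the positive disjunction over $\mathcal{T}_{k,r}\setminus\{\mathcal{B}\}$. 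After distributing and simplifying in this way, every disjunct takes the form $\psi^{\mathcal{B}_i}(\bar{x})\wedge\psi_i$ with $\mathcal{B}_i\in\mathcal{T}_{k,r}$ and $\psi_i$ a Boolean combination of type-(ii) sentences, giving property~\eqref{coro-gaifman-(i)}. Property~\eqref{coro-gaifman-(ii)} follows since the whole transformation preserves equivalence over degree-$d$ bounded structures.

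The only mildly subtle point is bounding $m$ and justifying effectiveness: the number of $(k,r)$-neighborhood types in $\mathcal{T}_{k,r}$ is bounded by a computable function of $d,k,r$ (since each $\mathcal{B}$ has at most $k\cdot d^{r+1}$ elements), all relevant radii satisfy $r\le 7^{\nu}$, and the whole construction (applying Gaifman, enumerating $\mathcal{T}_{k,r}$, testing $\mathcal{B}\models\psi^{\bar{x},r}$, DNF-conversion, pruning) is carried out on objects whose size depends only on $d$ and $|\phi|$. Hence $m=m(d,|\phi|)$ as required; this bookkeeping is the main technical, but entirely routine, part.
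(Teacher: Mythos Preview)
Your proposal is correct and follows essentially the same route as the paper: apply Gaifman's theorem, replace each local formula $\psi^{\bar x,r}$ by a disjunction of neighborhood-type formulas $\psi^{\mathcal B}(\bar x)$ (using that truth of a relativized formula depends only on the $(k,r)$-neighborhood type), then normalize and exploit mutual exclusivity of the $\psi^{\mathcal B}$ to obtain a disjunction of conjuncts $\psi^{\mathcal B_i}(\bar x)\wedge\psi_i$. The paper phrases the normalization step slightly differently (it speaks of conjunctive normal form together with the observation that $\psi^{\mathcal B_1}\wedge\psi^{\mathcal B_2}$ is unsatisfiable for $\mathcal B_1\neq\mathcal B_2$), but the outcome and the underlying idea are the same as your DNF-and-prune argument.
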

\medskip
Based on Gaifman's locality theorem, Seese \cite{Seese96} proved the following:

\begin{theorem}[degree-bounded model checking]\label{degreeBoundedMCLemma}
For a given FO-sentence $\phi$ and a degree-$d$ bounded structure $\RS$, one can check
in time $|\RS| \cdot f(d,|\phi|)$, whether $\RS \models \phi$.
\end{theorem}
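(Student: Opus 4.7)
The plan is to follow Seese's original argument via Gaifman's locality theorem, using the tools already set up in the excerpt. Since $\phi$ is a sentence (the case $k=0$ of Corollary~\ref{coro-gaifmanAppendix}), I first reduce it to a Boolean combination of local sentences. Concretely, by applying Theorem~\ref{thm-gaifman}, $\phi$ is logically equivalent to a Boolean combination of sentences of the form
\[
\theta_{r,q,\theta}\;:=\;\exists z_1 \cdots \exists z_q\,\Bigl(\bigwedge_{1 \leq i < j \leq q} \neg\delta_{2r}(z_i,z_j)\;\wedge\;\bigwedge_{1 \leq i \leq q} \theta^{z_i,r}\Bigr),
\]
with $r \le 7^{\nu}$ and $q \le \nu$, and this Boolean combination can be computed from $\phi$ alone in time depending only on $|\phi|$. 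It therefore suffices, for each such $\theta_{r,q,\theta}$, to decide $\RS \models \theta_{r,q,\theta}$ in time $|\RS|\cdot f(d,|\phi|)$ and then evaluate the Boolean combination.

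For the preprocessing, I would compute, for each element $a \in \RS$, its $r$-neighborhood type in $\mathcal{T}_r$ for each relevant radius $r \le 7^\nu$. Since $\RS$ is degree-$d$ bounded, $\neighborhoodStr{\RS}{r}{a}$ has at most $d^{r+1}$ elements, so it can be constructed by BFS around $a$ and then compared against the finite list $\mathcal{T}_r$ of representative types, all in time $f(d,|\phi|)$. Summing over $a \in \RS$ gives total time $|\RS|\cdot f(d,|\phi|)$. In the same sweep I produce, for each $r$-type $\mathcal{B}\in\mathcal{T}_r$, the set $L_{\mathcal B}\subseteq\RS$ of all $\mathcal{B}$-nodes (or just the counter $|L_{\mathcal B}|$).

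For the evaluation of $\theta_{r,q,\theta}$, note that whether $\RS\models\theta^{z,r}(a)$ depends only on $\neighborhoodStr{\RS}{r}{a}$, hence only on the $r$-neighborhood type of $a$. So I precompute, in constant time, the set $T_\theta \subseteq \mathcal{T}_r$ of types whose associated neighborhood satisfies $\theta^{z,r}$ at its sphere center, and let $C = \bigcup_{\mathcal B \in T_\theta} L_{\mathcal B}$ be the candidate set. Then $\RS\models\theta_{r,q,\theta}$ iff $C$ contains $q$ elements pairwise at distance $>2r$ in $\mathcal{G}(\RS)$. The key combinatorial observation is the greedy dichotomy: if $|C| \ge q\cdot d^{2r+1}$, then picking candidates one at a time and discarding the at most $d^{2r+1}$ elements within distance $2r$ of each pick always yields $q$ well-separated elements, so $\theta_{r,q,\theta}$ holds; otherwise $|C| < q\cdot d^{2r+1} = f(d,|\phi|)$, and we can brute-force check all $\binom{|C|}{q}$ subsets, verifying pairwise $2r$-distance constraints in constant time by running a depth-$2r$ BFS from one endpoint (whose size is bounded by $d^{2r+1}$).

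The main technical obstacle is ensuring that every step is truly linear in $|\RS|$. The only place a naive implementation could blow up is in computing neighborhood types and matching them against $\mathcal{T}_r$: I would handle the isomorphism check by fixing canonical encodings of structures of size $\le d^{r+1}$ and comparing via table lookup, which keeps the per-element cost a function of $d$ and $|\phi|$ only. Once the preprocessing table $(|L_{\mathcal B}|)_{\mathcal B}$ and the candidate sets are built, evaluating each local sentence $\theta_{r,q,\theta}$ and combining the truth values takes only time depending on $d$ and $|\phi|$, giving the claimed $|\RS|\cdot f(d,|\phi|)$ bound.
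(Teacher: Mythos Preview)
The paper does not actually prove this theorem; it simply cites it as Seese's result and moves on. Your proposal is a correct reconstruction of the standard Seese argument via Gaifman locality, and in fact the paper later uses exactly your greedy dichotomy (enumerate up to $q\cdot d^{2r+1}$ candidates; if that many exist, peel off $q$ pairwise $2r$-separated elements; otherwise brute-force) when it needs the analogous step in the compressed setting (Appendix~\ref{sec:GaifmanReductionCompressed}). So your approach is both correct and aligned with what the paper does elsewhere.
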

Recall that we have fixed at the beginning of Section~\ref{sec:uncompressed} 
 a relational signature $\mathcal{R} = \{R_i : i \in I\}$, constants $d \geq 2$ and $\nu$, a degree-$d$ bounded structure $\RS = (U, (R_i)_{1 \leq i \leq l})$ over the signature $\mathcal{R}$, and an FO-formula $\phi(x_1, \ldots, x_k)$ over the signature $\mathcal{R}$ with $\mathsf{qr}(\phi) = \nu$. Moreover, our goal is to enumerate the set $\phi(\RS)$ after a linear time preprocessing in constant delay.
By Corollary~\ref{coro-gaifmanAppendix} we can compute 
a sequence $(\mathcal{B}_1, \psi_1, \ldots, \mathcal{B}_m, \psi_m)$ for some $m = m(d,|\phi|)$ with the properties
from point \eqref{coro-gaifman-(i)} in the corollary and then enumerate all $k$-tuples $\bar{a}$ such that
$\RS \models \psi_i$ and $\bar{a}$
is a $\mathcal{B}_i$-tuple for some $i \in [m]$. Using Theorem~\ref{degreeBoundedMCLemma} one can check in time 
$|\RS| \cdot f(d,|\phi|)$ which of the $\psi_i$ are true in $\RS$. We then keep only those 
$\mathcal{B}_i \in \mathcal{T}_{k,r}$ such that $\RS \models \psi_i$. Moreover, w.l.o.g. the remaining 
$\mathcal{B}_i$ are different and for each of them we enumerate all $\mathcal{B}_i$-tuples. This will not create duplicates
(since each $k$-tuple has a unique $(k,r)$-neighborhood type).

\subsection{Reduction to tuples with pairwise different components}
\label{appendix-pairwise-diff}

By the previous discussion, it suffices to enumerate for a fixed 
$\mathcal{B} \in \mathcal{T}_{k,r}$ the set of all $\mathcal{B}$-tuples with delay $f(|\mathcal{B}|)$ after preprocessing time
$|\RS| \cdot f(|\mathcal{B}|)$. Recall that in Section~\ref{sec:uncompressed} we assumed that the sphere center constant $c_i$ 
is interpreted by the element $i \in [k]$ in $\mathcal{B}$.
But this means that all $\mathcal{B}$-tuples have pairwise different components. 

We next show that this is not a real restriction.
More precisely, we claim that w.l.o.g.~we can replace our initial FO-formula $\phi(x_1, \ldots, x_k)$ by 
\begin{equation} \label{eq-phi'} 
\phi' \ = \ \phi \wedge \bigwedge_{i,j \in [k], i \neq j} x_i \neq x_j\,, 
\end{equation}
so that the entries $a_i$ in each enumerated output tuple $(a_1, \ldots, a_k)$ are pairwise different.
For this, the enumeration algorithm runs in an outermost loop through all equivalence relations $\equiv$ on $[k]$. 
For each such equivalence relation $\equiv$ and $i \in [k]$ let $\mu_{\equiv}(i) = \min([i]_{\equiv})$ be the minimal representative 
from the equivalence class $[i]_\equiv = \{ j \in [k] : i \equiv j\}$ 
and let $I_\equiv = \mu_{\equiv}([k])$ be the image of $\mu_{\equiv}$.
Then we define the FO-formula $\phi_{\equiv}$ by replacing every free occurrence of a variable 
$x_i$ in $\phi$ by $x_{\mu_{\equiv}(i)}$ and start the enumeration of the output tuples for the formula
\[
\phi_\equiv' \ = \ \phi_{\equiv} \wedge \bigwedge_{i,j \in I_\equiv, \, i \neq j} x_i \neq x_j.
\]
From every enumerated $|I_\equiv|$-tuple we obtain a tuple in $\phi(\RS)$ by duplicating entries suitably (according to the equivalence relation $\equiv$).
This shows that it suffices to consider a formula of the form $\phi'$ in \eqref{eq-phi'}.

We now apply Corollary~\ref{coro-gaifmanAppendix} to the new formula $\phi'$. By the additional disequality constraints $x_i \neq x_j$
in $\phi'$ we can assume that in the resulting $(k,r)$-neighborhood types $\mathcal{B}_1, \ldots, \mathcal{B}_m$ 
the elements $a_i$ ($i \in [k]$) that interpret the sphere center constants $c_i$ are pairwise different. We can
then w.l.o.g. assume that $c_i$ is interpreted by the element $i \in [k]$ (recall that the universe of   
a $(k,r)$-neighborhood type is of the form  $[\ell]$).

\subsection{Proofs of Lemmas~\ref{correctnessLemma1}~and~\ref{correctnessLemma2}}
\label{lemma-2-3}

As in Section~\ref{sec:uncompressed} we fix a $(k,r)$-neighborhood type $\mathcal{B}$ (with the sphere centers $1, \ldots, k$) that splits into the 
connected components $\mathcal{C}^{\mathcal{B}}_1,
\ldots, \mathcal{C}^{\mathcal{B}}_m$. Moreover, for every $\mathcal{B}'$-node $a \in \RS$ we fix
an isomorphism $\pi_a : \mathcal{B}' \to \neighborhoodStr{\RS}{\bigR}{a}$.

\begin{lemma}[Lemma~\ref{correctnessLemma1} restated]
 If $\Lambda = (\mathcal{B}_1, \sigma_1,  \ldots, \mathcal{B}_m, \sigma_m)$ is a consistent factorization of $\mathcal{B}$
 and $\bar{b} \in \RS^m$ is admissible for $\Lambda$ then $\Lambda(\bar{b})$ is a $\mathcal{B}$-tuple.
 \end{lemma}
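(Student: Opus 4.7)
The proof naturally splits into two parts: (a) show that the $r$-neighborhood of $\Lambda(\bar b)$ in $\RS$ decomposes into $m$ disjoint pieces, one per connected component, and (b) show that each piece realizes the corresponding $\mathcal{C}_i^{\mathcal{B}}$.

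For (a), I would use admissibility directly. Since $\dom(\sigma_i) = D_i$ and the $D_i$ partition $[k]$, the union $\Lambda(\bar b) = t_{b_1,\sigma_1} \sqcup \cdots \sqcup t_{b_m,\sigma_m}$ is a complete $k$-tuple, and its $r$-sphere in $\RS$ equals $\bigcup_i \sphereStr{\RS}{r}{t_{b_i,\sigma_i}}$. The condition $\dist_{\RS}(t_{b_i,\sigma_i}, t_{b_j,\sigma_j}) > 2r+1$ from admissibility implies both that these individual $r$-spheres are pairwise disjoint and that any two nodes from different spheres are at distance $> 1$ in $\RS$, so there are no inter-component edges. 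Hence $\neighborhoodStr{\RS}{r}{\Lambda(\bar b)}$ is the disjoint union of the $\neighborhoodStr{\RS}{r}{t_{b_i,\sigma_i}}$, and the only thing left to verify is an isomorphism per component.

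For (b), fix $i \in [m]$ and use the isomorphism $\pi_{b_i} : \mathcal{B}_i \to \neighborhoodStr{\RS}{\bigR}{b_i}$. The claim I would prove is $\pi_{b_i}^{-1}(\sphereStr{\RS}{r}{t_{b_i,\sigma_i}}) = \sphereStr{\mathcal{B}_i}{r}{\sigma_i}$. Once this is established, then together with the consistency requirement $\neighborhoodStr{\mathcal{B}_i}{r}{\sigma_i} \simeq \mathcal{C}_i^{\mathcal{B}}$ and the fact that $\pi_{b_i}$ sends each $\sigma_i(j)$ to $t_{b_i,\sigma_i}(j)$ (so sphere center constants match), we obtain $\neighborhoodStr{\RS}{r}{t_{b_i,\sigma_i}} \simeq \mathcal{C}_i^{\mathcal{B}}$. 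Combining over all $i$ yields $\neighborhoodStr{\RS}{r}{\Lambda(\bar b)} \simeq \mathcal{B}$, so $\Lambda(\bar b)$ is a $\mathcal{B}$-tuple.

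\textbf{Main obstacle.} The delicate step is proving the equality $\pi_{b_i}^{-1}(\sphereStr{\RS}{r}{t_{b_i,\sigma_i}}) = \sphereStr{\mathcal{B}_i}{r}{\sigma_i}$, because distances in the induced substructure $\neighborhoodStr{\RS}{\bigR}{b_i}$ may be strictly larger than distances in $\RS$, so an isomorphism on the $\bigR$-neighborhood does not automatically transport $r$-distances. The way I would handle this is to show that any shortest $\RS$-path of length at most $r$ between $\pi_{b_i}(y)$ and some $t_{b_i,\sigma_i}(j)$ stays inside $\sphereStr{\RS}{\bigR}{b_i}$. This reduces to bounding $\dist_{\RS}(t_{b_i,\sigma_i}(j), b_i) \le \bigR - r$, which holds because $\pi_{b_i}$ preserves distances inside $\mathcal{B}_i$ and, via $\neighborhoodStr{\mathcal{B}_i}{r}{\sigma_i} \simeq \mathcal{C}_i^{\mathcal{B}}$, the distance in $\mathcal{B}_i$ from $\sigma_i(n_i) = 1$ to any $\sigma_i(j)$ is at most the sphere-center-to-sphere-center distance in $\mathcal{C}_i^{\mathcal{B}}$, which is bounded by $(k-1)(2r+1) = \bigR - r$. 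The choice of $\bigR$ in \eqref{def-bigR} was made precisely so that this slack of $r$ remains. With this bound, any node on a shortest path of length $\le r$ from such a point to $t_{b_i,\sigma_i}(j)$ lies within $\bigR$ of $b_i$, so path length is preserved between $\RS$ and $\neighborhoodStr{\RS}{\bigR}{b_i}$, and the desired equality of preimages follows by transport along $\pi_{b_i}$ in both directions.
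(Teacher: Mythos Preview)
Your proposal is correct and follows the same overall approach as the paper's proof: decompose the $r$-neighborhood of $\Lambda(\bar b)$ using admissibility, then identify each piece with $\mathcal{C}_i^{\mathcal{B}}$ via the consistency condition. The paper's own proof is considerably terser: it simply asserts that $\neighborhoodStr{\mathcal{B}_i}{r}{\sigma_i} \simeq \mathcal{C}^{\mathcal{B}}_i$ implies $\neighborhoodStr{\RS}{r}{t_{b_i,\sigma_i}} \simeq \mathcal{C}^{\mathcal{B}}_i$ without justification, whereas you correctly isolate and resolve the subtlety that $\pi_{b_i}$ a priori preserves only distances inside $\neighborhoodStr{\RS}{\bigR}{b_i}$, not $\RS$-distances. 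Your bound $\dist_{\RS}(t_{b_i,\sigma_i}(j), b_i) \le (k-1)(2r+1) = \bigR - r$ is exactly what is needed to ensure that shortest $\RS$-paths of length $\le r$ from the tuple entries stay inside the $\bigR$-ball, and it is precisely the computation the paper records just before introducing consistent factorizations (``every node in $\mathcal{C}^{\mathcal{B}}_i$ has distance at most $r + (k-1)(2r+1) = \bigR$ from $n_i$'').
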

 
 \begin{proof} 
Let $D_i = \mathcal{C}^{\mathcal{B}}_i \cap [k]$ and let $\bar{b} = (b_1, \ldots, b_m)$. We have $\neighborhoodStr{\RS}{\bigR}{b_i} \simeq \mathcal{B}_i$.
Moreover, $\Lambda(\bar{b}) = t_{b_1,\sigma_1} \sqcup t_{b_2,\sigma_2} \sqcup \cdots
 \sqcup t_{b_m,\sigma_m}$, where $t_{b_i,\sigma_i}(j) = \pi_{b_i}(\sigma_i(j))$ for all $j \in D_i$.
 Since $\neighborhoodStr{\mathcal{B}_i}{r}{\sigma_i} \simeq \mathcal{C}^{\mathcal{B}}_i$ we obtain 
 $\neighborhoodStr{\RS}{r}{t_{b_i,\sigma_i}} \simeq \mathcal{C}^{\mathcal{B}}_i$
 for all $i \in [m]$. Moreover, from \eqref{disjointness} it follows that the  $\neighborhoodStr{\RS}{r}{t_{b_i,\sigma_i}}$ are 
 the connected components of 
 $\neighborhoodStr{\RS}{r}{t_{b_1,\sigma_1} \sqcup \cdots\sqcup t_{b_m,\sigma_m}} = \neighborhoodStr{\RS}{r}{\Lambda(\bar{b})}$.
 Hence, $\Lambda(\bar{b})$ is a $\mathcal{B}$-tuple.
 \end{proof}
 
 \begin{lemma}[Lemma~\ref{correctnessLemma2} restated]
 If $\bar{a} \in \RS^k$ is a $\mathcal{B}$-tuple then there are a unique consistent factorization $\Lambda$ of 
$\mathcal{B}$ and a unique $m$-tuple $\bar{b} \in \RS^m$ that is admissible for $\Lambda$
 and such that $\bar{a}  = \Lambda(\bar{b})$.
\end{lemma}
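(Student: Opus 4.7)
\medskip
\noindent\textbf{Proof plan for Lemma~\ref{correctnessLemma2}.}
The plan is to exhibit $\Lambda$ and $\bar{b}$ explicitly, verify that they satisfy the four required conditions (consistency, admissibility, $\Lambda(\bar{b}) = \bar{a}$), and then argue uniqueness essentially by reversing the construction. Throughout, I use the fixed decomposition $\mathcal{B} = \mathcal{C}_1^{\mathcal{B}} \sqcup \cdots \sqcup \mathcal{C}_m^{\mathcal{B}}$ into connected components, together with $D_i = \mathcal{C}_i^{\mathcal{B}} \cap [k]$ and $n_i = \min(D_i)$, which are determined solely by $\mathcal{B}$.

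\emph{Existence.} First I would set $b_i := a_{n_i}$, define $\mathcal{B}_i \in \mathcal{T}_{\bigR}$ to be the $\bigR$-neighborhood type of $b_i$ in $\RS$, and let $\sigma_i : D_i \to \mathcal{B}_i$ be given by $\sigma_i(j) := \pi_{b_i}^{-1}(a_j)$ for $j \in D_i$, where $\pi_{b_i}:\mathcal{B}_i \to \neighborhoodStr{\RS}{\bigR}{b_i}$ is the fixed isomorphism. This is well defined because every $a_j$ with $j \in D_i$ has distance at most $\bigR$ from $a_{n_i} = b_i$ (by the same argument the paper uses to justify the choice of $\bigR$, applied inside the component $\mathcal{C}_i^{\mathcal{B}}$ viewed through the isomorphism $\neighborhoodStr{\RS}{r}{\bar{a}} \simeq \mathcal{B}$). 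Then $\sigma_i(n_i) = \pi_{b_i}^{-1}(b_i) = 1$, and $\neighborhoodStr{\mathcal{B}_i}{r}{\sigma_i} \simeq \neighborhoodStr{\RS}{r}{(a_j)_{j \in D_i}} \simeq \mathcal{C}_i^{\mathcal{B}}$, the last isomorphism being the restriction of $\neighborhoodStr{\RS}{r}{\bar{a}} \simeq \mathcal{B}$ to the relevant component. Thus $\Lambda := (\mathcal{B}_1,\sigma_1,\ldots,\mathcal{B}_m,\sigma_m)$ is a consistent factorization of $\mathcal{B}$. Moreover, $t_{b_i,\sigma_i}(j) = \pi_{b_i}(\sigma_i(j)) = a_j$ for $j \in D_i$, which immediately gives $\Lambda(\bar{b}) = \bar{a}$.

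\emph{Admissibility.} This is the step I expect to be the main obstacle, because the disjointness condition \eqref{disjointness} talks about distance in $\RS$, whereas the separation of the components $\mathcal{C}_i^{\mathcal{B}}$ is a statement inside $\neighborhoodStr{\RS}{r}{\bar{a}}$. The plan is as follows: $\ran(t_{b_i,\sigma_i}) = \{a_j : j \in D_i\}$, so assume for contradiction that there exist $i \neq i'$, $j \in D_i$, $j' \in D_{i'}$ with $\dist_{\RS}(a_j,a_{j'}) \leq 2r+1$. Pick a shortest such path $\pi$ in $\mathcal{G}(\RS)$. Every vertex of $\pi$ is within distance $\lfloor (2r+1)/2 \rfloor \leq r$ of one of its two endpoints, hence within distance $r$ of some entry of $\bar{a}$, and therefore lies in $\sphereStr{\RS}{r}{\bar{a}}$. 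Consequently $\pi$ is a path in $\neighborhoodStr{\RS}{r}{\bar{a}}$ connecting $a_j$ and $a_{j'}$, contradicting the fact that under the isomorphism $\neighborhoodStr{\RS}{r}{\bar{a}} \simeq \mathcal{B}$ these nodes lie in distinct connected components.

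\emph{Uniqueness.} Let $\Lambda' = (\mathcal{B}'_1,\sigma'_1,\ldots,\mathcal{B}'_m,\sigma'_m)$ and $\bar{b}' \in \RS^m$ be any pair with $\bar{b}'$ admissible for $\Lambda'$ and $\Lambda'(\bar{b}') = \bar{a}$. The number of components $m$ and the sets $D_i$ are determined by $\mathcal{B}$ alone, and the definition of a consistent factorization forces $\dom(\sigma'_i) = D_i$. Evaluating the $n_i$-th coordinate yields $a_{n_i} = \Lambda'(\bar{b}')(n_i) = t_{b'_i,\sigma'_i}(n_i) = \pi_{b'_i}(\sigma'_i(n_i)) = \pi_{b'_i}(1) = b'_i$, so $b'_i = b_i$. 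Then $\mathcal{B}'_i$, being the $\bigR$-neighborhood type of $b'_i = b_i$ in $\RS$, equals $\mathcal{B}_i$, and in particular $\pi_{b'_i} = \pi_{b_i}$ (the \emph{fixed} chosen isomorphism). Finally, for every $j \in D_i$, $a_j = t_{b_i,\sigma'_i}(j) = \pi_{b_i}(\sigma'_i(j))$, so $\sigma'_i(j) = \pi_{b_i}^{-1}(a_j) = \sigma_i(j)$, proving $\Lambda' = \Lambda$ and $\bar{b}' = \bar{b}$.
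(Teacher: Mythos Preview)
Your proof is correct and follows essentially the same route as the paper's own proof: you choose $b_i = a_{n_i}$, set $\sigma_i = \pi_{b_i}^{-1} \circ (a_{\cdot}\restriction_{D_i})$, and verify consistency, admissibility, $\Lambda(\bar b)=\bar a$, and uniqueness in the same order and by the same identifications. The only difference is that you spell out the admissibility step (the distance bound \eqref{disjointness}) via an explicit shortest-path argument showing every vertex lies in $\sphereStr{\RS}{r}{\bar a}$, whereas the paper simply asserts $\dist_{\RS}(t_i,t_j) > 2r+1$ as an immediate consequence of $\mathcal{B}$ being the $(k,r)$-neighborhood type of $\bar a$; your argument is a correct unpacking of that assertion (and implicitly uses the arity-$\le 2$ assumption so that edges survive in the induced substructure).
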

 
 \begin{proof} 
Let $D_i = \mathcal{C}^{\mathcal{B}}_i \cap [k]$ and let $n_i \in \min(D_i)$. Moreover, let $\bar{a} = (a_1, \ldots, a_k)$
 and define the partial $k$-tuple $t_i : [k] \to \RS$ with domain $D_i$ by $t_i(j) = a_j$ for $j \in D_i$.
Since $\mathcal{B}$ is the $(k,r)$-neighborhood type of $\bar{a}$, we have
$\mathcal{C}^{\mathcal{B}}_i \simeq \neighborhoodStr{\RS}{r}{t_i} \subseteq \neighborhoodStr{\RS}{\bigR}{a_{n_i}}$ and
$\dist_{\RS}(t_i, t_j) > 2r+1$
for all $i, j  \in [m]$ with $i \neq j$. 
 
 We take $\bar{b} = (b_1, \ldots, b_m)$ with $b_i := a_{n_i}$ for $i \in [m]$.
 Moreover, for every $i \in [m]$, let $\mathcal{B}_i$ be the $\bigR$-neighborhood type of $b_i$ in $\RS$  and 
 define the partial $k$-tuple $\sigma_i : [k] \to \mathcal{B}_i$ with domain $D_i$ by $\sigma_i(j) = \pi_{b_i}^{-1}(a_j)$ for $j \in D_i$.
Moreover, define $\Lambda = (\mathcal{B}_1, \sigma_1,  \ldots, \mathcal{B}_m, \sigma_m)$. The definition of $\sigma_i$ implies
 $t_i = t_{b_i,\sigma_i}$ for all $i \in [m]$ and hence $\bar{a} = \Lambda(\bar{b})$.
 Note that $\sigma_i(n_i) = \pi^{-1}_{b_i}(a_{n_i}) = \pi^{-1}_{b_i}(b_i) = 1$ and
 $\neighborhoodStr{\mathcal{B}_i}{r}{\sigma_i} \simeq \neighborhoodStr{\RS}{r}{t_i} \simeq \mathcal{C}^{\mathcal{B}}_i$.
 Hence, $\Lambda$ is a consistent factorization of 
$\mathcal{B}$. Moreover, $(b_1, \ldots, b_m)$ is admissible for $\Lambda$:
$b_i$ is a $\mathcal{B}_i$-node and 
$\dist_{\RS}(t_{b_i,\sigma_i}, t_{b_{j},\sigma_{j}}) =  \dist_{\RS}(t_i,t_j) > 2r+1$
for all $i, j  \in [m]$ with $i \neq j$.

To show uniqueness, assume that $\Lambda' = (\mathcal{B}'_1, \sigma'_1,  \ldots, \mathcal{B}'_m, \sigma'_m)$ is a consistent factorization of 
$\mathcal{B}$ and $\bar{b}' \in \RS^m$ is admissible for $\Lambda'$
 such that $\bar{a}  = \Lambda(\bar{b}')$.
 First, notice that $\bar{b}' = (a_{n_1}, \ldots, a_{n_m}) = \bar{b}$.
 Moreover, we must have $\mathcal{B}_i = \mathcal{B}'_i$ (both are 
 the $(k,r)$-neighborhood type of $a_{n_i}$). Finally, for all $i \in [m]$ and $j \in D_j$ we have 
$\sigma_i(j) = \pi_{b_i}^{-1}(a_j) = \pi_{b'_i}^{-1}(a_j) = \sigma'_i(j)$ for all $j \in D_i$.
\end{proof}

\subsection{Proof of Lemma~\ref{delayBoundLemma}}\label{sec:delayBoundAppendix}

We now formally prove that the enumeration algorithm described in Section~\ref{sec:enumAlgo} has a delay of $f(d, |\phi|)$. 

Recall from Section~\ref{sec:enumAlgo} that we do this under the assumption that we can check whether a stack $s$ is admissible in time $f(d, |\phi|)$. This will be explained in detail for the more general SLP-compressed setting in Section~\ref{sec-enum-compressed}. Hence, in the following, we only count the number of steps in the
DFLR-traversal, i.e., the total number of iterations of the for-loop in Algorithm~\ref{mainEnumAlgoNonCompressed} over all 
recursion levels. In the following, we use all notations introduced in Section~\ref{sec:enumAlgo}.

We first make the following general observation: Let $s = a_1 \cdots a_{\ell}$ with $\ell < m$ be an admissible stack. If, for some element $a_{\ell + 1} \in L_{\ell + 1}$, the stack $s a_{\ell+1}$ is not admissible, then 
\[ \dist_{\RS}(t_{a_i,\sigma_i}, t_{a_{\ell+1},\sigma_{\ell+1}}) \leq 2r+1 \]
for some $i \in [\ell]$.   
This implies $\dist(a_i, a_{\ell + 1}) \leq 2\bigR+2r+1$. The number of elements $a \in \RS$ such that
$\dist(a_i, a) \leq 2\bigR+2r+1$ for some $i \in [\ell]$ is bounded by $\ell \cdot d^{2\bigR+2r+2} \leq k \cdot d^{2\bigR+2r+2}$.
Hence, we obtain the following observation:

\begin{observation} \label{observation1}
For a fixed admissible stack $s$ of length $\ell < m$, the list $L_{\ell+1}$ contains at most $\ell \cdot d^{2\bigR+2r+2} \leq k \cdot d^{2\bigR+2r+2}$
 elements $a$ such that $sa$ is not admissible.
\end{observation}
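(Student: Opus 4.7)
The plan is to justify the counting bound given in the text by making explicit the triangle-inequality argument and the degree-bound estimate. First I would unpack the definition of admissibility applied to the extended stack $sa_{\ell+1}$: the stack $s = a_1 \cdots a_\ell$ is admissible by assumption, so any violation in $sa_{\ell+1}$ must involve the new element at position $\ell+1$. Concretely, $sa_{\ell+1}$ being non-admissible means there exist $i \in [\ell]$ and indices $j \in \dom(\sigma_i)$, $j' \in \dom(\sigma_{\ell+1})$ with $\dist_{\RS}(\pi_{a_i}(\sigma_i(j)), \pi_{a_{\ell+1}}(\sigma_{\ell+1}(j'))) \le 2r+1$, which is exactly $\dist_{\RS}(t_{a_i,\sigma_i}, t_{a_{\ell+1},\sigma_{\ell+1}}) \le 2r+1$.

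Next I would use the fact that the $\bigR$-neighborhood type $\mathcal{B}_i$ of $a_i$ has all nodes within distance $\bigR$ of its center (which $\pi_{a_i}$ sends to $a_i$), and similarly for $a_{\ell+1}$. Hence every element of $\ran(t_{a_i,\sigma_i})$ is within distance $\bigR$ of $a_i$ in $\RS$ and every element of $\ran(t_{a_{\ell+1},\sigma_{\ell+1}})$ is within distance $\bigR$ of $a_{\ell+1}$. Combining this with the previous inequality via the triangle inequality yields $\dist_{\RS}(a_i, a_{\ell+1}) \le 2\bigR + 2r + 1$.

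Now I would bound, for a fixed $i \in [\ell]$, the number of elements $a \in \RS$ with $\dist_{\RS}(a_i, a) \le 2\bigR + 2r + 1$. Since $\RS$ is degree-$d$ bounded and $d \ge 2$, the closed ball of radius $2\bigR+2r+1$ around $a_i$ has cardinality at most $\sum_{s=0}^{2\bigR+2r+1} d^s \le d^{2\bigR+2r+2}$. Every non-admissible $a \in L_{\ell+1}$ lies in such a ball for at least one $i \in [\ell]$, so the total count is at most $\ell \cdot d^{2\bigR+2r+2}$, and the second inequality follows from $\ell < m \le k$.

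The argument is essentially a routine triangle-inequality plus degree-bound computation, so no real obstacle arises. The only subtlety worth stating explicitly is the reduction from ``$sa$ non-admissible'' to ``some pair involving $a$ violates the distance constraint,'' which uses admissibility of $s$ to rule out violations among $a_1, \ldots, a_\ell$ themselves.
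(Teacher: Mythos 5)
Your proof is correct and follows essentially the same route as the paper's: reduce non-admissibility of $sa$ to a distance violation $\dist_{\RS}(t_{a_i,\sigma_i}, t_{a_{\ell+1},\sigma_{\ell+1}}) \le 2r+1$ involving the new element, lift this to $\dist_{\RS}(a_i,a_{\ell+1}) \le 2\bigR+2r+1$ via the radius-$\bigR$ bound on the neighborhood types, and then count via the degree bound. You merely spell out the triangle-inequality and ball-size steps that the paper leaves implicit.
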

In order to bound the delay, let us first consider the time that elapses from the call {\sf extend}$(\varepsilon)$  to the first output 
(or the termination of the algorithm). Recall that the lists $L_1, \ldots, L_q$ are short. Hence, there are at most $c := (k \cdot d^{2\bigR+2r+2})^{q+1} \leq k^{k+1} d^{(2\bigR+2r+2)(k+1)}$ 
many stacks $s$ with $|s| \leq q$. Hence, either the algorithm terminates after at most $c$ many steps in the DFLR-traversal or it reaches after at most $c$ steps a stack $s a$ where $|s|=q$ and $a$ is the first element of the list $L_{q+1}$. Then, by Observation~\ref{observation1}, the algorithm reaches after $(m-q) \cdot k \cdot d^{2\bigR+2r+2} \leq k^2 \cdot d^{2\bigR+2r+2}$ further steps an admissible stack of length $m$ (which is then written to the output). 

Now, let us assume that we output an admissible stack $s$ of length $m$. We have to bound the number of steps of the algorithm until the next output is generated (or until the termination of the algorithm if no further output is produced). The algorithm will first reduce the length of the stack $s$ (by doing return-statements in line 4). First assume that the length of the stack does not go below $q+1$.
Then, Observation~\ref{observation1} shows that the next output happens after at most $2 (m-q) \cdot k \cdot d^{2\bigR+2r+2}$ DFLR-traversal steps, i.e., 
$(m-q) \cdot k \cdot d^{2\bigR+2r+2}$ steps in the first phase, where the stack shrinks, followed by $(m-q) \cdot k \cdot d^{2\bigR+2r+2}$ steps where the stack
grows. On the other hand, if the stack length reaches $q$ (this can only happen after at most $(m-q) \cdot k \cdot d^{2\bigR+2r+2}$ steps) then either the algorithm terminates
after $c$ more steps or after $c + (m-q) \cdot k \cdot d^{2\bigR+2r+2}$ more steps the next output occurs.
Hence, the delay can be bounded by $f(d, |\phi|)$.

\section{Omitted details from Section~\ref{S hier}}\label{SLPDefinitionsAppendix}

Consider an SLP $D=(\mathcal R, N, S, P)$ and a nonterminal $A \in N$. In Section~\ref{S hier}, we have explained on an intuitive level how the structure $\eval(A)$ is defined. Let us now define this more formally. 
We start with a few general definitions.

Let $\equiv$ be an equivalence relation on a set $U$. 
Then, for $a\in U$, $[a]_\equiv = \{ b \in U : a \equiv b\}$
denotes the equivalence class containing $a$. With 
$[U]_\equiv$ we denote the set of all equivalence classes.
With $\pi_\equiv : U \to [U]_\equiv$ we denote the function
with $\pi_\equiv(a) = [a]_\equiv$ for all $a\in U$.

For a relational structure $\RS = (U, (R_i)_{i \in I})$ and 
an equivalence relation $\equiv$ on $U$ we define the
quotient $\RS/_{\equiv} \;= ( [U]_{\equiv}, (R_i/_{\equiv})_{i \in I})$, where 
\[ R_i/_{\equiv} \;=  \{ (\pi_\equiv(v_1),\ldots,\pi_\equiv(v_{\alpha_i})) : (v_1,\ldots,v_{\alpha_i}) \in R_i\}.\]
For two structures $\RS_1 = (U_1, (R_{i,1})_{i \in I})$ and 
$\RS_1 = (U_2, (R_{i,2})_{i \in I})$ over the same signature $\mathcal R$ 
and with disjoint universes $U_1$ and $U_2$, respectively, we 
define the disjoint union 
\[\RS_1 \oplus \RS_2 = (U_1 \uplus U_1, (R_{i,1} \uplus
R_{i,2})_{i \in I}).
\]
With these definitions we can now define the  $\rank(A)$-pointed structure $\eval(A)$ for $A \in N$ as follows.
Assume that $E_A = \{ (A_i,\sigma_i) : i \in [n]\}$. 
Recall that this is a multiset, i.e., we may have $(A_i,\sigma_i) = (A_j,\sigma_j)$ for 
 $i \neq j$. 
 Assume now that
 $\eval(A_i) = (\RS_i,\tau_i)$
is already defined for every $i \in [n]$. 
Then $$\eval(A) = ( (\RS_A \oplus \RS_1 \oplus \cdots \oplus \RS_n)/_{\equiv},
\pi_{\equiv} \circ \tau_A),$$ where 
$\equiv$ is the smallest equivalence relation on the universe of 
$\RS_A \oplus \RS_1 \oplus \cdots \oplus \RS_n$, which contains 
$\{ (\sigma_i(j), \tau_i(j)) : i \in [n], j \in [\rank(A_i)] \}$.
Note that $\eval(A) = (\RS_{A}, \tau_{A})$ if $E_A$ is empty. 

Next, let us take a closer look at our general assumption that the signature $\mathcal{R}$ only contains relation symbols of arity at most two
(see the end of Section~\ref{sec-FOL}). We have seen in Appendix~\ref{sec:generalDefinitionsAppendix} how we can transform an FO-formula $\psi$ and a relational structure $\RS$ over a signature $\mathcal{R}$ (with relation symbols of arbitrary arity) into an FO-formula $\psi'$ and a relational structure $\RS'$ over a signature $\mathcal{R}'$ with all relation symbols of arity at most two, such that $\psi(\RS) = \psi'(\RS')$ and, furthermore, $\mathsf{qr}(\psi') =  \mathsf{qr}(\psi)+1$.

For the compressed setting, we must argue that this reduction can be performed in linear time also in the case where $\RS$ is compressed by an SLP, which is rather simple:
From an SLP $D$ (whose signature $\mathcal{R}$
contains relation symbols of arbitrary arity) one can construct a new SLP $D'$ (whose signature contains only 
relation symbols of arity at most two) such that $\eval(D') = \eval(D)'$. For this it suffices to replace every structure 
$\RS_A$ (for $A$ a nonterminal of $D$) by $\RS'_A$ (where $\RS'_A$ is obtained from $\RS_A$ by the construction described in Appendix~\ref{sec:generalDefinitionsAppendix}). Therefore, it is justified also in the compressed setting to consider only SLPs that produce structures with relations of arity at most two.

\section{Omitted details from Section~\ref{sec-enum-compressed}} 

\subsection{The Gaifman locality reduction in the compressed setting}\label{sec:GaifmanReductionCompressed}

In this section, we explain why the reduction described in Appendix~\ref{sec-gaifman} is also applicable in the compressed setting.
As in the uncompressed setting we want to reduce the enumeration of the result set $\phi(\eval(D))$ for a given 
FO-formula $\phi(x_1, \ldots, x_k)$ and an SLP $D$ to the enumeration of all $\mathcal{B}$-tuples in $\eval(D)^k$ for some 
$(k,r)$-neighborhood type $\mathcal{B}$ with $r \leq 7^{\nu}$ (where $\nu = \mathsf{qr}(\phi)$).

In the uncompressed setting, we apply Corollary~\ref{coro-gaifmanAppendix} to the formula $\phi(x_1, \ldots, x_k)$
and obtain the list $(\mathcal{B}_1, \psi_1, \ldots, \mathcal{B}_m, \psi_m)$ with the properties described in 
Corollary~\ref{coro-gaifmanAppendix}. We then 
check for each sentence $\psi_i$, whether it is true in $\RS$ (using 
 Theorem~\ref{degreeBoundedMCLemma}). We keep only those neighborhood types $\mathcal{B}_i$
 such that $\RS \models \psi_i$ holds and then run the enumeration algorithm for those $\mathcal{B}_i$.
 
For the SLP-compressed setting we have no linear time model checking algorithm (for a fixed FO-sentence) in the spirit of  
Theorem~\ref{degreeBoundedMCLemma} available; only an {\sf NL}-algorithm for apex SLPs is known \cite{Lohrey2012}.
With some effort, one can adapt this {\sf NL}-algorithm to  
 to obtain a linear time algorithm, but there is an easier solution that we explain
below.

Assume that we have already an algorithm that enumerates after preprocessing time $|D| \cdot f(d, \phi)$ in
delay $f(d, \phi)$ all $\mathcal{B}$-tuples from $\eval(D)^k$ for a fixed $(k,r)$-neighborhood type $\mathcal{B}$ (for some $r \leq 7^{\nu}$).
We then use this algorithm for 
\begin{enumerate}[(i)]
\item testing whether $\eval(D) \models \psi_i$ for a $\psi_i$ from 
Corollary~\ref{coro-gaifmanAppendix} and, in case this holds,
\item for enumerating all $\mathcal{B}_i$-tuples 
from $\eval(D)^k$. 
\end{enumerate}
Let us show how we one achieve (i).
Consider one of the sentences $\psi_i$  from Corollary~\ref{coro-gaifmanAppendix}.
It is a Boolean combination of sentences of the form
\begin{equation} \label{eq-local-sentence}
\exists z_1 \cdots \exists z_q : \bigwedge_{1 \leq i < j \leq q} \neg\delta_{2r}(z_i,z_j) \wedge \bigwedge_{1 \leq i \leq q} \theta^{z_i,r}
\end{equation}
with $r \leq 7^{\nu}$, $q \leq k+\nu$ and a sentence $\theta$.
So, we have to check whether there exist at least $q$ many disjoint $r$-neighborhoods (of single nodes) for which the FO-property
$\theta^{z,r}$ holds. From $\theta^{z,r}$ one can compute a finite list $\mathcal{B}'_1, \ldots, \mathcal{B}'_s$ (for some 
$s = f(d, |\phi|)$) of $r$-neighborhood types such that $\theta^{z,r}(z)$ is equivalent to the fact that $z$ has one of the $r$-neighborhood types
$\mathcal{B}'_1, \ldots, \mathcal{B}'_s$.
Then, by using the assumed enumeration algorithm, we enumerate the set $L'_1$ of all $\mathcal{B}'_1$-nodes in $\eval(D)$, then the set $L'_2$ of all $\mathcal{B}'_2$-nodes in $\eval(D)$ and so on, until we either have enumerated $q \cdot d^{2r+1}$ many nodes in total, or the enumeration of $L'_s$ terminates. Hence, in total, we enumerate at most $q \cdot d^{2r+1}$ many nodes. The preprocessing for these enumerations takes time $|D| \cdot f(d,\phi)$ and all enumerations take time $f(d,|\phi|)$ in total. If we actually enumerate $q \cdot d^{2r+1}$ many nodes, then, since $\eval(D)$ is degree-$d$ bounded, there must exist $q$ disjoint $r$-neighborhoods in $\eval(D)$ with a type from $\{\mathcal{B}'_1, \ldots, \mathcal{B}'_s\}$. Indeed, if we have a list of $q \cdot d^{2r+1}$  enumerated nodes, then we pick the first element $a_1$ and 
remove from the list all nodes from the sphere $\sphereStr{\eval(D)}{2r}{a_1}$.
Then we pick the first element $a_2$ from the remaining list and remove all nodes from the sphere $\sphereStr{\eval(D)}{2r}{a_2}$.
We proceed like this and pick elements $a_3, a_4, \ldots$ until the list is empty. Obviously, the picked nodes have pairwise distance of at least $2r+1$ and have all an $r$-neighborhood type $\mathcal{B}'_i$ for some $i \in [s]$. Moreover, since we remove in each step at most $d^{2r+1}$ nodes (namely the sphere $\sphereStr{\eval(D)}{2r}{a_i}$), it is guaranteed that we select at least $q$ such nodes (recall that we start with
a list of $q \cdot d^{2r+1}$ nodes). Hence, the sentence \eqref{eq-local-sentence} holds in $\eval(D)$.  

If, on the other hand, we enumerate strictly fewer than $q \cdot d^{2r+1}$ nodes, then we can 
check for all pairwise distinct enumerated nodes $a,b$ whether the distance in $\mathcal{G}(\eval(D))$ between $a$ and $b$ is at most $2r+1$.
It is shown in Appendix~\ref{sec:disjointnessCheck} that this can be done in time $f(d,r)$; see also Remark~\ref{remark disjoint spheres}.

\subsection{Omitted details from Section~\ref{sec:expansions}}\label{sec:algoPrelimAppendix}

\subsubsection{Top-level nodes and subsets} 

In this section we introduce some concepts that are useful for the proof of Lemma~\ref{bijectionLemma} in the next section.

For a nonterminal $A$, we say that a node $a=(p,v)$ of $\eval(D)$ is \emph{$A$-produced}, if the path $p$ contains $A$. The unique $S$-to-$A$ prefix of $p$ is then called the \emph{$A$-origin} of $a$. Intuitively speaking, every $A$-produced node of $\eval(D)$ is produced by an occurrence of the nonterminal $A$ in the derivation tree, and if two such nodes have the same $A$-origin, then they are produced by the same occurrence of $A$. We say that $a$ is \emph{top-level $A$-produced} if $p$ ends in $A$. By the definition of $D$-representations 
(see Section~\ref{sec node reps}), this means that $v \in \RS_A \setminus \ran(\tau_A)$. Moreover, for every node $a$ of $\eval(D)$ there is a unique nonterminal $A$ such that $a$ is a top-level $A$-produced. 

A subset $U$ of $\eval(D)$ is an \emph{$A$-subset}, if all its nodes are $A$-produced with the same $A$-origin, which is then also called the $A$-origin of $U$. 
If additionally there is at least one node of $U$ that is top-level $A$-produced, then we say that $U$ is a \emph{top-level $A$-set}. Note that this is equivalent to the existence of at least one node in $U$ with a $D$-representation $(p, v)$, where
$p$ ends in $A$ and $v$ is an internal node of $\RS_A$.

\begin{lemma} \label{lemma-top-level-unique}
If $U$ is a top-level $A$-subset, then there is no other nonterminal $B$ such that $U$ is a top-level $B$-subset.
\end{lemma}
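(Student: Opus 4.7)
The plan is a direct proof by contradiction, unwinding the definitions of $A$-origin and top-level production. Suppose, for the sake of contradiction, that $U$ is simultaneously a top-level $A$-subset and a top-level $B$-subset with $A \neq B$. I will derive that the $S$-to-$A$ paths and $S$-to-$B$ paths coincide, which is impossible.

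The first step is to fix witnesses. Choose $a \in U$ that is top-level $A$-produced, so its $D$-representation has the form $(p_A, v_A)$ where $p_A$ is an $S$-to-$A$ path and $v_A \in \RS_A \setminus \ran(\tau_A)$. By definition, the $A$-origin of $a$ is $p_A$ itself. Symmetrically, choose $b \in U$ that is top-level $B$-produced, with $D$-representation $(p_B, v_B)$ where $p_B$ ends at $B$ and $v_B \in \RS_B \setminus \ran(\tau_B)$; its $B$-origin is $p_B$.

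The second and key step is to invoke the defining property of being an $A$-subset (respectively $B$-subset): all members share a common $A$-origin (respectively $B$-origin). Since $U$ is an $A$-subset, $b$ must be $A$-produced and its $A$-origin must equal $p_A$. By the definition of $A$-origin, this forces $p_A$ to be the unique $S$-to-$A$ prefix of $p_B$; in particular $p_A$ is a prefix of $p_B$. (Uniqueness here uses that $A$ occurs at most once along any path in the acyclic $\DAG(D)$.) By the symmetric argument applied to $a$ and the $B$-subset property, $p_B$ is a prefix of $p_A$.

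The final step is to conclude: two paths that are prefixes of each other must be equal, so $p_A = p_B$, and since $p_A$ ends in $A$ while $p_B$ ends in $B$, we obtain $A = B$, contradicting our assumption. There is no real obstacle here; the only subtlety to handle carefully is the uniqueness of the $A$-origin, which is immediate from the acyclicity of $\to_D$ (a nonterminal cannot appear twice on a $\DAG(D)$-path).
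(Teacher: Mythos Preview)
Your proof is correct and follows essentially the same approach as the paper's own proof: both exploit that the $A$-origin and $B$-origin are prefixes of the paths of all nodes in $U$, then use the top-level witnesses to derive a contradiction. The only cosmetic difference is that the paper fixes the common origins $p$ and $q$ first, observes that one must be a proper prefix of the other (WLOG $q$ a prefix of $p$), and concludes no node can be top-level $B$-produced; you instead pick the two witnesses directly and obtain mutual prefixing, hence equality $p_A=p_B$, which is impossible since they end at distinct nonterminals.
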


\begin{proof}
Assume that for some nonterminals $A, B \in N$ with $A \neq B$, $U$ is both a top-level $A$-subset with $A$-origin $p$ and a top-level $B$-subset with $B$-origin $q$. This means that $p$ is a proper prefix of $q$ or the other way around; without loss of generality, we assume that $p = q p'$, where $p'$ is a $B$-to-$A$ path. This means that every node in $U$ has the $A$-origin $q p'$, which implies that there is no top-level $B$-produced node
in $U$; a contradiction to the assumption that $U$ is a top-level $B$-structure.
\end{proof}
Note that even though for every subset $U$ of $\eval(D)$ there is a nonterminal $A$ such that $U$ is an $A$-subset (for example, every subset is an $S$-subset), $U$ is not necessarily a top-level $A$-subset for some nonterminal $A$. If, however, $U$ is a \emph{connected} subset (in the sense that the substructure induced by $U$ is a connected set in the graph $\eval(D)$, then we have the following:

\begin{lemma} \label{lemma-topmost}
Let $D$ be an apex SLP and $U$ be a connected subset of $\eval(D)$. 
Then there is a unique $A \in N$ such that $U$ is a top-level $A$-subset.
\end{lemma}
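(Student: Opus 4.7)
The plan is to dispose of uniqueness by invoking Lemma~\ref{lemma-top-level-unique} and concentrate on existence. Writing the $D$-representation of each $a\in U$ as $(p_a,v_a)$, I would set $q$ to be the longest common prefix of the family $\{p_a : a\in U\}$. Since every $p_a$ is an $S$-path and the child function $\gamma$ of $\DAG(D)$ is deterministic in its edge indices, $q$ is itself a well-defined $S$-path ending at a unique nonterminal $A$. It then suffices to exhibit some $a_0\in U$ with $p_{a_0}=q$: every other $b\in U$ has $p_b$ extending $q$ by the definition of longest common prefix, so $U$ is an $A$-subset with common $A$-origin $q$, and $a_0$ witnesses that $U$ is top-level $A$.

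I would establish the existence of such $a_0$ by contradiction. Assume every $p_a$ strictly extends $q$, and write $p_a = q\,e_a\,r_a$, where $e_a$ is the edge immediately following $q$ and $r_a$ is the remainder. Because $q$ was chosen as the \emph{longest} common prefix, the edges $e_a$ cannot all coincide, so there are $a,b\in U$ with $e_a\neq e_b$. Connectedness of $U$ provides a path $a=u_0,u_1,\dots,u_n=b$ in $\mathcal{G}(\eval(D))$ with every $u_i\in U$, and since the first edge after $q$ differs at the endpoints, some consecutive pair $u_i,u_{i+1}$ must satisfy $e_{u_i}\neq e_{u_{i+1}}$. Both $p_{u_i}$ and $p_{u_{i+1}}$ then extend $q$ and diverge at the edge immediately after $q$, so their longest common prefix is exactly $q$ and the prefix distance between them is at least $2$ (each path contributes at least one edge outside $q$).

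The apex hypothesis on $D$ enters through the locality property recorded just after the definition of $D$-representations in Section~\ref{sec node reps}: any two nodes at graph distance $\zeta$ in $\mathcal{G}(\eval(D))$ have prefix distance at most $\zeta$ between their path components. Applied to the adjacent pair $u_i,u_{i+1}$, this gives prefix distance at most $1$, contradicting the lower bound of $2$ just derived. Hence some $p_{a_0}=q$, which completes the argument. The main obstacle is exactly this single step: without the apex assumption, prefix distance and graph distance are no longer comparable and the whole argument breaks. The trivial case $|U|=1$ is handled directly by taking $A$ to be the last nonterminal on $p_a$.
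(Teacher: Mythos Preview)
Your proof is correct and follows essentially the same strategy as the paper: take the longest common prefix of the path components, show by contradiction (using connectedness and the apex property) that some element of $U$ realizes this prefix exactly, and invoke Lemma~\ref{lemma-top-level-unique} for uniqueness. The only difference is cosmetic: where the paper explicitly argues that a connecting path between two nodes whose paths diverge after $q$ must pass through a node of the form $(q,\sigma(\ell))$ with $\sigma(\ell)$ internal by apex, you instead package this step as an appeal to the prefix-distance locality property stated in Section~\ref{sec node reps}.
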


\begin{proof}
Let $(p_1, v_1), \ldots, (p_n,v_n)$ be the $D$-representations of all the elements of $U$ and let $p$ be the longest common prefix of the initial paths $p_1, \ldots, p_n$. Assume that the initial path $p$ is an $S$-to-$A$ path.
Hence, all nodes of $U$ are $A$-produced with the same $A$-origin $p$, which means that $U$ is an $A$-subset. We next show that it is also a top-level $A$-subset, i.e., we show that $U$ contains a node with $D$-representation $(p, v)$, where 
$v$ is an internal node of $\RS_A$. 

In order to get a contradiction, assume that such a node does not exist in $U$. By the choice of $p$ as a longest common prefix, there 
must exist $i,j \in [n]$ with $i \neq j$ such that $p_i = p \, k_i \,B_i \, q_i$ and
 $p_j = p\, k_j \,B_j \,q_j$ with $k_i \neq k_j$. 
 Let $(B_i, \sigma_i)$ be the $k_i^{\text{th}}$ reference in $E_A$ and 
 $(B_j, \sigma_j)$ be the $k_j^{\text{th}}$ reference in $E_A$.
Since $U$ is connected there is a path from $(p \, k_i \,B_i \,q_i, v_i)$ to $(p \,k_j\, B_j\, q_j, v_j)$ in the undirected graph
$\mathcal{G}(\eval(D))$. This path must contain a node $(p, \sigma_i(\ell))$ from $\eval(A)$ (as well as a node $(p, \sigma_j(\ell))$ from $\eval(A)$). Since $D$ is an apex SLP, $\sigma_i(\ell)$ is internal in $\RS_A$, which gives the desired contradiction.

This means that $U$ is a top-level $A$-subset and by Lemma~\ref{lemma-top-level-unique}, $U$ cannot be a top-level $B$-subset for some $B \in N$ with $A \neq B$.
\end{proof}
Note that if $U$ is the universe of a valid substructure (see Section~\ref{sec:expansions}) of $\expan_{\zeta}(A)$, 
then for every $S$-to-$A$ path $p$, the set of $\eta_p(U)$ is a top-level $A$-subset.

Also note that if $U$ is a subset of $\eval(A)$, $p$ is an $S$-to-$A$ path in $\eval(D)$, and $\eta_p(U)$ is a top-level $A$-subset, then
the substructure of $\eval(A)$ induced by $U$ is isomorphic to the substructure of $\eval(D)$ induced by $\eta_p(U)$. The reason is
that $U$ contains none of the contact nodes of $\eval(A)$. In particular, if $\mathcal{A}$ is a valid substructure of the expansion
$\expan_{\zeta}(A)$ then, by definition, it is an induced substructure, and we obtain $\mathcal{A} \simeq \eta_p(\mathcal{A})$.

\subsubsection{Proof of Lemma~\ref{bijectionLemma}}

Let us recall that 
\begin{itemize}
\item $\mathsf{S}^{\mathcal{B}}_1 = \{ (p,a) : \exists A \in N : \text{ $p$ is an $S$-to-$A$ path in $\DAG(D)$ and $a$ is a valid $\mathcal{B}$-node in $\expan(A)$} \}$,
\item $\mathsf{S}^{\mathcal{B}}_2 = \{ b \in \eval(D) : b \text{ is a $\mathcal{B}$-node} \}$,
\end{itemize}
and, for all $(p,a) \in \mathsf{S}^{\mathcal{B}}_1$,  $h(p, a) = \eta_p(a) = (pq, v)$ if  $p$ is an $S$-to-$A$ path in $\DAG(D)$ and $(q,v)$ is the $A$-representation of the valid $\mathcal{B}$-node  $a \in \expan(A)$. 

We prove that $h$ is a bijection from $\mathsf{S}^{\mathcal{B}}_1$ to $\mathsf{S}^{\mathcal{B}}_2$
 (i.\,e., Lemma~\ref{bijectionLemma}) by proving the following three lemmas.

\begin{lemma} \label{lemma-exp1} 
For every $(p,a) \in \mathsf{S}^{\mathcal{B}}_1$ we have $h(p,a) \in \mathsf{S}^{\mathcal{B}}_2$.
\end{lemma}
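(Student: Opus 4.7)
The plan is to show that $\eta_p(a)$ has $\bigR$-neighborhood type $\mathcal{B}$ in $\eval(D)$, so that $h(p,a) = \eta_p(a) \in \mathsf{S}^{\mathcal{B}}_2$. Set $\mathcal{A} := \neighborhoodStr{\expan(A)}{\bigR}{a}$. By the definition of a valid $\mathcal{B}$-node, $\mathcal{A} \simeq \mathcal{B}$ and $\mathcal{A}$ is a valid substructure of $\expan(A)$, i.e., $\mathcal{A} \cap \mathsf{Bd}_{A,2\bigR+1} = \emptyset$ while $\mathcal{A} \cap \mathsf{In}_A \neq \emptyset$. It thus suffices to establish an isomorphism $\neighborhoodStr{\eval(D)}{\bigR}{\eta_p(a)} \simeq \mathcal{A}$.

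First I would prove $\neighborhoodStr{\expan(A)}{\bigR}{a} = \neighborhoodStr{\eval(A)}{\bigR}{a}$ as induced substructures of $\eval(A)$. Since $\mathcal{A} \cap \mathsf{In}_A \neq \emptyset$, we have $\dist_{\expan(A)}(a, \mathsf{In}_A) \leq \bigR$ and hence $\dist_{\eval(A)}(a, \mathsf{In}_A) \leq \bigR$. Any $b$ with $\dist_{\eval(A)}(a,b) \leq \bigR$ then satisfies $\dist_{\eval(A)}(b, \mathsf{In}_A) \leq 2\bigR$, and the same bound holds for every node on any shortest $a$-to-$b$ path in $\eval(A)$. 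Such nodes all lie in $\expan(A) = \expan_{2\bigR+1}(A)$, so the whole path is contained in $\expan(A)$, yielding $\dist_{\expan(A)}(a,b) \leq \bigR$. The converse inclusion is immediate, so the two $\bigR$-balls coincide.

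Second, because $\mathcal{A}$ is a valid substructure of $\expan(A)$, the property recalled in Section~\ref{sec:expansions} supplies two facts: $\eta_p$ restricts to an isomorphism between $\mathcal{A}$ and the induced substructure $\eta_p(\mathcal{A})$ of $\eval(D)$, and every neighbor in $\mathcal{G}(\eval(D))$ of a node in $\eta_p(\mathcal{A})$ lies in $\eta_p(\expan(A))$. What remains is the set equality $\eta_p(\mathcal{A}) = \neighborhoodStr{\eval(D)}{\bigR}{\eta_p(a)}$. The inclusion $\subseteq$ follows because $\eta_p$ preserves adjacency, so an $a$-to-$b$ path of length $\leq \bigR$ in $\mathcal{A}$ (equivalently in $\eval(A)$, by the first step) produces a walk of the same length from $\eta_p(a)$ to $\eta_p(b)$ in $\eval(D)$. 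For $\supseteq$, given $c$ with $\dist_{\eval(D)}(\eta_p(a), c) \leq \bigR$, I would fix a shortest path $\eta_p(a) = c_0, c_1, \ldots, c_k = c$ in $\eval(D)$ and prove inductively that $c_i = \eta_p(b_i)$ for some $b_i \in \mathcal{A}$ with $\dist_{\eval(A)}(a, b_i) \leq i$. The step from $i$ to $i+1$ uses the locality property to lift $c_{i+1}$ to some $b_{i+1} \in \expan(A)$, the edge $c_i c_{i+1}$ to a corresponding edge $b_i b_{i+1}$ in $\eval(A)$, and finally the first step to conclude $b_{i+1} \in \mathcal{A}$.

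The main obstacle lies in the inductive step, namely in verifying that an adjacency in $\eval(D)$ between $c_i = \eta_p(b_i)$ and a new node $c_{i+1}$ pulls back faithfully to an adjacency $b_i b_{i+1}$ in $\eval(A)$. This is where both the apex condition on $D$ and the exclusion of contact nodes from $\mathcal{A}$ (built into the validity requirement) become essential: without apex, nonterminal structures attached at a contact node could create shortcut edges in $\eval(D)$ that have no counterpart in $\eval(A)$, and without contact-node exclusion, $\eta_p$ would fail to be injective on the relevant patch so that the identification $c_i = \eta_p(b_i)$ would no longer be canonical.
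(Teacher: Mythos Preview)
Your proposal is correct and follows essentially the same approach as the paper: both arguments establish the set equality $\eta_p(\sphereStr{\expan(A)}{\bigR}{a}) = \sphereStr{\eval(D)}{\bigR}{\eta_p(a)}$ by using the validity of $\mathcal{A}$ to keep all relevant nodes away from contact nodes and to force neighbors of $\eta_p(\mathcal{A})$ into $\eta_p(\expan(A))$, then pulling back edges from $\eval(D)$ to $\eval(A)$. The only differences are organizational---you phrase the reverse inclusion as a forward induction along a shortest path (the paper uses a last-node-in-the-image contradiction) and you separate out the identity $\neighborhoodStr{\expan(A)}{\bigR}{a} = \neighborhoodStr{\eval(A)}{\bigR}{a}$ as an explicit preliminary step.
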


\begin{proof}
Assume that $p$ is an $S$-to-$A$ path in $\DAG(D)$. Hence, $a$ is a node of $\expan(A)$ such that 
$\neighborhoodStr{\expan(A)}{\bigR}{a} \simeq \mathcal{B}$ and 
$\neighborhoodStr{\expan(A)}{\bigR}{a}$ is a valid substructure of $\expan(A)$.
To prove the lemma, we show that
\begin{equation}
 \label{eq-lemma-exp1}
\neighborhoodStr{\eval(D)}{\bigR}{\eta_{p}(a)} = \eta_{p}(\neighborhoodStr{\expan(A)}{\bigR}{a}).
\end{equation}
Since $\eta_{p}(\neighborhoodStr{\expan(A)}{\bigR}{a}) \simeq \neighborhoodStr{\expan(A)}{\bigR}{a} \simeq \mathcal{B}$
this yields $\neighborhoodStr{\eval(D)}{\bigR}{h(p,a)} = \neighborhoodStr{\eval(D)}{\bigR}{\eta_{p}(a)} \simeq \mathcal{B}$.

Let $(q,v)$ be the $A$-representation of $a$. Thus the path $q$ starts with $A$.
The node $a = (q,v)$ is not of the form $(A, \tau_A(i))$ for some $i \in [\rank(A)]$ (otherwise we would have $a \in \mathsf{Bd}_{A,2\bigR+1}$
and $\neighborhoodStr{\expan(A)}{\bigR}{a}$ would not be valid). Therefore, we have
 $\eta_{p}(a) = (pq,v) \in \eval(D)$. 
Since both substructures in \eqref{eq-lemma-exp1} are induced substructures of $\eval(D)$, it suffices to show that $\eta_{p}(\sphereStr{\expan(A)}{\bigR}{a}) = \sphereStr{\eval(D)}{\bigR}{\eta_{p}(a)}$.
The inclusion $\eta_{p}(\sphereStr{\expan(A)}{\bigR}{a}) \subseteq \sphereStr{\eval(D)}{\bigR}{\eta_{p}(a)}$ holds since 
every $\mathcal{G}(\expan(A))$-path (i.e., a path in the undirected graph $\mathcal{G}(\expan(A))$)
is mapped by $\eta_p$ to a corresponding $\mathcal{G}(\eval(D))$-path of the same length.

Assume now that there is a node $b' \in \sphereStr{\eval(D)}{\bigR}{\eta_{p}(a)} \setminus \eta_{p}(\sphereStr{\expan(A)}{\bigR}{a})$.
We will deduce a contradiction.
There is a $\mathcal{G}(\eval(D))$-path of length at most $\rho$ from $\eta_{p}(a)$ to $b'$. W.l.o.g.~we
can assume that all nodes
along this path except for the final node $b'$ belong to the set $\eta_{p}(\sphereStr{\expan(A)}{\bigR}{a})$. In particular, 
there is a node $\eta_{p}(b) \in \eval(D)$ with $b  \in \sphereStr{\expan(A)}{\bigR}{a}$ such that  there is a $\mathcal{G}(\eval(D))$-path
$\Pi_{\eta_{p}(a),\eta_{p}(b)}$ of length at most $\bigR-1$ from $\eta_{p}(a)$ to $\eta_{p}(b)$ and all nodes along this path belong to the set
$\eta_{p}(\sphereStr{\expan(A)}{\bigR}{a})$. Moreover, there is an edge in $\mathcal{G}(\eval(D))$
between $\eta_{p}(b)$ and $b'$. 
The $\mathcal{G}(\eval(D))$-path $\Pi_{\eta_{p}(a),\eta_{p}(b)}$ yields a corresponding $\mathcal{G}(\expan(A))$-path $\Pi_{a,b}$
of the same length (hence, at most $\bigR-1$) from $a$ to $b$.\footnote{We use here the following general fact: If we have 
an edge $(a,b)$ in the graph $\mathcal{G}(\RS)$ for a structure $\RS$ such that $a$ and $b$ belong to a subset $V$ of $\RS$,
then $(a,b)$ is also an edge in the graph $\mathcal{G}(\mathcal{V})$, where $\mathcal{V}$ is the substructure induced by $V$. This statement
is true since we restrict to relational structures where all relations have arity at most two, but it is
wrong if we allow relations of arity 3 or larger and  $\mathcal{G}(\RS)$ would be the Gaifman graph of a structure $\RS$.
Take for instance the structure $\RS$ with three elements $a,b,c$ and the ternary relation $\{ (a,b,c) \}$.
In $\mathcal{G}(\RS)$ there is an edge between $a$ and $b$, but there is no such edge in $\mathcal{G}(\mathcal{V})$, where $\mathcal{V}$ is the substructure of $\RS$ induced by $\{a,b\}$.}
We therefore have 
$\dist_{\expan(A)}(a,b) \leq \bigR-1$.

We claim that $b' \in \eta_{p}(\expan(A))$: We have $b \in \neighborhoodStr{\expan(A)}{\bigR}{a}$ and $\neighborhoodStr{\expan(A)}{\bigR}{a}$ is a valid substructure of $\expan(A)$.
Moreover, if $\mathcal{C}$ is any valid substructure of $\expan(A)$ then 
all neighbors of $\eta_{p}(\mathcal{C})$ in $\mathcal{G}(\eval(D))$ belong to $\eta_{p}(\expan(A))$ (this is the crucial property
of valid substructures). 
Since $b'$ is a neighbor of $\eta_{p}(b)$ in $\mathcal{G}(\eval(D))$, we obtain $b' \in \eta_{p}(\expan(A))$.
Let $b' = \eta_{p}(c)$ with $c \in \expan(A)$. Then $(b,c)$ is an edge in $\mathcal{G}(\expan(A))$ and hence
$c \in \sphereStr{\expan(A)}{\bigR}{a}$, i.e., $b'  = \eta_p(c) \in \eta_{p}(\sphereStr{\expan(A)}{\bigR}{a})$, which is a contradiction.
\end{proof}

\begin{lemma} \label{lemma-exp2} 
$h : \mathsf{S}^{\mathcal{B}}_1  \to \mathsf{S}^{\mathcal{B}}_2$ is surjective.
\end{lemma}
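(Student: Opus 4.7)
The plan is: given $b \in \mathsf{S}^{\mathcal{B}}_2$, to construct $(p,a) \in \mathsf{S}^{\mathcal{B}}_1$ with $h(p,a)=b$. First I would consider $U = \sphereStr{\eval(D)}{\bigR}{b}$. Because $\neighborhoodStr{\eval(D)}{\bigR}{b} \simeq \mathcal{B}$ and every node of a $\bigR$-neighborhood type is at distance at most $\bigR$ from its sphere center, the induced substructure on $U$ is connected. By Lemma~\ref{lemma-topmost} there is then a unique nonterminal $A$ and a unique $S$-to-$A$ path $p$ such that $U$ is a top-level $A$-subset with $A$-origin $p$. Since $b \in U$, its $D$-representation has the form $(p q_b,v_b)$, and I set $a := (q_b,v_b) \in \eval(A)$; then $\eta_p(a)=b$ and $a$ is not a contact node of $\eval(A)$.

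Next I relate the $\bigR$-ball of $b$ in $\eval(D)$ to a corresponding subset of $\eval(A)$ via $\eta_p$. Let $V = \eta_p^{-1}(U)$. Writing $p = p' C i A$, any contact node $(A,\tau_A(j))$ of $\eval(A)$ is mapped by $\eta_p$ to a node with $D$-representation $(p' C,\sigma(j))$, which is not $A$-produced because $\DAG(D)$ is acyclic; hence such images do not lie in the $A$-subset $U$, and $V$ contains no contact nodes of $\eval(A)$. As in the proof of Lemma~\ref{lemma-exp1}, $\eta_p$ restricted to $V$ is an isomorphism from the induced substructure on $V$ in $\eval(A)$ onto the induced substructure on $U$ in $\eval(D)$, and the latter is isomorphic to $\mathcal{B}$. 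Combined with $\mathcal{B}$ being a $\bigR$-neighborhood type centered at $b$, this forces $\dist_{\eval(A)}(a,v) \leq \bigR$ for every $v \in V$. Moreover, since $U$ is a top-level $A$-subset, some $c=(p,v_c) \in U$ satisfies $v_c \in \RS_A \setminus \ran(\tau_A)$, and its preimage $(A,v_c)$ lies in $\mathsf{In}_A \cap V$, witnessing $\dist_{\eval(A)}(\mathsf{In}_A,a) \leq \bigR$; hence $a \in \expan(A)$ and $V \subseteq \expan(A)$.

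The main obstacle is to establish $\sphereStr{\expan(A)}{\bigR}{a} = V$, which amounts to ruling out that any contact node of $\eval(A)$ lies within $\eval(A)$-distance $\bigR$ of $a$. If some $(A,\tau_A(j))$ did, a witnessing path in $\eval(A)$ would lift edge by edge via $\eta_p$ to a path of length at most $\bigR$ in $\eval(D)$ ending at $\eta_p(A,\tau_A(j))=(p' C,\sigma(j))$; this would place the non-$A$-produced node $(p' C,\sigma(j))$ into $U$, contradicting that $U$ is an $A$-subset. Once contact-node-freeness is secured, any path of length at most $\bigR$ from $a$ in $\eval(A)$ stays among non-contact nodes, stays inside $\expan(A)$, and lifts via $\eta_p$ to a path of the same length in $\eval(D)$ starting at $b$; its endpoint therefore lies in $U$, and its $\eta_p$-preimage in $V$. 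Together with the inclusion $V \subseteq \sphereStr{\expan(A)}{\bigR}{a}$ established earlier, this yields the equality.

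Consequently, $\neighborhoodStr{\expan(A)}{\bigR}{a} \simeq \neighborhoodStr{\eval(D)}{\bigR}{b} \simeq \mathcal{B}$, and the contact-node-freeness together with the presence of $(A,v_c) \in \mathsf{In}_A$ in this neighborhood shows it to be a valid substructure of $\expan(A)$. Thus $a$ is a valid $\mathcal{B}$-node in $\expan(A)$, yielding $(p,a) \in \mathsf{S}^{\mathcal{B}}_1$ with $h(p,a)=b$, which proves surjectivity.
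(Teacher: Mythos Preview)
Your proof is correct and follows essentially the same route as the paper's: apply Lemma~\ref{lemma-topmost} to the (connected) sphere $\sphereStr{\eval(D)}{\bigR}{b}$ to find $A$ and the $A$-origin $p$, set $a = \eta_p^{-1}(b)$, and then verify that $a$ is a valid $\mathcal{B}$-node in $\expan(A)$ by exploiting the top-level property to rule out contact nodes and to locate a witness in $\mathsf{In}_A$. Your argument is in places more explicit than the paper's (you spell out the path-lifting for contact-node-freeness, whereas the paper just asserts it from the top-level property); the only small omission is that validity also requires $\neighborhoodStr{\expan(A)}{\bigR}{a}$ to miss the outer boundary $\{c : \dist_{\eval(A)}(\mathsf{In}_A,c)=2\bigR+1\}$, which you do not state but which follows immediately from $\dist_{\eval(A)}(\mathsf{In}_A,a)\le\bigR$ via the triangle inequality.
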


\begin{proof}
Let $b = (p',v)$ be a $\mathcal{B}$-node of $\eval(D)$. 
Let $A$ be the unique nonterminal $A$ such that $\sphereStr{\eval(D)}{\bigR}{b}$ is a top-level $A$-subset (see Lemma~\ref{lemma-topmost}), and let $p$ be the $A$-origin of $\sphereStr{\eval(D)}{\bigR}{b}$.
We obtain a factorization $p' = pq$ where $p$ ends in $A$ and $q$ starts in $A$. 
Let $a = (q,v) \in \eval(A)$, which is an $A$-representation. 
It remains to show that $(p,a) \in \mathsf{S}^{\mathcal{B}}_1$.

Since $\sphereStr{\eval(D)}{\bigR}{b}$ is a top-level $A$-subset, we have
$\eta_{p}(\neighborhoodStr{\eval(A)}{\bigR}{a}) = \neighborhoodStr{\eval(D)}{\bigR}{b} \simeq \mathcal{B}$ and
hence $\neighborhoodStr{\eval(A)}{\bigR}{a} \simeq \mathcal{B}$. 
We next show that $\neighborhoodStr{\eval(A)}{\bigR}{a}$ is a valid substructure of $\expan(A)$,
which implies $\neighborhoodStr{\expan(A)}{\bigR}{a} = \neighborhoodStr{\eval(A)}{\bigR}{a} \simeq \mathcal{B}$. 
This shows that $a$ is a valid $\mathcal{B}$-node in  $\expan(A)$.

In order to show that $\neighborhoodStr{\eval(A)}{\bigR}{a}$ is a valid substructure of $\expan(A)$, we first observe that since $\sphereStr{\eval(D)}{\bigR}{b}$ is a top-level $A$-subset, we have 
 \[ \sphereStr{\eval(A)}{\bigR}{a} \cap \{ (A, v) : v \in \ran(\tau_A) \} = \emptyset\] and 
there is a node $a_0 \in \sphereStr{\eval(A)}{\bigR}{a} \cap\mathsf{In}_A$.
 We have $\dist_{\eval(A)}(a_0,a) \leq \bigR$. Hence, for every node $c \in \sphereStr{\eval(A)}{\bigR}{a}$ we have
$\dist_{\eval(A)}(a_0,c) \leq 2\bigR$. Therefore, $\neighborhoodStr{\eval(A)}{\bigR}{a}$ is a substructure of $\expan(A)$ and
 $\neighborhoodStr{\eval(A)}{\bigR}{a} \cap \{ c \in \eval(A) : \dist_{\eval(A)}(\mathsf{In}_A, c) = 2\bigR+1 \} = \emptyset$.
\end{proof}

\begin{lemma} \label{lemma-exp3} 
$h : \mathsf{S}^{\mathcal{B}}_1  \to \mathsf{S}^{\mathcal{B}}_2$ is injective.
\end{lemma}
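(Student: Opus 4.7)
The plan is to assume $h(p_1, a_1) = h(p_2, a_2) = b$ for two pairs $(p_1, a_1), (p_2, a_2) \in \mathsf{S}^{\mathcal{B}}_1$ and deduce $(p_1, a_1) = (p_2, a_2)$. Write $p_i$ as an $S$-to-$A_i$ path and let $(q_i, v_i)$ be the $A_i$-representation of $a_i$, so that $b = (p_1 q_1, v_1) = (p_2 q_2, v_2)$ as $D$-representations. This immediately yields $v_1 = v_2$ and $p_1 q_1 = p_2 q_2$, so it suffices to show that $A_1 = A_2$ and $p_1 = p_2$; then $q_1 = q_2$ follows by cancellation and hence $a_1 = a_2$.

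To pin down $A_i$ and $p_i$ intrinsically from $b$, I would recycle the identity
\[
\sphereStr{\eval(D)}{\bigR}{b} \;=\; \eta_{p_i}\bigl(\sphereStr{\expan(A_i)}{\bigR}{a_i}\bigr)
\]
established in the proof of Lemma~\ref{lemma-exp1}. Because $a_i$ is a \emph{valid} $\mathcal{B}$-node in $\expan(A_i)$, its $\bigR$-neighborhood there is a valid substructure; the defining requirement $\mathcal{A} \cap \mathsf{In}_{A_i} \neq \emptyset$ therefore supplies a witness $(A_i, w_i)$ with $w_i \in \RS_{A_i} \setminus \ran(\tau_{A_i})$ inside that neighborhood. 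Pushing it through $\eta_{p_i}$ gives the top-level $A_i$-produced node $(p_i, w_i)$ in $\sphereStr{\eval(D)}{\bigR}{b}$ with $A_i$-origin $p_i$, and the displayed identity shows that every other element of this sphere is also $A_i$-produced with origin $p_i$ (using acyclicity of $\DAG(D)$, which forces $A_i$ to appear only once in each such path). Hence $\sphereStr{\eval(D)}{\bigR}{b}$ is a top-level $A_i$-subset of $\eval(D)$ with $A_i$-origin $p_i$, simultaneously for $i \in \{1,2\}$.

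The conclusion is then immediate from Lemma~\ref{lemma-topmost}: since spheres are connected subsets of $\eval(D)$, the nonterminal $A$ for which $\sphereStr{\eval(D)}{\bigR}{b}$ is a top-level $A$-subset is unique, so $A_1 = A_2$; the origin of an $A$-subset is the common $A$-origin of all of its elements, so $p_1 = p_2$; combined with $p_1 q_1 = p_2 q_2$ and $v_1 = v_2$ this forces $a_1 = a_2$. The only delicate point I foresee is checking that the valid-substructure definition genuinely supplies the internal witness in $\mathsf{In}_{A_i}$ (so that the sphere qualifies as \emph{top-level}, not merely as an $A_i$-subset), but this is precisely the role played by the non-emptiness clause in the definition of validity.
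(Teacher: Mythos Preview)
Your proposal is correct and follows essentially the same approach as the paper: both use the identity \eqref{eq-lemma-exp1} to identify $\sphereStr{\eval(D)}{\bigR}{b}$ with $\eta_{p_i}(\sphereStr{\expan(A_i)}{\bigR}{a_i})$, observe that this is a top-level $A_i$-subset (your argument via the $\mathsf{In}_{A_i}$-witness is exactly the content of the remark after Lemma~\ref{lemma-topmost}), and then invoke Lemma~\ref{lemma-topmost} to force $A_1 = A_2$. The only cosmetic difference is that the paper deduces $p_1 = p_2$ from $A_1 = A_2$ together with $p_1 q_1 = p_2 q_2$ and acyclicity of $\DAG(D)$, whereas you deduce it from uniqueness of the $A$-origin of the sphere; both are valid.
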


\begin{proof}
Let $(p_1, a_1), (p_2, a_2) \in \mathsf{S}^{\mathcal{B}}_1$ such that $\eta_{p_1}(a_1) = h(p_1,a_1) = h(p_2, a_2) = \eta_{p_2}(a_2)$.
We show that $(p_1, a_1) = (p_2, a_2)$. 
Assume that $p_i$ is an $S$-to-$A_i$ path in $\DAG(D)$ ($i \in \{1,2\}$).
Let $(q_i, v_i)$ be the $A_i$-representation of $a_i$. Hence, we have $(p_1 q_1, v_1) = (p_2 q_2, v_2)$, i.e., $v_1 = v_2$
and $p_1 q_1 = p_2 q_2$. In order to show that $p_1 = p_2$ and $q_1 = q_2$ it suffices to show that $A_1 = A_2$ (then
$p_1$ and $p_2$ end in the same nonterminal and therefore must be equal). 

Recall that $\eta_{p_i}(\sphereStr{\expan(A_i)}{\rho}{a_i})$ is a top-level $A_i$-subset of $\eval(D)$.
By \eqref{eq-lemma-exp1} we have
\[ \eta_{p_1}(\sphereStr{\expan(A_1)}{\rho}{a_1}) = \sphereStr{\eval(D)}{\rho}{\eta_{p_1}(a_1)}=
 \sphereStr{\eval(D)}{\rho}{\eta_{p_2}(a_2)}=
\eta_{p_2}(\sphereStr{\expan(A_2)}{\rho}{a_2}).
\]
 Lemma~\ref{lemma-topmost}
yields $A_1 = A_2$. 
\end{proof}

\subsection{Omitted details from Section~\ref{sec:generalStructure}}

\subsubsection{Computing all  \boldmath{$(2\bigR+1)$}-expansions}
\label{sec compute expansion}

The general idea to compute $\expan(A)$ is to start a breadth-first search (BFS) in all nodes from $\mathsf{In}_A$
 and thereby explore all nodes in $\eval(A)$ with distance of at most $2\bigR+1$ from any node in $\mathsf{In}_A$. Obviously, this BFS has to be able to jump back and forth from one structure $\RS_B$ into another structure $\RS_C$ according to the references in the productions of $D$, which can be realized as follows.

The BFS visits triples $(p, u, \ell)$, where $(p, u)$ is a node of $\eval(A)$, i.e., $p$ is an $A$-to-$B$ path 
 in $\DAG(D)$ for some  $B \in N$ (this includes the case $B = A$, for which $p = A$) and $u \in \RS_B$, where in addition $u \notin \ran(\tau_B)$ in case $B \neq A$, and $\ell$ with $0 \leq \ell \leq 2\bigR+1$ indicates that $(p,u)$ has distance $\ell$ from  $\mathsf{In}_A$.
Initially, we visit all $(A, u, 0)$, where $u$ is an internal node of $\RS_A$. 
From the current triple 
$(p, u, \ell)$, where $p$ ends in $B \in N$ and $\ell \leq 2\bigR$, we can visit unvisited triples of the following form:
\begin{itemize}
\item Stay1: $(p, v, \ell + 1)$ if  $v \in \RS_B$ is adjacent to $u$ in the graph $\mathcal{G}(\RS_B)$, 
where in addition $v \notin \ran(\tau_B)$ in case $B \neq A$  (we stay in the structure $\RS_B$ and move via an edge from $\RS_B$),
\item Stay2: $(p, v, \ell + 1)$ if  $v \in \RS_B$ and there is a reference $(C, \sigma) \in E_B$ with
$u = \sigma(i)$, $v = \sigma(j)$ (for $i,j \in [\mathsf{rank}(C)]$) and in $\RS_C$ there is an edge in 
$\mathcal{G}(\RS_C)$ from $\tau_C(i)$ to $\tau_C(j)$ (we stay in the structure $\RS_B$ but move via an edge that is produced from a reference),
\item Move down: $(p \, j \, C, v, \ell + 1)$ if the $j^{\text{th}}$ reference in $E_B$ is $(C, \sigma)$, there is an 
$i \in [\rank(C)]$ such that $\sigma(i) = u$ and $v \in \RS_C \setminus \ran(\tau_C)$ is adjacent to $\tau_C(i)$ in $\mathcal{G}(\RS_C)$
(we move into a structure $\RS_C$ that is produced from a reference).
\item Move up: $(p' C, \sigma(i), \ell + 1)$ if $p = p' C j B$, the $j^{\text{th}}$ reference in $E_C$ is $(B,\sigma)$, and in $\mathcal{G}(\RS_B)$,
$u$ is adjacent to $\tau_B(i)$ for $i\in [\rank(B)]$; note that $\sigma(i) \notin \ran(\tau_C)$ by the apex condition
(via a contact node of $\RS_B$ we move up to the structure from which the current copy of $\RS_B$ was produced).
 \end{itemize}
This BFS visits exactly those triples $(p, u, \ell)$ such that $(p, u)$ is a node from $\eval(A)$
with distance $\ell \leq 2\bigR+1$ from $\mathsf{In}_A$. Consequently, we visit exactly the nodes of $\expan(A)$. The nodes in $\mathsf{Bd}_{A,2\bigR+1}$ can be easily detected in the BFS.
Note also that for every computed triple $(p, v, \ell)$ the path $p$ has length at most $\ell \leq 2\bigR+1$.

This construction requires time $\bigo(|\expan(A)|) \leq |\RS_A| \cdot f(d, |\phi|)$ 
(the size of a $(2\bigR+1)$-sphere around a tuple of length at most $|\RS_A|$ in a degree-$d$ bounded structure).
Summing over all $A \in N$ shows that all $(2\bigR+1)$-expansions can be computed in time $|D| \cdot f(d, |\phi|)$.

\subsubsection{Computing the \boldmath{$\bigR$}-neighborhood types and useful nonterminals}\label{sec:computeneighborhoodTypes}

For every nonterminal $A$, every node $a \in \expan(A)$, and every $\bigR$-neighborhood type $\mathcal{B}_i$ from 
\eqref{eq-consistent-factorization} we check whether $a$ is a valid $\mathcal{B}_i$-node. For this, we have to compute
$\neighborhoodStr{\expan(A)}{\bigR}{a}$, which can be done in time $f(d,|\phi|)$.
If this is the case, we store $a$ in a list $\mathcal{L}_{i, A}$, and we also 
compute and store an isomorphism $\pi_a : \mathcal{B} \to \neighborhoodStr{\expan(A)}{\bigR}{a}$ that satisfies $\pi_a(1) = a$. 

Since the size of every expansion $\expan(A)$ can be bounded by $|\RS_A| \cdot f(d,|\phi|)$, the total time needed for the above
computations is bounded by $|D| \cdot f(d,|\phi|)$.

Recall that a nonterminal $A$ is $\mathcal{B}_i$-useful if and only if there is a valid $\mathcal{B}_i$-node in $\expan(A)$.
Hence, $A$ is $\mathcal{B}_i$-useful if and only if the list $\mathcal{L}_{i, A}$ is non-empty. Hence, we have also
computed the set of $\mathcal{B}_i$-useful nonterminals.

\subsubsection{Preprocessing for path enumeration}\label{sec:pathEnumPreprocessing}

The following preprocessing is needed for our path enumeration algorithm in $\DAG(D)= (N, \gamma, S)$.
Recall that we denote with $\mathcal{P}_i$ the set of all initial paths in $\DAG(D)$ that end in a $\mathcal{B}_i$-useful nonterminal
(see Section~\ref{sec:generalStructure}),
where $\mathcal{B}_i$ is from the factorization \eqref{eq-consistent-factorization}.

We first extend $\DAG(D)$ to the new dag $\DAG(D)' = (N \uplus N', \gamma', S)$, where $N' = \{ A' : A \in N \}$ is a copy of $N$,
$\gamma'(A) = A' \gamma(A)$  and $\gamma'(A') = \varepsilon$ for every $A \in N$. In other words, we add
for every nonterminal $A$ a new node $A'$ (a leaf in $\DAG(D)'$) that becomes the first child of $A$.
Recall that if $\gamma(A) = B_1 \cdots B_n$ then we write $(A,1,B_1), \ldots, (A,n,B_n)$ for the outgoing 
edges of $A$ in $\DAG(D)$. It is convenient to keep these triples also in $\DAG(D)'$ and to write $(A,0,A')$
for the new edge from $A$ to $A'$.
Clearly, $\DAG(D)'$ can be constructed in time $\bigo(|D|)$ from $D$.

There is a one-to-one correspondence between initial paths in $\DAG(D)$ and initial-to-leaf paths in $\DAG(D)'$.
Formally, if $p$ is an initial-to-$A$ path in $\DAG(D)$, then $p' := p \, 0 \, A'$ is an initial-to-leaf path in $\DAG(D)'$.
Moreover, every initial-to-leaf path in $\DAG(D)'$ is of the form $p \, 0 \, A'$ for a unique initial-to-$A$ path $p$ in $\DAG(D)$.
This means that in the following, we can talk about initial-to-leaf paths of $\DAG(D)'$ instead of initial paths of $\DAG(D)$. Note that in constant time we can always obtain the initial path from $\DAG(D)$ that corresponds to some initial-to-leaf path of $\DAG(D)'$.

Recall the lexicographical ordering of paths with the same starting node from Section~\ref{sec-dag}.
For a path $p$ in $\DAG(D)$ that starts in $A$, we define $\lex_A(p)$ as the position of $p$ in the lexicographically ordered list of all 
paths in $\DAG(D)$ that start in $A$, where $\lex_A(A)=0$ (note that $A$ is the lexicographically smallest path that starts in $A$).
We write $\lex(p)$ for $\lex_S(p)$. Note that for an $S$-to-$A$ path $p$ and an $A$-to-$B$ path $q$ we have
$\lex(pq) = \lex(p)+\lex_A(q)$.

For an initial-to-leaf path $p'$ in $\DAG(D)$ we write $\lex'(p')$ for the position 
of $p'$ in the lexicographically ordered list of all initial-to-leaf paths in $\DAG(D)'$, where we start again with position $0$.
For every initial-to-$A$ path $p$ in $\DAG(D)$ we then have  $\lex(p) = \lex'(p0A')$.
Note that $\lex'$ always refers to the dag $\DAG(D)'$, whereas
$\lex$ always refers to $\DAG(D)$.

Next, we add edge-weights to $\DAG(D)'$ as follows in order to compute $\lex'(p')$ for an initial-to-leaf path $p'$ in $\DAG(D)'$.
 For every node $A$ of $\DAG(D)'$, let $\numberPaths(A)$ be the number of different $A$-to-leaf paths in $\DAG(D)'$. These values can be easily computed in time $\bigo(|\DAG(D)'|) \leq \bigo(|D|)$. Indeed, for every leaf $A'$, we set $\numberPaths(A) = 1$. For an inner node $A$ with $\gamma'(A) = A_0 A_1 \cdots A_m$ (where $A_0 = A'$), we set $\numberPaths(A) = \sum^m_{i = 0} \numberPaths(A_i)$. 

Next, we compute the number $\edgeWeight(e)$ for every edge $e$ of $\DAG(D)'$. Let $A$ be an arbitrary inner node with 
$\gamma'(A) = A_0 A_1 \cdots A_m$ (where $A_0 = A'$). Then, for every $0 \le i \le m$, we set $\edgeWeight(A, i, A_i) = \sum^{i-1}_{j = 0} \numberPaths(A_i)$. Obviously, these numbers $\edgeWeight(e)$ for every edge $e$ can be also computed in time $\bigo(|\DAG(D)'|) \leq \bigo(|D|)$. 

By definition, $\numberPaths(A)$ is the total number of $A$-to-leaf paths in $\DAG(D)'$. Thus, for every $0 \le i \le m$, there are exactly $\edgeWeight(A, i, A_i)$ many $A$-to-leaf paths that are lexicographically smaller than the smallest $A$-to-leaf path that starts with the edge $(A, i,A_i)$.

We extend the function $\edgeWeight$ to paths in $\DAG(D)'$ in the obvious way: for a path $p' = A_0 j_1 A_1 \cdots A_{n-1} j_n A_n$, we set $\edgeWeight(p') = \sum^n_{i = 1} \edgeWeight(A_{i-1}, j_i, A_i)$. It can be verified by induction that for every 
initial-to-leaf path $p'$ in $\DAG(D)'$ we have $\edgeWeight(p') = \lex'(p')$. Hence, for $p' = p0A'$ we obtain
\begin{equation} \label{eq-lex-weight}
\edgeWeight(p0A')  = \lex'(p0A') = \lex(p).
\end{equation}
Next, for every $\bigR$-neighborhood type $\mathcal{B}_i$ from the factorization \eqref{eq-consistent-factorization}, we create a dag $\DAG_i$ from $\DAG(D)'$ as follows. 
Initially, let $\DAG_i$ be just a copy of $\DAG(D)'$. In a first step, we delete all nodes in $\DAG_i$ from which no leaf $A'$ (with $A \in N$) is reachable such that $A$ is $\mathcal{B}_i$-useful in the SLP $D$. We can assume that 
 the initial nonterminal $S$ is not removed here, otherwise 
$D$ contains no $\mathcal{B}_i$-useful nonterminals and the factorization  \eqref{eq-consistent-factorization} will not lead
to any output tuples. We normalize the outgoing edges $(A,i,B)$ for every node $A$ such that the middle indices $i$ form an
interval $[n]$ for some $n$.

The paths from $\mathcal{P}_i$ correspond exactly to the initial-to-leaf paths in $\DAG_i$.
Note that we only delete nodes and edges, but do not change edge-weights when construct $\DAG_i$ from $\DAG(D)'$.
This means that if $p$
is an initial-to-leaf path in $\DAG_i$ then $p$ has the same weight in $\DAG(D)'$ and $\DAG_i$.

For an arbitrary DAG $G = (V,\gamma,\iota)$ and a node $v \in V$ we define
the \emph{outdegree} of $v$ as $|\gamma(v)|$.

A \emph{maximal non-branching} path in $\DAG_i$ is 
a path $p = A \, j \, B_1 \, 1 \, B_2 \, 1 \cdots B_{n-1} \, 1 \, B_n$ where 
$n \geq 1$, $A$ has outdegree at least $2$, every $B_i$ for $i \in [n-1]$ has outdegree $1$ (so $B_{i+1}$ is the unique child of $B_i$), and $B_n$ has outdegree $0$ or at least $2$. 
For technical reasons that shall become clear later on, we want to contract each such maximal non-branching path $p$ into a single edge $(A, j, B_n)$ with $\edgeWeight(A, j, B_n) = \edgeWeight(p)$. This can be done as follows.

We first determine the set $V_1$ of nodes of outdegree one. Then, for every node $A \in V_1$, we compute the unique pair $(\omega(A), g(A))$, where $\omega(A)$ is the unique node with $\omega(A) \notin V_1$ that is reached from $A$ by the unique path $p_A$ consisting of edges $(A_1, 1, A_2)$ with $A_1 \in V_1$, and $g(A) = \edgeWeight(p_A)$. This can be done bottom-up in time $\bigo(|\DAG_i|)$ as follows: For every edge $(A, 1, B)$ with $A \in V_1$ we set $\omega(A) = B$ and $g(A) = \edgeWeight(A, 1, B)$ if $B \notin V_1$ (this includes the case where $B$ is a leaf), and we set $\omega(A) = \omega(B)$ and $g(A) = \edgeWeight(A, 1, B) + g(B)$ if $B \in V_1$ has already been processed. 
We then replace every edge $(B, i, A)$ with $A \in V_1$ by the edge $(B, i, \omega(A))$ with $\edgeWeight(B, i, \omega(A)) = \edgeWeight(B, i, A) + g(A)$. After this step, there is no edge that ends in a node of outdegree one. In particular, if a node has outdegree one, 
then it has no incoming edges. We then remove all nodes of outdegree one and their outgoing edges from $\DAG_i$, except the initial node $S$ (which might have outdegree $1$ as well). Now all nodes except possibly $S$ have outdegree $0$ or at least $2$. If the initial node $S$ has outdegree at least $2$, we are done. If $S$ has outdegree $1$, then, by our construction, it has an edge $(S, 1, A)$ and $A$ has outdegree $0$ or at least $2$. In this case, we delete $S$ and make $A$ the new initial node,  
and we store the information that every initial-to-leaf path of $\DAG_i$ has to be interpreted as the initial-to-leaf path obtained by prepending the edge $(S, 1, A)$. 

Recall that the paths from $\mathcal{P}_i$ in $\DAG(D)$ are in a one-to-one
correspondence with the initial-to-leaf paths of $\DAG_i$. The only difference is that we added an edge $(A,i,A')$ at the end and that
the maximal non-branching subpaths of the initial paths of $\DAG(D)'$ are contracted into single edges in $\DAG_i$. Moreover, the weight of an initial-to-leaf path in $\DAG_i$ is the same as the 
weight of the corresponding path from $\mathcal{P}_i$, which is its $\lex$-value.

Every $\DAG_i$ can be computed in time $\bigo(|\DAG(D)'|) \leq \bigo(|D|)$ and we have to construct $m \leq k \leq |\phi|$ many of them.
Hence, the total running time is bounded by $|D| \cdot f(d,|\phi|)$. 

\subsubsection{Enumerating paths from  \boldmath{$\mathcal{P}_i$ in \boldmath{$\DAG(D)$}}}\label{sec:MainPathEnumSection}

The general idea is that we enumerate with constant delay the initial-to-leaf paths in  $\DAG_i$ in lexicographic order. As explained in Section~\ref{sec:pathEnumPreprocessing}, these paths correspond to the paths from $\mathcal{P}_i$ in
$\DAG(D)$.
In order to be able to go in constant time from an initial-to-leaf path of $\DAG_i$ to the lexicographical next initial-to-leaf
path, we use an idea from \cite{LohreyMR18} (which in turn is based on \cite{GasieniecKPS05}). In \cite{LohreyMR18} only binary dags are considered, i.e., dags where every non-leaf
node has a left and and a right outgoing edge. Then the idea is to represent an initial-to-leaf path in a compact form where every maximal
 subpath $p$ (going from $u$ to $v$)
that consist only of left (right, respectively) edges is contracted into a single triple $(u, \ell, v)$ ($(u, r, v)$, respectively). 

In our dags $\DAG_i$, every non-leaf node can have an arbitrary outdegree larger than one.
We therefore have to adapt the approach from \cite{LohreyMR18} to handle also nodes with more than two outgoing edges. The idea will be to contract only
maximal subpaths consisting of leftmost (rightmost, respectively) edges into single triples.

We describe the procedure for a general dag $G = (V, \gamma, \initial)$ where, for every $v \in V$, we have $|\gamma(v)| = 0$ or $|\gamma(v)| \geq 2$, and every edge $(u, j, v)$ has an edge-weight $\edgeWeight(u, j, v) \in \mathbb{N}$ (note that the dags $\DAG_i$ have this property). This also means that paths of $G$ have weights. 
We call every edge $(u, 1, v)$ a \emph{min-edge}, every edge $(u, |\gamma(u)|, v)$ a \emph{max-edge}, and edges that are neither min- nor max-edges are called \emph{middle-edges}. This means that every node is either a leaf with no outgoing edges, or it is an inner node with one min- and one max-edge and a (possibly zero) number of middle edges.

\subparagraph{Min-max-contracted representations.} 
In this section, it will be convenient to write paths in $G$ as sequences of edge triples $(u,i,v)$ instead of words 
from $(V \mathbb{N})^* V$. Thus the path $v_0 j_1 v_1 j_2 v_2 \cdots j_n v_n$ will be written 
 $(v_0, j_1, v_1) (v_1, j_2, v_2) \cdots (v_{n-1}, j_n, v_n)$. With this notation, the concatenation of paths corresponds to the concatenation of sequences of edges.

Let $p = (v_0, j_1, v_1) \cdots (v_{n-1}, j_n, v_n)$ be a path in $G$. The \emph{min-max-contracted representation} of $p$ is obtained by contracting every maximal sequence of min-edges (max-edges, respectively) into a single triple.
Here, a sequence of min-edges (max-edges, respectively) is \emph{maximal} if it cannot be extended to the left or right by min-edges (max-edges, respectively). More formally, the min-max-contracted representation of $p$ is obtained by replacing each non-empty sequence $(v_i, 1, v_{i+1}) (v_{i+1}, 1, v_{i+2}) \cdots (v_{i'-1}, 1, v_{i'})$ of min-edges that is neither preceded nor followed by a min-edge with the single triple $(v_i, \min, v_{i'})$, and, analogously, replacing each non-empty sequence $(v_i, |\gamma(v_i)|, v_{i+1}) \cdots (v_{i'-1}, |\gamma(v_{i'-1})|, v_{i'})$ of max-edges that is neither preceded nor followed by a max-edge with the single triple $(v_i, \max, v_{i'})$. Intuitively speaking, a triple $(v_i, \min, v_{i'})$ ($(v_i, \max, v_{i'})$, respectively) means that we go from $v_i$ to $v_{i'}$ by only taking the first (last, respectively) outgoing edge for every node. For example, the path 
\begin{equation*}
(v_0, 1, v_1) (v_1, 1, v_2) (v_2, 7, v_3) (v_3, 4, v_4) (v_4, 1, v_5) (v_5, |\gamma(v_5)|, v_6) (v_6, |\gamma(v_6)|, v_7) (v_7, 5, v_8)
\end{equation*}
(where $|\gamma(v_2)| >7$, $|\gamma(v_3)|>4$, and $|\gamma(v_7)|>5$)
has the min-max-contracted representation 
\begin{equation*}
(v_0, \min, v_2) (v_2, 7, v_3) (v_3, 4, v_4) (v_4, \min, v_5) (v_5, \max, v_7) (v_7, 5, v_8)\,.
\end{equation*}
For every node $v_0$, there is a one-to-one correspondence between $v_0$-paths and the min-max-contracted representations of $v_0$-paths. 

We extend $\edgeWeight$ to triples $(u, \min, v)$ and $(u, \max, v)$, i.e., $\edgeWeight(u, \min, v) = \edgeWeight(p)$, where $p$ is the unique min-path from $u$ to $v$, and $\edgeWeight(u, \max, v) = \edgeWeight(p)$, where $p$ is the unique max-path from $u$ to $v$. 
If $\tilde{p}$ is the  min-max-contracted representation of the path $p$ then we define $\edgeWeight(\tilde{p}) = \edgeWeight(p)$.
Moreover, we call the values $\edgeWeight(e)$ for all triples $e$ in $\tilde{p}$ the $\edgeWeight$-values of $\tilde{p}$. Hence,
$\edgeWeight(\tilde{p})$ is the sum of all the $\edgeWeight$-values of $\tilde{p}$.

\subparagraph{Data structures.} We describe next certain data structures for $G$ that need to be computed in order to enumerate 
all initial-to-leaf paths. We will see that all these data structures can be computed in time $\bigo(|G|)$, which means that we can compute them for all dags $\DAG_i$ in the preprocessing phase of our final enumeration algorithm in time $|D| \cdot f(d,|\phi|)$. 

The \emph{min-path} of a node $v$ is the $v$-to-leaf path obtained by starting in $v$ and only taking min-edges until we reach a leaf, which we denote by $\min(v)$. In particular, $\min(v) = v$ if and only if $v$ is a leaf. The \emph{max-path} of $v$ and the node $\max(v)$ are defined analogously. We can compute the value $\min(v)$ and  the corresponding weight
$\edgeWeight(v, \min, \min(v))$ as follows
($\max(v)$ and $\edgeWeight(v, \max, \max(v))$ are computed analogously):
For every leaf $v$, we set $\min(v) = v$ and $\edgeWeight(v, \min, \min(v)) = 0$ (actually, the triple $\edgeWeight(v, \min,v)$ does not appear in 
a min-max-contracted representation, it is used here only in order to facilitate the computation of the weights).
For every inner node $v$ with min-edge $(v, 1, u)$, we set $\min(v) = \min(u)$
and $\edgeWeight(v,\min, \min(v)) = \edgeWeight(v,1,u) + \edgeWeight(u,\min, \min(u))$.

For a node $u$ on $v$'s $\min$-path with $v \neq u$, we denote by $\parent_{\min}(v, u)$ the unique parent node of $u$ on $v$'s $\min$-path. Observe that if $u$ is the direct successor of $v$ on $v$'s $\min$-path, then $\parent_{\min}(v, u) = v$. For a node $u$ on $v$'s $\max$-path with $v \neq u$, we define $\parent_{\max}(v, u)$ analogously.

We will next discuss how to compute in time $\bigo(|G|)$ a data structure that allows us to retrieve in constant time the value $\parent_{\min}(v, u)$ for a node $u$ on $v$'s min-path and the value $\parent_{\max}(v, u)$ for a node $u$ on $v$'s max-path. 

Let us only discuss the min-case, since the max-case can be dealt with analogously. We first remove from $G$ all edges that are not min-edges and then we reverse all edge directions. This gives us a forest $F_{\min}$. For every node $v$ of $G$, there is a unique rooted tree $T(v)$ in $F_{\min}$ that contains $v$, and $v$ is the root of $T(v)$ if and only if $v$ is a leaf in $G$. Moreover, the path from $v$ to the root of $T(v)$ is exactly the min-path of $v$ in $G$. This forest $F_{\min}$ can be computed in time $\bigo(|G|)$.

For an arbitrary rooted tree $T$ and nodes $u, v \in T$ such that there is a non-empty path from $u$ to $v$, we denote by $\nextlink_T(u, v)$ the unique child of $u$ that belongs to the unique $u$-to-$v$ path in $T$. A \emph{next link data structure} for $T$ is a data structure that allows us to retrieve $\nextlink_T(u, v)$ for given $u, v \in T$ as above. We call these queries \emph{next link queries}. 
The following result from \cite{GasieniecKPS05} will be used:

\begin{theorem} \label{thm next link}
For a rooted tree $T$, one can construct in time $\bigo(|T|)$ a next link data structure for $T$ that allows to solve next link queries in constant time.
\end{theorem}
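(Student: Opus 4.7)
The plan is to reduce next link queries to level-ancestor queries and then invoke a standard linear-preprocessing constant-query-time data structure for the latter. With a single depth-first traversal of $T$ in time $\bigo(|T|)$, I would compute for every node $v$ its depth $\mathsf{depth}(v)$ (the length of the unique root-to-$v$ path) together with a parent pointer.

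The key observation is that whenever $u$ is a proper ancestor of $v$ in $T$, the node $\nextlink_T(u,v)$ is exactly the unique ancestor of $v$ at depth $\mathsf{depth}(u)+1$. Writing $\mathsf{LA}(v,d)$ for the ancestor of $v$ at depth $d$ (a \emph{level-ancestor query}), one therefore has
\[
\nextlink_T(u,v) \;=\; \mathsf{LA}\bigl(v,\; \mathsf{depth}(u)+1\bigr),
\]
so every next link query reduces to one level-ancestor query plus an $\bigo(1)$ arithmetic step on the precomputed depths. Correctness is immediate: the unique $u$-to-$v$ path in $T$ is the suffix of the root-to-$v$ path that starts at $u$, and its second node is precisely the ancestor of $v$ one level below $u$.

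To finish, I would plug in any of the known linear-preprocessing constant-query-time level-ancestor data structures (such as those of Bender and Farach-Colton, or Alstrup and Holm); indeed, the cited work \cite{GasieniecKPS05} already encapsulates exactly this machinery. The main obstacle is hidden in this black box: a naive table that stores, for every node, its complete ancestor sequence indexed by depth would use $\Theta(|T|^2)$ space, so any truly linear-space solution requires a nontrivial combination of heavy-path (or ladder) decomposition with a micro--macro split in which small subtrees are handled by a Four-Russians--style lookup table. Since this construction is classical and the cited reference provides it in the form we need, I would import it as a black box rather than reprove it from scratch.
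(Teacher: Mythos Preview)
Your proposal is correct and matches the paper's treatment: the paper does not prove this theorem at all but simply imports it from \cite{GasieniecKPS05}, and your reduction of $\nextlink_T(u,v)$ to the level-ancestor query $\mathsf{LA}(v,\mathsf{depth}(u)+1)$ followed by a black-box invocation of a linear-time level-ancestor structure is exactly the standard way to obtain such a result. If anything, you provide more detail than the paper does, since the paper offers no argument whatsoever and treats the statement purely as a citation.
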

For every $v \in V$ and $u \neq v$ on $v$'s min-path we have that $T(v) = T(u)$, and the node $\parent_{\min}(v, u)$ is the unique child of $u$ on the $u$-to-$v$ path in $T(v)$, i.e., $\parent_{\min}(v, u) = \nextlink_{T(v)}(u, v)$. Consequently, by computing the next link data structure
from Theorem~\ref{thm next link} for every tree of the forest $F_{\min}$, we can retrieve $\parent_{\min}(v, u)$ for given $v \in V$ and $u \neq v$ on $v$'s min-path in constant time.

\subparagraph{Enumeration procedure.} 
Provided we have computed the data structures mentioned above, we can support the following operation for $G$. Given the min-max-contracted representation of an initial-to-leaf path $p$, all its $\edgeWeight$-values and its total weight, we can construct in constant time the min-max-contracted representation of the lexicographical successor of $p$ (among all initial-to-leaf paths in $G$)
 along with all its $\edgeWeight$-values and its total weight. If this lexicographical successor does not exist then the algorithm will report 
that $p$ is the lexicographically largest path.

Let $p$ be the min-max-contracted representation of an initial-to-leaf path. We first show how to obtain the min-max-contracted representation of the 
lexicographically next initial-to-leaf path in constant time. To this end, we consider three different cases depending on whether $p$'s last triple is a min-triple, a middle-edge or a max-triple. After that, we discuss how to update the $\edgeWeight$-values and the total weight.
Note that in each of the following cases, $p$ is only modified in a suffix of constant length. Hence, constant time suffices in each case.
In order to avoid some further case distinctions we interpret a triple $(v, \min, v)$ by the empty sequence $\varepsilon$.
Recall that $\min(v) = v$ if and only if $v$ is a leaf.

\medskip

\noindent \textit{Case 1:} $p$ ends with a min-triple. Let $p = p' (v_1, \min, v_2)$.
This means that $v_3 := \parent_{\min}(v_1, v_2)$ is defined and there is an edge $(v_3, 2, v_4)$.

\medskip
\noindent \textit{Case 1.1:} $v_1 = v_3$.
 If $(v_1, 2, v_4)$ is a middle-edge (i.e., $|\gamma(v_1)| > 2$) 
 then we set $p := p' (v_1, 2, v_4) (v_4, \min, \min(v_4))$.
 
 If on the other hand $(v_1, 2, v_4)$ is a $\max$-edge then we have to check whether the last triple of $p'$ (if it exists) is a 
 $\max$-triple or not. If not, then we set $p :=  p' (v_1, \max, v_4) (v_4, \min, \min(v_4))$. 
 On the other hand, if $p'$ is of the form $p' = p'' (v_5, \max, v_1)$ (note that $p'$ ends with $v_1$)
 then we set $p :=  p'' (v_5, \max, v_4)(v_4, \min, \min(v_4))$.
 
 \medskip
\noindent \textit{Case 1.2:} $v_1 \neq v_3$.
 If $(v_3, 2, v_4)$ is a middle-edge (i.e., $|\gamma(v_3)| > 2$) 
 then we set $p := p' (v_1, \min, v_3) (v_3, 2, v_4) (v_4, \min, \min(v_4))$. 
If $(v_3, 2, v_4)$ is a $\max$-edge, then we set 
$p := p' (v_1, \min, v_3) (v_3, \max, v_4)(v_4, \min, \min(v_4))$.

\medskip

\noindent \textit{Case 2:} $p$ ends with middle-edge. Let $p = p' (v_1, j, v_2)$.
Then there exists a middle- or $\max$-edge $(v_1, j + 1, v_3)$ in $G$.
If $(v_1, j + 1, v_3)$ is a middle edge in $G$, then we set $p := p' (v_1, j + 1, v_3) (v_3, \min, \min(v_3))$.

Now assume that $(v_1, j + 1, v_3)$ is a $\max$-edge. If $p'$ does not end with a $\max$-triple
(this includes the case that $p'$ is empty) then we set $p := p'(v_1, \max, v_3) (v_3, \min, \min(v_3))$.
Finally, if $p' = p'' (v_4, \max, v_1)$ ends with a $\max$-triple then we set
$p := p'' (v_4, \max, v_3)(v_3, \min, \min(v_3))$. 

\medskip

\noindent \textit{Case 3:} $p$ ends with a max-triple. If $p = (\iota, \max, v_1)$, then the algorithm reports 
that $p$ is the lexicographically largest path.
Assume now that $p = p' (v_1, j, v_2) (v_2, \max, v_3)$, where $(v_1, j, v_2)$ is a middle-edge or a min-triple 
(in a min-max-contracted representation there do not exist two consecutive max-triples).
We then set $p := p' (v_1, j, v_2)$ and continue with either Case 1 (if $(v_1, j, v_2)$ is a min-triple) or 
Case 2 (if $(v_1, j, v_2)$ is a middle-edge).

\medskip
\noindent
For obtaining the $\edgeWeight$-values of the new min-max-contracted representation, we only have to compute the 
$\edgeWeight$-values  for the new triples (for the old triples the $\edgeWeight$-values remain the same). If a new triple $(u, j, v)$ is a middle edge, then the value $\edgeWeight(u, j, v)$ has been explicitly computed in the preprocessing. 
For triples of the form $(u, \min, v)$, we can argue as follows:
Whenever we add a new min-triple $(u, \min, v)$ in the above algorithm, then one of the following three cases holds:
\begin{itemize}
\item The min-edge $(u,1, v)$ exists in $G$: We then set $\edgeWeight(u, \min, v) = \edgeWeight(u, 1, v)$, where the latter has been precomputed.
\item $v = \min(u)$: Then $\edgeWeight(u, \min, \min(u))$ has been precomputed.
\item There is an old min-triple $(u, \min, v')$ such that $v = \parent_{\min}(u, v')$. In this case we set
$\edgeWeight(u, \min, v) = \edgeWeight(u, \min, v') - \edgeWeight(v,1,v')$. Here $(v,1,v')$ is an edge of $G$ for which the weight is known.
\end{itemize}
For new $\max$-triples we can compute the weights in a similar way.

Now with all the $\edgeWeight$-values of the new $p$ at our disposal, we can easily compute the total weight $\edgeWeight(p)$
for the new $p$ in constant time. Indeed, notice that in the above algorithm, $p$ will be modified 
 by removing a constant number of triples from the end and then adding a constant number of triples at the end.
 Hence, $\edgeWeight(p)$ can be updated by a constant number of subtractions and additions.

 When we speak about the min-max-contracted representation of a path in the following, we assume that 
 all $\edgeWeight$-values as well as  the total weight of the
 path are also computed.
 
 \subparagraph{Enumerating the initial-to-leaf paths of \boldmath{$\DAG_i$}.} 
In order to enumerate the initial-to-leaf paths of $\DAG_i$, we start with the min-max-contracted representation of the first initial-to-leaf path of $\DAG_i$, which can be obtained in constant time, since it is simply $(S, \min, \min(S))$. Then, by the procedure described above, we can obtain the min-max-contracted representation of the lexicographically next initial-to-leaf path from the min-max-contracted representation of the previous initial-to-leaf path in constant time, which gives us a constant delay enumeration algorithm for  the min-max-contracted representations of all such paths. Thereby, also the $\edgeWeight$-values and the total weight are correctly updated.

\subsubsection{Counting the number of \boldmath{$\mathcal{B}_i$}-nodes}
\label{appendix-counting}

For every $\bigR$-neighborhood $\mathcal{B}_i$, we need the total number of $\mathcal{B}_i$-nodes in $\eval(D)$. In the uncompressed setting, this number is given by the size of the explicitly computed list $L_{i}$ (see the algorithm described in Section~\ref{sec:enumAlgo}). In the compressed setting, it can be computed (thanks to Lemma~\ref{bijectionLemma})
as
\[\beta_i = \sum_{A \in N_i} P_A \cdot |\mathcal{L}_{i,A}|,\]
 where $N_i$ is the set of all nonterminals that are $\mathcal{B}_i$-useful, $P_A$ is the number of $S$-to-$A$ paths in $\DAG(D)$, and $\mathcal{L}_{i, A}$ is the list of all valid $\mathcal{B}_i$-nodes in $\expan(A)$. Recall that in Section~\ref{sec:computeneighborhoodTypes}, we have already computed the sets $N_i$ and the lists $\mathcal{L}_{i, A}$.

The numbers $P_A$ can be computed as follows. We first remove from $\DAG(D)$ all nonterminals $A$ such that there is no $S$-to-$A$ path in $\DAG(D)$ (such nonterminals could have been removed in the very beginning without changing $\eval(D)$). This can be done in time $\bigo(|\DAG(D)|) \leq \bigo(|D|)$. 
Then we can compute all numbers $P_A$ inductively top-down in $\DAG(D)$ by first setting $P_S = 1$ and then setting $P_A = \sum_{1 \leq i \leq \ell} P_{A_i}$ for every node $A$ with incoming edges $(A_1, i_1, A), \ldots, (A_\ell, i_\ell, A)$.

The total number of (binary) additions is bounded by the number of edges of $\DAG(D)$, which is bounded by $|D|$.
Moreover, all numbers $P_A$ need only $\bigo(|D|)$ many bits (see Section~\ref{sec node reps}). Therefore, on our machine
model, every addition needs constant time (see Section~\ref{sec:SLPRAM}).
Hence, the whole procedure needs time $\bigo(|\DAG(D)|)$. Therefore, all numbers $\beta_i$ can be computed in time $f(d,|\phi|) \cdot |D|$.

\subsubsection{Checking distance constraints}
\label{sec:disjointnessCheck}

Recall that we fixed the consistent factorization $(\mathcal{B}_1, \sigma_1,  \ldots, \mathcal{B}_m, \sigma_m)$ (see \eqref{eq-consistent-factorization})  of the fixed $(k,r)$-neighbor\-hood type $\mathcal{B}$ at the beginning of Section~\ref{sec:generalStructure}. We also noticed
that during the enumeration phase we have to check whether 
$\dist_{\eval(D)}(t_{b_{i}, \sigma_{i}}, t_{b_{j}, \sigma_{j}}) > 2r+1$ for a $\mathcal{B}_i$-node $b_i$ and a $\mathcal{B}_j$-node $b_j$. Here, the following assumptions
hold for $i$ (and analogously for $j$):
\begin{itemize}
\item $b_i$ is given by a triple $(\lex(p_i), q_i, v_i)$, 
\item $p_i$ is an initial-to-$A_i$ path in $\DAG(D)$ (for some $\mathcal{B}_i$-useful nonterminal $A_i$) that is given by the min-max-contracted representation 
of the corresponding path in $\DAG_i$.
\item $c_i := (q_i,v_i)$ is a node (written in $A_i$-representation) from $\expan(A_i)$ such that $c_i$ has the $\bigR$-neighborhood type $\mathcal{B}_i$ in $\expan(A_i)$ and $\neighborhoodStr{\expan(A_i)}{\bigR}{c_i}$ is a valid substructure of $\expan(A_i)$.
\end{itemize}
Moreover, let $\Gamma_i = \{ (A_i, v) : v \in \ran(\tau_{A_i}) \}$ be the set of contact nodes of $\eval(A_i)$.
Since $(\mathcal{B}_1, \sigma_1, \ldots, \mathcal{B}_m, \sigma_m)$ is a consistent factorization of the $(k,r)$-neighborhood type
$\mathcal{B}$ and $\neighborhoodStr{\expan(A_i)}{\bigR}{c_i} \simeq \mathcal{B}_i$
we have for every $c \in \ran(t_{c_i, \sigma_i})$:
$\sphereStr{\expan(A_i)}{r}{c} \subseteq \sphereStr{\expan(A_i)}{r}{t_{c_i, \sigma_i}} \subseteq  \sphereStr{\expan(A_i)}{\bigR}{c_i}$.
Since $\neighborhoodStr{\expan(A_i)}{\bigR}{c_i}$ is a valid substructure of $\expan(A_i)$, we obtain
$\sphereStr{\eval(A_i)}{r}{c} \cap \Gamma_i = \emptyset$, i.e., 
\begin{equation} \label{dist from contact}
\dist_{\eval(A_i)}(c, \Gamma_i) \geq r+1
\end{equation}
for all $i \in [m]$ and all $c \in \ran(t_{c_i, \sigma_i})$. 
We next  state some sufficient conditions for $\dist_{\eval(D)}(t_{b_{i}, \sigma_{i}}, t_{b_{j}, \sigma_{j}}) > 2r+1$. 

\begin{lemma} \label{SuffCondLemmaOne}
If neither $p_i$ is a prefix of $p_j$ nor the other way around, then we have $\dist_{\eval(D)}(t_{b_{i}, \sigma_{i}}, t_{b_{j}, \sigma_{j}}) > 2r+1$. 
\end{lemma}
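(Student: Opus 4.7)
\medskip
\noindent
\emph{Proof plan.} I assume for contradiction that $\dist_{\eval(D)}(t_{b_{i}, \sigma_{i}}, t_{b_{j}, \sigma_{j}}) \leq 2r+1$. By the definition of distance between tuples, there exist $b \in \ran(t_{b_{i}, \sigma_{i}})$ and $b' \in \ran(t_{b_{j}, \sigma_{j}})$ with $\dist_{\eval(D)}(b, b') \leq 2r+1$. Using the identity $t_{b_{i}, \sigma_{i}} = \eta_{p_{i}} \circ t_{c_{i}, \sigma_{i}}$ observed in the paragraph "Producing the final output tuples" of Section~\ref{sec:generalStructure}, I may write $b = \eta_{p_{i}}(c)$ for some $c \in \ran(t_{c_{i}, \sigma_{i}}) \subseteq \eval(A_{i})$, and likewise $b' = \eta_{p_{j}}(c')$ for some $c' \in \ran(t_{c_{j}, \sigma_{j}}) \subseteq \eval(A_{j})$.

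The first key step is to show that the two balls $\sphereStr{\eval(D)}{r}{b}$ and $\sphereStr{\eval(D)}{r}{b'}$ stay ``inside'' the respective $\eta$-embeddings. By \eqref{dist from contact}, $\dist_{\eval(A_{i})}(c, \Gamma_{i}) \geq r+1$, so every $y \in \sphereStr{\eval(A_{i})}{r}{c}$ is internal in $\eval(A_{i})$ (i.e., $y \notin \Gamma_{i}$). Since $D$ is apex, no reference attaches to a contact node of $\RS_{A_{i}}$, and hence every edge of $\mathcal{G}(\eval(D))$ incident to a node in $\eta_{p_{i}}(\eval(A_{i}) \setminus \Gamma_{i})$ remains inside $\eta_{p_{i}}(\eval(A_{i}))$. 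A standard induction on the distance from $c$ then gives $\sphereStr{\eval(D)}{r}{b} = \eta_{p_{i}}(\sphereStr{\eval(A_{i})}{r}{c})$, so that every node of $\sphereStr{\eval(D)}{r}{b}$ has a $D$-representation of the form $(p_{i} q, v)$ for some path $q$ starting at $A_{i}$. The symmetric statement holds for $b'$, $p_{j}$, $A_{j}$.

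Now I split on $\dist_{\eval(D)}(b, b')$. If $\dist_{\eval(D)}(b, b') \leq 2r$, the two balls of radius $r$ share a common node $x$, so the $D$-representation of $x$ starts with $p_{i}$ by the previous step and also starts with $p_{j}$; since a node has a unique $D$-representation, both $p_{i}$ and $p_{j}$ are prefixes of the same initial path and thus one is a prefix of the other, contradicting the hypothesis. If $\dist_{\eval(D)}(b, b') = 2r+1$, a shortest $b$-to-$b'$ path in $\mathcal{G}(\eval(D))$ produces adjacent nodes $u \in \sphereStr{\eval(D)}{r}{b}$ and $v \in \sphereStr{\eval(D)}{r}{b'}$. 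Here I invoke the apex locality property from Section~\ref{sec node reps}: since $u$ and $v$ are at distance $1$, their $D$-representation paths have prefix distance at most $1$, so one path is a prefix of the other. Combined with the fact that $u$'s path starts with $p_{i}$ and $v$'s path starts with $p_{j}$, this again forces one of $p_{i}$, $p_{j}$ to be a prefix of the other, yielding the desired contradiction.

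The only nontrivial point in this plan is the identification $\sphereStr{\eval(D)}{r}{b} = \eta_{p_{i}}(\sphereStr{\eval(A_{i})}{r}{c})$: the inclusion ``$\supseteq$'' is immediate from the fact that $\eta_{p_{i}}$ sends $\mathcal{G}(\eval(A_{i}))$-paths to $\mathcal{G}(\eval(D))$-paths of the same length, while the reverse inclusion relies on the apex assumption to guarantee that new neighbors cannot appear from the surrounding expansion, and this is exactly the place where \eqref{dist from contact} (keeping the $r$-ball away from contact nodes) is used. This is essentially the same local argument already carried out in the proof of Lemma~\ref{lemma-exp1}, so no new machinery is required.
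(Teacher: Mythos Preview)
Your proof is correct, but it takes a somewhat more circuitous route than the paper's. The paper argues directly on paths: since $p_i$ and $p_j$ are incomparable, the images $\eta_{p_i}(\eval(A_i))$ and $\eta_{p_j}(\eval(A_j))$ meet only inside $\eta_{p_i}(\Gamma_i) \cap \eta_{p_j}(\Gamma_j)$, so any $b$-to-$b'$ path in $\mathcal{G}(\eval(D))$ must contain a prefix reaching $\eta_{p_i}(\Gamma_i)$ and a suffix starting in $\eta_{p_j}(\Gamma_j)$; by \eqref{dist from contact} each of these has length at least $r+1$, giving total length at least $2r+2$ in one stroke. Your argument instead establishes that the $r$-balls around $b$ and $b'$ consist of nodes whose $D$-representation paths extend $p_i$ and $p_j$, respectively, and then uses a case split together with the prefix-distance locality from Section~\ref{sec node reps} to conclude. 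Both proofs ultimately rest on the same structural fact (an internal node of $\eta_{p_i}(\eval(A_i))$ has no neighbours outside that image), and your appeal to the proof of Lemma~\ref{lemma-exp1} for this step is appropriate. Two minor remarks: your Case~2 actually subsumes Case~1 (take $u=v$ when the balls overlap, with prefix distance $0$), so the split is unnecessary; and in Case~2 the implication ``one of $p_i q_u$, $p_j q_v$ is a prefix of the other $\Rightarrow$ one of $p_i$, $p_j$ is a prefix of the other'' is most cleanly seen by contrapositive: if $p_i,p_j$ diverge after their longest common prefix, so do any extensions.
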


\begin{proof}
We have to show that if neither $p_i$ is a prefix of $p_j$ nor the other way around, then $\dist_{\eval(D)}(b, b') > 2r+1$ for every $b \in \ran(t_{b_{i}, \sigma_{i}})$ and $b' \in \ran(t_{b_{j}, \sigma_{j}})$. To this end, let $b \in \ran(t_{b_{i}, \sigma_{i}})$ and $b' \in \ran(t_{b_{j}, \sigma_{j}})$ be arbitrarily chosen. If $b$ and $b'$ are not connected in $\mathcal{G}(\eval(D))$, then $\dist_{\eval(D)}(b, b') = \infty > 2r+1$. So, let us assume that $b$ and $b'$ are connected. Let $c \in \ran(t_{c_{i}, \sigma_{i}})$ and $c' \in \ran(t_{c_{j}, \sigma_{j}})$  
such that $\eta_{p_i}(c) = b$ and $\eta_{p_j}(c') = b'$.

If neither $p_i$ is a prefix of $p_j$ nor the other way around, then the intersection of the (universes of the) substructures $\eta_{p_i}(\eval(A_i))$ and $\eta_{p_j}(\eval(A_j))$ is $\eta_{p_i}(\Gamma_i) \cap \eta_{p_j}(\Gamma_j)$, which can be non-empty.
Consequently, any path $\Pi$ from $b$ to $b'$ has a shortest prefix $\Pi_1$ that goes from $b = \eta_{p_i}(c)$ to $\eta_{p_i}(\Gamma_i)$ as well as
a shortest suffix $\Pi_2$ that goes from 
 $\eta_{p_j}(\Gamma_j)$ to  $b' = \eta_{p_j}(c')$. The path $\Pi_1$ must be the  $\eta_{p_i}$-image of a path from 
 $c$ to $\Gamma_i$ in $\eval(A_i)$ and the path $\Pi_2$ must be the $\eta_{p_j}$-image of a path from $\Gamma_j$ to $c'$ in $\eval(A_j)$.
From \eqref{dist from contact} it follows that $\Pi_1$ and $\Pi_2$ have both length at least $r+1$. Hence,
every path from $b$ to $b'$ has length at least $2(r+1) = 2r+2$, i.e., $\dist_{\eval(D)}(b, b') > 2r+1$. 
This shows the lemma.
\end{proof}

\begin{lemma} \label{lemma-dist-implication}
Assume that $p_j = p_i q$ for some path $q$.
Let $b \in \ran(t_{b_{i}, \sigma_{i}})$ and $b' \in \ran(t_{b_{j}, \sigma_{j}})$
and take the unique 
 $c \in \ran(t_{c_{i}, \sigma_{i}})$ and $c' \in \ran(t_{c_{j}, \sigma_{j}})$ such that 
$\eta_{p_i}(c) = b$ and $\eta_{p_j}(c') = \eta_{p_i}(\eta_q(c')) =  b'$.
If $\dist_{\eval(D)}(b,b') \leq 2r+1$ then $\dist_{\eval(A_i)}(c,\eta_q(c')) \leq 2r+1$. 
\end{lemma}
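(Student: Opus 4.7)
\emph{Plan.} I would lift a shortest $\eval(D)$-path $\Pi = x_0,\ldots,x_L$ from $b$ to $b'$ of length $L\leq 2r+1$ to a walk of length $L$ in $\eval(A_i)$ from $c$ to $\eta_q(c')$, via the embedding $\eta_{p_i}$. Both $c$ and $\eta_q(c')$ are non-contact nodes of $\eval(A_i)$, since they come from $\bigR$-neighborhoods of $c_i,c_j$ that are valid substructures of $\expan(A_i)$ and $\expan(A_j)$. For $c$, the bound $\dist_{\eval(A_i)}(c,\Gamma_i)\geq r+1$ is available directly from \eqref{dist from contact}.

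My first task would be to establish the companion bound $\dist_{\eval(A_i)}(\eta_q(c'),\Gamma_i)\geq r+1$. This rests on a local ``apex-lifting'' observation: a non-contact node of $A_j$, viewed inside $\eval(A_i)$ through $\eta_q$, has all its $\eval(A_i)$-neighbors inside the image $\eta_q(\eval(A_j))$---the only $\eval(A_i)$-edges that $\eta_q$ does not already account for are edges contributed by structures enclosing the $A_j$-occurrence, and by the apex condition these connect \emph{two} contact-images of $A_j$ and never touch a non-contact one. Therefore any $\eval(A_i)$-path leaving $\eta_q(c')$ must first reach some contact-image $\eta_q(\gamma)$ with $\gamma\in\Gamma_j$; its initial segment lifts edge-by-edge to an $\eval(A_j)$-path from $c'$ to $\gamma$ of the same length, which \eqref{dist from contact} (applied at index $j$) bounds below by $r+1$.

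Replaying exactly the same lifting one level higher in the hierarchy, any $\eval(D)$-path from $b$ (respectively from $b'$) to a contact-image of $A_i$ has length at least $r+1$: its segment up to the first such contact-image visits only non-contact-of-$A_i$ nodes of the image $\eta_{p_i}(\eval(A_i))$, hence lifts to an $\eval(A_i)$-path from $c$ (respectively $\eta_q(c')$) to $\Gamma_i$, whose length we have just bounded. Consequently, if $\Pi$ passed through any contact-image of $A_i$, its length would be at least $(r+1)+(r+1)=2r+2$, contradicting $L\leq 2r+1$. So $\Pi$ stays among non-contact-of-$A_i$ nodes of the image; each of its edges is the $\eta_{p_i}$-image of an $\eval(A_i)$-edge (since the ``extra'' $\eval(D)$-edges only occur between contact-images); and applying $\eta_{p_i}^{-1}$ pointwise to $\Pi$ yields the desired walk of length $L\leq 2r+1$ from $c$ to $\eta_q(c')$.

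The main obstacle I expect is making the apex-lifting observation watertight: one must verify that a non-contact node of a nonterminal genuinely has no edges to the ``outside'' of its enclosing structure and no shortcut edges added by ancestor structures---precisely what the apex condition buys us. Once this locality is in place, the rest of the argument is a clean two-level telescoping of \eqref{dist from contact}.
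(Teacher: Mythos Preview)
Your proposal is correct and follows essentially the same strategy as the paper: show that the short path $\Pi$ cannot touch $\eta_{p_i}(\Gamma_i)$ (otherwise the two segments on either side would each contribute length $\geq r+1$ via \eqref{dist from contact}), and then lift $\Pi$ wholesale to $\eval(A_i)$ using exactly the apex-locality you identify. The only cosmetic difference is that the paper bounds the $b'$-side by taking the shortest suffix of $\Pi$ from $\eta_{p_j}(\Gamma_j)$ and lifting it directly to $\eval(A_j)$, whereas you first derive the intermediate bound $\dist_{\eval(A_i)}(\eta_q(c'),\Gamma_i)\geq r+1$ and then lift both halves symmetrically to $\eval(A_i)$.
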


\begin{proof}
Since $p_j = p_i q$, we know that $\eta_q$ is an embedding of $\eval(A_j)$ into $\eval(A_i)$.

There is a path $\Pi$ between $b$ and $b'$ of length at most $2r+1$ in the graph $\mathcal{G}(\eval(D))$. 
We first show that $\Pi$ does not contain a node from $\eta_{p_i}(\Gamma_i)$.
If this would not be the case then we can take the shortest prefix $\Pi_1$ of $\Pi$ that goes from $b = \eta_{p_i}(c)$ to
$\eta_{p_i}(\Gamma_i)$ and the shortest suffix $\Pi_2$ of $\Pi$ that goes from $\eta_{p_j}(\Gamma_j) = \eta_{p_i}(\eta_q(\Gamma_j))$ to 
$b' = \eta_{p_j}(c') = \eta_{p_i}(\eta_q(c'))$. 
The path $\Pi_1$ is contained in $\eta_{p_i}(\eval(A_i))$ and visits a node from $\eta_{p_i}(\Gamma_i)$ only at the very end.
Similarly, the path $\Pi_2$ is contained in $\eta_{p_j}(\eval(A_j))$ and visits a node from  $\eta_{p_j}(\Gamma_j)$
only in the beginning.
By \eqref{dist from contact} we have 
 $\dist_{\eval(A_i)}(c, \Gamma_i) \geq r+1$ and $\dist_{\eval(A_j)}(c', \Gamma_j) \geq r+1$, which implies
 that $\Pi_1$ and $\Pi_2$ have both lenght at least $r+1$. 
Hence, $\Pi$  has length at least $2r+2$, which is a contradiction.

 Therefore, $\Pi$ does not visit a node from $\eta_{p_i}(\Gamma_i)$, which implies that 
$\Pi$ is a path in $\eta_{p_i}(\eval(A_i))$ and does not contain edges between nodes from 
$\eta_{p_i}(\Gamma_i)$. Hence, $\Pi$ arises from a path between $c$ and $\eta_q(c')$ in $\eval(A_i)$,
which finally yields $\dist_{\eval(A_i)}(c, \eta_{q}(c')) \leq 2r+1$. 
\end{proof}

\begin{lemma}\label{SuffCondLemmaTwo}
If $p_j = p_i q$ or $p_i = p_j q$, 
and the path $q$ has length at least $3\bigR - r +1$, then $\dist_{\eval(D)}(t_{b_{i}, \sigma_{i}}, t_{b_{j}, \sigma_{j}}) > 2r+1$.
\end{lemma}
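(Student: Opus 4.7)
The plan is to treat the case $p_j = p_i q$ (the other being symmetric) and argue the contrapositive, directly lower-bounding the distance $\dist_{\eval(D)}(b, b')$ for arbitrary $b \in \ran(t_{b_i, \sigma_i})$ and $b' \in \ran(t_{b_j, \sigma_j})$ via the apex inequality. First I would unpack the $D$-representations: write $b = \eta_{p_i}(c)$ and $b' = \eta_{p_j}(c') = \eta_{p_i}(\eta_q(c'))$ with $c \in \ran(t_{c_i, \sigma_i}) \subseteq \eval(A_i)$ and $c' \in \ran(t_{c_j, \sigma_j}) \subseteq \eval(A_j)$. Since $\neighborhoodStr{\expan(A_i)}{\bigR}{c_i}$ and $\neighborhoodStr{\expan(A_j)}{\bigR}{c_j}$ are valid substructures, they avoid $\mathsf{Bd}$, so neither $c$ nor $c'$ is a contact node of its respective $\eval(A_\star)$. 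Writing the $A_\star$-representations $c = (q_c, v_c)$ and $c' = (q_{c'}, v_{c'})$, this puts us in the ``otherwise'' branch of $\eta$ and gives $D$-representations $b = (p_i q_c, v_c)$ and $b' = (p_i q q_{c'}, v_{c'})$.

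The apex property from Section~\ref{sec node reps} then gives
\[
\dist_{\eval(D)}(b,b') \;\geq\; d_{\mathrm{pref}}(p_i q_c,\; p_i q q_{c'}) \;=\; d_{\mathrm{pref}}(q_c,\; q q_{c'}),
\]
where the common initial factor $p_i$ cancels. Any common prefix of $q_c$ and $qq_{c'}$ has length at most $|q_c|$, so this prefix distance is in turn at least $|qq_{c'}| - |q_c| \geq |q| - |q_c|$.

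The heart of the argument is the depth bound $|q_c| \leq 2\bigR$. For this I would invoke the validity of $\neighborhoodStr{\expan(A_i)}{\bigR}{c_i}$ twice. First, this neighborhood must meet $\mathsf{In}_{A_i}$, so $\dist_{\eval(A_i)}(c_i, \mathsf{In}_{A_i}) \leq \bigR$; combined with apex inside $\eval(A_i)$ applied against any internal node (whose $A_i$-representation has the trivial path $A_i$), this yields $|q_i| \leq \bigR$. Second, since $c$ is the image under $\pi_{c_i}$ of some sphere center of $\mathcal{B}_i$, and all nodes of the $\bigR$-neighborhood type $\mathcal{B}_i$ are within distance $\bigR$ of its center node $1$, we have $\dist_{\eval(A_i)}(c_i, c) \leq \bigR$; apex once more gives $d_{\mathrm{pref}}(q_i, q_c) \leq \bigR$, whence $|q_c| \leq |q_i| + \bigR \leq 2\bigR$.

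Substituting $|q| \geq 3\bigR - r + 1$ and $|q_c| \leq 2\bigR$ into the display above yields $d_{\mathrm{pref}}(q_c, qq_{c'}) \geq \bigR - r + 1$. The lemma is only invoked for distinct factorization components $i \neq j$, which forces $m \geq 2$ and hence $k \geq 2$; for $k \geq 2$ and $r \geq 1$, $\bigR = 2rk - r + k - 1 \geq 3r + 1$, so $\bigR - r + 1 \geq 2r + 2 > 2r + 1$. This establishes the bound for the arbitrarily chosen $b, b'$ and hence for the pair of tuples. The main obstacle, as I see it, is precisely the depth bound $|q_c| \leq 2\bigR$: it is what calibrates the threshold $3\bigR - r + 1$ in the statement and it requires carefully translating the validity of the $\bigR$-neighborhood around $c_i$ into combinatorial information about the underlying $A_i$-representation.
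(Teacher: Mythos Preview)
Your proof is correct and takes a more direct route than the paper's. The paper argues by contradiction: it first invokes Lemma~\ref{lemma-dist-implication} to transfer the assumed bound $\dist_{\eval(D)}(b,b')\le 2r+1$ down into $\eval(A_i)$, then runs a connectivity argument on the combined partial tuple $t_{c_i,\sigma_i}\sqcup(\eta_q\circ t_{c_j,\sigma_j})$ to obtain $\dist_{\eval(A_i)}(c_i,\eta_q(c_j))\le(2r+1)(k-1)=\bigR-r$, and finally chains this with the two validity bounds $\dist(\mathsf{In}_{A_i},c_i)\le\bigR$ and $\dist(\mathsf{In}_{A_j},c_j)\le\bigR$ to get $\dist_{\eval(A_i)}(\mathsf{In}_{A_i},\eta_q(\mathsf{In}_{A_j}))\le 3\bigR-r$, contradicting apex. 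You instead stay in $\eval(D)$ throughout, apply the apex prefix-distance inequality once to the explicit $D$-representations $(p_iq_c,v_c)$ and $(p_iqq_{c'},v_{c'})$, and reduce everything to the depth bound $|q_c|\le 2\bigR$ plus arithmetic. This is shorter and bypasses both the auxiliary lemma and the connectivity step; the price is the appeal to $k\ge 2$, which is harmless since for $m=1$ the hypothesis of the lemma is vacuous anyway. The paper's argument, by contrast, is uniform in $k$ and makes transparent why the specific threshold $3\bigR-r$ arises as the sum $\bigR+(\bigR-r)+\bigR$ of three distance contributions.
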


\begin{proof}
Assume that $p_j = p_i q$ for some path $q$ of length at least $3\bigR - r +1$. In addition, assume that
$\dist_{\eval(D)}(t_{b_{i}, \sigma_{i}}, t_{b_{j}, \sigma_{j}}) \leq 2r+1$, which means that $\dist_{\eval(D)}(b, b') \leq 2r+1$ for some $b \in \ran(t_{b_{i}, \sigma_{i}})$ and $b' \in \ran(t_{b_{j}, \sigma_{j}})$. We will deduce a contradiction. 
Let $c \in \ran(t_{c_{i}, \sigma_{i}})$ and $c' \in \ran(t_{c_{j}, \sigma_{j}})$ such that $\eta_{p_i}(c) = b$ and $\eta_{p_j}(c') = b'$. Since $p_j = p_i q$, we know that $\eta_q$ is an embedding of $\eval(A_j)$ into $\eval(A_i)$.

By Lemma~\ref{lemma-dist-implication} we have $\dist_{\eval(A_i)}(c, \eta_{q}(c')) \leq 2r+1$. 
Since the neighborhoods  $\neighborhoodStr{\expan(A_i)}{r}{t_{c_i, \sigma_i}}$ and $\neighborhoodStr{\expan(A_j)}{r}{t_{c_j, \sigma_j}}$ are connected,
it follows that $\neighborhoodStr{\expan(A_i)}{r}{t_{c_i, \sigma_i} \sqcup  \eta_q \circ t_{c_j, \sigma_j}} \subseteq \eval(A_i)$ is connected.
Since $t_{c_i, \sigma_i} \sqcup  (\eta_q \circ t_{c_j, \sigma_j})$ is a partial $k$-tuple,
we obtain
\[ 
\dist_{\eval(A_i)}(c_i, \eta_{q}(c_j)) \leq (2r+1) (k-1) = (2rk - r + k - 1) - r = \bigR - r.
\]
Since $\neighborhoodStr{\expan(A_i)}{\bigR}{c_i} \simeq \mathcal{B}_i$ and $\neighborhoodStr{\expan(A_j)}{\bigR}{c_j} \simeq \mathcal{B}_j$
 are valid substructures of $\expan(A_i)$ and $\expan(A_j)$, respectively, we know that $\dist_{\eval(A_i)}(\mathsf{In}_{A_i}, c_i) \leq \bigR$ and $\dist_{\eval(A_j)}(\mathsf{In}_{A_j}, c_j) \leq \bigR$. 
 Consequently, $\dist_{\eval(A_i)}(\mathsf{In}_{A_i}, \eta_{q}(\mathsf{In}_{A_j})) \leq 2\bigR + \bigR - r = 3\bigR - r$.
This is a contradiction, since the fact that $q$ has length at least $3\bigR - r +1$ means that $\dist_{\eval(A_i)}(\mathsf{In}_{A_i}, \eta_{q}(\mathsf{In}_{A_j})) > 3\bigR - r$ due to the apex
condition for $D$.

We conclude that $\dist_{\eval(D)}(b, b') > 2r+1$ for every $b \in \ran(t_{b_{i}, \sigma_{i}})$ and $b' \in \ran(t_{b_{j}, \sigma_{j}})$, which means that $\dist_{\eval(D)}(t_{b_{i}, \sigma_{i}}, t_{b_{j}, \sigma_{j}}) > 2r+1$.
\end{proof}
By Lemmas~\ref{SuffCondLemmaOne} and \ref{SuffCondLemmaTwo} it remains to show the following:
\begin{enumerate}[(i)]
\item \label{point-i}  We can check in time $f(d,|\phi|)$ whether $p_j = p_i q$ or $p_i = p_j q$ for some
path $q$ of length at most $3 \bigR - r$.
\item \label{point-ii} Assuming (i) holds we can check in time $f(d,|\phi|)$ whether $\dist_{\eval(D)}(t_{b_{i}, \sigma_{i}}, t_{b_{j}, \sigma_{j}}) > 2r+1$. 
\end{enumerate}
Let us start with \eqref{point-ii} and assume that $p_j = p_i q$ (the case $p_i = p_j q$ can be handled analogously)
with $q$ of length at most $3 \bigR - r$. 
In order to check whether $\dist_{\eval(D)}(t_{b_{i}, \sigma_{i}}, t_{b_{j}, \sigma_{j}}) > 2r+1$, we have to check whether $\dist_{\eval(D)}(b, b') > 2r+1$ for all $b \in \ran(t_{b_{i}, \sigma_{i}})$ and $b' \in \ran(t_{b_{j}, \sigma_{j}})$. We explain how this can be done for fixed $b \in \ran(t_{b_{i}, \sigma_{i}})$ and $b' \in \ran(t_{b_{j}, \sigma_{j}})$. 
As before, let $c \in \ran(t_{c_{i}, \sigma_{i}})$ and $c' \in \ran(t_{c_{j}, \sigma_{j}})$ such that 
$\eta_{p_i}(c) = b$ and $\eta_{p_j}(c') = \eta_{p_i}(\eta_q(c')) =  b'$.
By Lemma~\ref{lemma-dist-implication},
$\dist_{\eval(D)}(b,b') \leq 2r+1$ implies $\dist_{\eval(A_i)}(c,\eta_q(c')) \leq 2r+1$. Since the other direction is trivial,
it suffices for \eqref{point-ii} to check $\dist_{\eval(A_i)}(c,\eta_q(c')) > 2r+1$.
  Note that the $A_i$-representation of $\eta_{q}(c')$ can be computed in time $\bigo(\bigR)$: $c'$ is given in its
$A_j$-representation $(q', v')$, hence, the  $A_i$-representation of $\eta_{q}(c')$ is $(qq',v')$, where $q$ has length at most $3 \bigR - r$.

In order to check $\dist_{\eval(A_i)}(c, \eta_{q}(c')) > 2r+1$, we construct $\neighborhoodStr{\eval(A_i)}{2r+1}{c}$ by starting a BFS in $c$ and then computing all elements of $\eval(A_i)$ with distance at most $2r + 1$ from $c$. This is analogous to the construction of the expansions in Section~\ref{sec compute expansion}. Then, we check whether $\eta_{q}(c') \in \neighborhoodStr{\eval(A_i)}{2r+1}{c}$. Since both $c$ and $\eta_{q}(c')$ are given by $A_i$-representations of size  $f(d,|\phi|)$, this can be done in time $f(d,|\phi|)$.

For \eqref{point-i} we show that one can check in time $f(d, |\phi|)$ whether $p_i = p_j q$ for a path $q$ of length at most $3 \bigR - r$. The general idea is to check whether $p_i = p_j$ and, if this is not the case, to remove repeatedly the last edge of $p_i$ (for at most $3 \bigR - r$ times) and check whether the resulting path equals $p_j$. Recall that the path $p_i$ is stored as the min-max-contracted representation $\tilde{p}_i$ (in $\DAG_i$) of the corresponding path $p_i 0A'_i$ in $\DAG(D)'$, where in addition all 
maximal non-branching paths 
in $p_i 0A'_i$ haven been contracted, and analogously for $p_j$ (see Section~\ref{sec:pathEnumPreprocessing}).
Checking $p_i = p_j$ can be done by simply checking whether $\lex(p_i) = \edgeWeight(\tilde{p}_i) = \edgeWeight(\tilde{p}_j) = \lex(p_j)$; see \eqref{eq-lex-weight}. Hence, it remains to show how we can obtain from $\tilde{p}_i$ a min-max-contracted representation $\textsf{shorten}(\tilde{p}_i)$, which
represents the path obtained from $p_i$ by removing the last edge (using this, one can also remove in a first step  the terminal edge $(A_i, 0, A'_i)$ that was added to $\DAG(D)$ in the construction of $\DAG(D)'$).
Here we have to consider two aspects:
\begin{itemize}
\item Maximal non-branching subpaths have been contracted into single edges in the construction of $\DAG_i$ from $\DAG(D)'$.
\item Maximal subpaths consisting of $\min$-edges ($\max$-edges, respectively) have been contracted.
\end{itemize}
To address both aspects, we compute $\textsf{shorten}(\tilde{p}_i)$ in two steps. Let $\tilde{p}_i = \tilde{q} (B, \ell, A)$ 
(we assume that we have already done some shortening steps, so that the $A$ might be no longer $A'_i$).

\medskip
\noindent
\emph{Step 1:}
 If $(u, \ell, v)$ is neither a $\min$- nor a $\max$-triple, then $(B, \ell, A)$ is an edge in $\DAG_i$ and we pass
$\tilde{p}' := \tilde{q}$ together with the removed edge $(B, \ell, A)$ to Step~2 below.
If $\ell = \min$ and  $(B, 1, A)$ is an edge of $\DAG_i$ then we again pass
$\tilde{p}' := \tilde{q}$ and the edge $(B,1,A)$ to Step~2. If $(B, 1, A)$ is not an edge of $\DAG_i$, then 
$(\parent_{\min}(B, A), 1, A)$ is  an edge in $\DAG_i$ and we pass $\tilde{p}' := \tilde{q} (B, \min, \parent_{\min}(B, A))$ and the edge $(\parent_{\min}(B, A), 1, A)$ to Step 2. Finally, the case $\ell = \max$ can be handled  analogously. 
 Just like described in the enumeration procedure of Section~\ref{sec:MainPathEnumSection}, we can also compute the $\edgeWeight$-values and total weight for $\tilde{p}'$. 

\medskip
\noindent
\emph{Step 2:} From Step~1 we obtain a min-max-contracted path representation $\tilde{p}'$ together with an edge $(B', \ell, A')$ from
$\DAG_i$. In general we removed from $\tilde{p}$ not only a single edge in $\DAG(D)'$ but a maximal non-branching subpath
that is represented by the $\DAG_i$-edge $(B', \ell, A')$.
Therefore, we have to add to $\tilde{p}'$ the prefix of this maximal non-branching subpath without the last edge. 
To do this,  we store in the preprocessing for every edge in $\DAG_i$ the information whether it is a single edge of $\DAG(D)'$ or an edge that represents a contracted maximal non-branching path of length at least 2. 
If $(B', \ell, A')$ is a single edge from $\DAG_i$, then we are done and set $\textsf{shorten}(\tilde{p}_i) = \tilde{p}'$.
If $(B', \ell, A')$ is an edge of $\DAG_i$ that represents a maximal non-branching path from $B'$ to $A'$ of length at least $2$ in $\DAG(D)'$, then we set $\textsf{shorten}(\tilde{p}_i) = \tilde{p}'(B', \ell, A'')$, where $A''$ is the parent node of $A'$ on the maximal non-branching path from $B'$ to $A'$ (and we also store the information whether $(B', \ell, A'')$ represents still a maximal non-branching path of length at least $2$).

This correctly constructs the min-max-contracted representation $\textsf{shorten}(\tilde{p}_i)$ for the path $p_i$ with the last edge removed. However, it remains to explain how the construction from Step 2 can be implemented in constant time. This can again be achieved by a next link data structure; see Theorem~\ref{thm next link}. In the preprocessing, we compute a forest of all the reversed maximal non-branching paths of $\DAG(D)'$ and then we compute a next link data structures for all the trees in this forest. 
So, in some sense we use a two-level next-link data structure: the upper level handles contracted sequences of min-edges and max-edges, respectively, whereas the lower level handles contracted sequences of maximal non-branching edges.

\begin{remark}
Note that the above shortening of $p_i$ destroys the representation of the path $p_i$ (and similarly for $p_j$).
This is a problem, since later in the enumeration phase $p_i$ might be needed again for comparison with other paths (and the number of these comparisons depends on the structure $\eval(D)$).
Producing a copy of $p_i$ before we start shortening $p_i$ is not an option since the min-max-contracted representation of $p_i$ does not fit into a constant
number of registers. Therefore making a copy of $p_i$ would flaw the constant delay requirement.
Fortunately, there is a simple solution. Whenever we remove a terminal edge from the min-max-contracted representation
of the path $p_i$ we store this edge. In total we have to store only $3 \rho - r$ many
edges. Then, when we want to restore $p_i$ we add the removed edges at the end, which can be easily done with the min-max-contracted
representations (it is similar to the shortening procedure).
\end{remark}

\begin{remark} \label{remark disjoint spheres}
The algorithm from this section also yields the last piece for checking in the preprocessing phase, whether a sentence of 
the form \eqref{eq-local-sentence} holds (see the last paragraph in Appendix~\ref{sec:GaifmanReductionCompressed}).
Recall that \eqref{eq-local-sentence} expresses that there exists at least $q = f(|\phi|)$ many nodes $a_1, \ldots, a_q$ with the following properties: $\dist_{\eval(D)}(a_i, a_j) > 2r$ and the $r$-neighborhood type
of each $a_i$ is from a fixed set of $r$-neighborhood types.
\end{remark}

\subsubsection{Final Remarks}
\label{sec:final-remarks}
For a fixed consistent factorization $\Lambda = (\mathcal{B}_1, \sigma_1, \ldots, \mathcal{B}_m, \sigma_m)$ of a $(k,r)$-neighborhood type $\mathcal{B}$,
the algorithm keeps a min-max-contracted initial-to-leaf path in $\DAG_i$ for every $i \in [m]$. We have seen in Section~\ref{sec:MainPathEnumSection} how all initial-to-leaf paths of $\DAG_i$ can be enumerated by exploiting the min-max-contracted representation. In Appendix~\ref{sec:disjointnessCheck}, we discussed how the stored min-max-contracted initial-to-leaf paths can be used in order to check the disjointedness of the $r$-neighborhoods of the produced tuples. Consequently, we can perform Algorithm~\ref{mainEnumAlgoNonCompressed} also in the compressed setting.

Recall from Theorem~\ref{prelimMainResultTheorem-2} that nodes of $\eval(D)$ are output in lex-presentation. 
Internally, the algorithm represents a node $b \in \eval(D)$ by a triple $(\lex(p),q,v)$, where the initial path $p$
is given by a min-max-contracted representation, and $(q,v)$ is the $A$-representation ($A$ is the nonterminal, where $p$ ends)
of a node from the expansion $\expan(A)$. Hence, $q$ is a path in $\DAG(D)$ of length at most $2 \bigR+1$ (due to the apex condition)
We can therefore compute from the precomputed edge weights also $\edgeWeight(q)$ with $2 \bigR$ binary additions.
Since $(pq,v)$ is the $D$-representation of the node $b$, its lex-representation is $(\lex(pq), v) = (\lex(p)+\lex_A(q), v) = 
(\edgeWeight(p)+\edgeWeight(q)+1, v)$
which can be computed in time $\bigo(\bigR)$ from the internal representation $(\lex(p),q,v)$ of $b$.

\end{document}